\renewcommand{\section}{\@startsection{section}{1}{0pt}{-12pt}{5pt}{\large\bf}}
\renewcommand{\subsection}{\@startsection{subsection}{2}{0pt}{-12pt}{5pt}{\normalsize\bf}}
\renewcommand{\subsubsection}{\@startsection{subsubsection}{3}{0pt}{-12pt}{5pt}{\normalsize\bf}}
\newtheorem{theorem}{Theorem}[section]
\theoremstyle{lemma}
\newtheorem{lemma}[theorem]{Lemma}
\theoremstyle{Remark}
\newtheorem{Remark}[theorem]{Remark}
\theoremstyle{fact}
\theoremstyle{definition}
\newtheorem{definition}[theorem]{Definition}
\theoremstyle{proposition}
\newtheorem{proposition}[theorem]{Proposition}
\theoremstyle{claim}
\newtheorem{claim}[theorem]{Claim}
\theoremstyle{conjecture}
\theoremstyle{corollary}
\newcommand{\eps}{\epsilon}    
\newcommand{\hd}{\Delta x}     
\newcommand{\rd}{\mathcal{RD}} 
\newcommand{\littlesum}{\mathop{\textstyle \sum}}
\newcommand{\littleprod}{\mathop{\textstyle \prod}}
\newcommand{\opt}{\mathrm{OPT}}   
\newcommand{\chd}{\mathrm{CHORD}} 
\newcommand{\comb}{\mathrm{Comb}} 
\newcommand{\I}{\mathcal{I}} 
\newcommand{\cp}{\mathrm{CP}} 
\newcommand{\p}{\mathrm{P}} 
\newcommand{\env}{\mathrm{LE}} 
\newcommand{\ch}{\mathcal{CH}} 
\newcommand{\poly}{\mathrm{poly}} 
\newcommand{\D}{\mathcal{D}} 
\newcommand{\E}{\mathbb{E}} 
\newcommand{\R}{\mathbb{R}} 
\newcommand{\Z}{\mathbb{Z}} 
\newcommand{\eqdef}{\stackrel{\textrm{def}}{=}}
\newcommand{\ignore}[1]{}
\def\squareforqed{\hbox{\(\blacksquare\)}}
\def\qed{\ifmmode\squareforqed\else{\unskip\nobreak\hfill\penalty50\hskip1em\null\nobreak\hfil\squareforqed\parfillskip=0pt\finalhyphendemerits=0\endgraf}\fi}
\long\def\symbolfootnote[#1]#2{\begingroup%
\def\thefootnote{\fnsymbol{footnote}}\footnote[#1]{#2}\endgroup}
\newcommand{\commented}{yes}
\newcommand{\mnote}[1]{\footnote{{\bf [[Mihalis: {#1}\bf ]]  }}}
\newcommand{\inote}[1]{\footnote{{\bf [[Ilias: {#1}\bf ]]  }}}
\newcommand{\cnote}[1]{\footnote{{\bf [[Costas: {#1}\bf ]]  }}}
\newcommand{\inote}[1]{}
\newcommand{\mnote}[1]{}
\newcommand{\cnote}[1]{}
\begin{document}

\title{How good is the Chord algorithm?\footnote{An extended abstract of this work appeared in the \emph{Proceedings of the Twenty-First Annual ACM-SIAM Symposium on Discrete Algorithms (SODA 2010)}, pp. 978-991.}}


\author{    Constantinos Daskalakis\thanks{CSAIL, MIT. Email: {\tt costis@csail.mit.edu}.
            Work done while the author was a postdoctoral researcher at Microsoft Research New England.} \and
            Ilias Diakonikolas\thanks{School of Informatics, University of Edinburgh. 
            Email: {\tt ilias.d@ed.ac.uk}. Research partially supported by a Simons Postdoctoral Fellowship at UC Berkeley. 
            Most of this work was done while the author was at Columbia University, 
            supported by NSF grants CCF-04-30946, CCF-07-28736, and by an Alexander S. Onassis Foundation Fellowship.}\\
            \and
            Mihalis Yannakakis\thanks{Department of Computer Science, Columbia University.
            Email: {\tt mihalis@cs.columbia.edu}.
            Research supported by NSF grants CCF-07-28736, CCF-10-17955.}\\
}



\maketitle



\begin{abstract}
The Chord algorithm is a popular, simple  method for the succinct approximation of curves,
which is widely used, under different names, in a variety of areas, such as,
multiobjective and parametric optimization,  computational geometry, and graphics.
We analyze the performance of the Chord algorithm, as compared to the
optimal approximation that achieves a desired accuracy with the minimum number of points.
We prove sharp upper and lower bounds, both in the worst case and average case setting.
\end{abstract}



\section{Introduction} \label{sec:intro}

Consider a typical design problem with  more than one objectives
(design  criteria). For example, we may want to design a network that
provides the maximum capacity with the minimum cost,
or we may want to design a radiation therapy for a patient
that maximizes the dose to the tumor and minimizes the dose
to the healthy organs. In such {\em multiobjective} (or {\em multicriteria})
problems there is typically no solution that optimizes simultaneously all the
objectives, but rather a set of so-called {\em Pareto optimal}
solutions, i.e., solutions that are not dominated by any other solution in all the objectives.
The trade-off between the different objectives is
captured by the {\em trade-off} or {\em Pareto curve} (surface for three or more objectives),
the set of values of the objective functions for all the Pareto optimal solutions.

Multiobjective problems are prevalent in many fields, e.g., engineering, economics, management, healthcare, etc. There is
extensive research in this area published in different fields; see \cite{Ehr,EG,FGE,Miet} for some books and surveys. In a
multiobjective problem, we would ideally like to compute the Pareto curve and present it to the decision maker to select the
solution that strikes the `right' balance between the objectives according to his/her preferences (and different users may
prefer different points on the curve). The problem is that the Pareto curve has typically an enormous number of points, or even
an infinite number for continuous problems (with no closed form description), and thus we cannot compute all of them. We can
only compute a limited number of solutions (points), and of course we want the computed points to provide a good approximation
to the Pareto curve so that the decision maker can get a good sense of the range of possibilities in the design space.

We measure the quality of the approximation provided by a
computed solution set using a (multiplicative) approximation ratio,
as in the case of approximation algorithms for single-objective problems.
Assume (as usual in approximation algorithms) that the objective functions
take positive values. A set of  solutions $S$ is an {\em $\epsilon$-Pareto set}
if the members of $S$ approximately dominate within $1+\epsilon$ every other solution, i.e.,
for every solution $s$ there is a solution $s' \in S$ such that
$s'$ is within a factor $1+\epsilon$ or better of $s$ in all the objectives \cite{PY1}.
Often, after computing a finite set $S$ of solutions and their corresponding points in objective space (i.e., their vectors of
objective values), we `connect the dots', taking in effect also the convex combinations of the solution points. In problems
where the solution points form a convex set (examples include multiobjective flows, Linear Programming, Convex Programming),
this convexification is justified and provides a much better approximation of the Pareto curve than the original set $S$ of
individual points. A set $S$ of solutions is called an  {\em $\epsilon$-convex Pareto set} if the convex hull of the solution
points corresponding to $S$ approximately dominates within $1+\epsilon$ all the solution points \cite{DY2}. Even for
applications with nonconvex solution sets, sometimes solutions that are dominated by convex combinations of other solutions are
considered inferior, and one is interested only in solutions that are not thus dominated, i.e., in solutions whose objective
values are on the (undominated) boundary of the convex hull of all solution points, the so-called {\em convex Pareto set}. Note
that every instance of a multiobjective problem has a unique Pareto set and a unique convex Pareto set, but in general it has
many different $\epsilon$-Pareto sets and $\epsilon$-convex Pareto sets, and furthermore these can have drastically different
sizes. It is known that for every multiobjective problem with a fixed number $d$ of polynomially computable objective
functions, there exist an $\epsilon$-Pareto set (and also $\epsilon$-convex Pareto set) of polynomial size, in particular of
size $O(({m \over \epsilon})^{d-1})$, where $m$ is the bit complexity of the objective functions (i.e., the functions take
values in the range $[2^{-m},2^m]$) \cite{PY1}. Whether such approximate sets can be constructed in polynomial time is another
matter: necessary and sufficient conditions for polynomial time constructibility of  $\epsilon$-Pareto and $\epsilon$-convex
Pareto sets are given respectively in \cite{PY1,DY2}.

The most common approach to the generation of Pareto points (called weighted-sum method) is to give nonnegative weights $w_i$ to the
different objective functions $f_i$ (assume for simplicity that they are all minimization objectives) and then optimize the
linear combining function $\sum_i w_i f_i$; this approach assumes availability of a subroutine $\mathrm{Comb}$ that optimizes such
linear combinations of the objectives. For any set of nonnegative weights, the optimal solution is clearly in the Pareto set,
actually in the convex Pareto set. In fact, the convex Pareto set is precisely the set of optimal solutions for \emph{all}
possible such weighted linear combinations of the objectives. Of course, we cannot try all possible weights; we must select
carefully a finite set of weights, so that the resulting set of solutions provides a good approximation, i.e., is an
$\epsilon$-convex Pareto set for a desired small $\epsilon$. It is shown in \cite{DY2} that a necessary and sufficient
condition for the polynomial time constructibility of an $\epsilon$-convex Pareto set is the availability of a polynomial time
$\mathrm{Comb}$ routine for the approximate optimization of nonnegative linear combinations.

In a typical multiobjective problem,
the $\mathrm{Comb}$ routine is a nontrivial piece of software, each call takes a substantial
amount of time, thus we want to make the best use of the calls
to achieve as good a representation of the solution space as possible.
Ideally, we would like to achieve the smallest possible approximation error $\epsilon$
with the fewest number of calls to the $\mathrm{Comb}$ routine.
That is, given $\epsilon>0$, compute an $\epsilon$-convex Pareto set
for the instance at hand using \emph{as few Comb calls
as possible}. (In \cite{DY2} we study a different cost metric;
we explain the difference at the end of this section, and we note
that both metrics are relevant for different aspects of decision making
in multiobjective problems.)

We measure the performance of an algorithm by the ratio of its cost, i.e., the number of calls that it makes, to the minimum
possible number required for the instance at hand (as is usual in approximation algorithms). Let $\mathrm{OPT}_{\epsilon}(I)$
be the number of points in the smallest $\epsilon$-convex Pareto set for an instance $I$. Clearly, every algorithm that
computes an $\epsilon$-convex Pareto set needs to make at least $\mathrm{OPT}_{\epsilon}(I)$ calls. The {\em performance
(competitive) ratio} of an algorithm $\mathcal{A}$ that computes an $\epsilon$-convex Pareto set using $\mathcal{A}(I)$ calls
for each instance $I$ is $\sup_I \frac{{\mathcal{A}(I)}}{\mathrm{OPT}_{\epsilon}(I)}$. An important point that should be
stressed here is that, as in the case of online algorithms, the algorithm does not have complete information about the input,
i.e., the (convex) Pareto curve is {\em not} given explicitly, but can only be accessed indirectly through calls to the $\mathrm{Comb}$
routine; in fact, the whole purpose of the algorithm is to obtain an approximate knowledge of the curve.

In this paper we investigate the performance of the {\em Chord} algorithm, a simple, natural greedy algorithm for the
approximate construction of the Pareto set. The algorithm and variants of it have been used often, under various names, for
multiobjective problems \cite{AN,CCS,CHSB,FBR,RF,Ro,YG} as well as several other types of applications involving the
approximation of curves, which we will describe later on. We focus on the bi-objective case; although the algorithm can be
defined (and has been used) for more objectives, most of the literature concerns the bi-objective case, which is already rich
enough, and covers also most of the common uses of the algorithm.

We now briefly describe the algorithm. Let $f_1$, $f_2$ be the two objectives (say minimization objectives for concreteness),
and let $P$ be the (unknown) convex Pareto curve. First optimize $f_1$, and $f_2$ separately (i.e., call $\mathrm{Comb}$ for the weight
tuples $(1,0)$ and $(0,1)$) to compute the leftmost and rightmost points $a,b$ of the curve $P$. The segment $(a,b)$ is a first
approximation to $P$; its quality is determined by a point $q \in P$ that is least well covered by the segment. It is easy to
see that this worst point $q$ is a point of the Pareto curve $P$ that minimizes the linear combination $f_2 + \lambda_{ab} f_1$,
where $\lambda_{ab}$ is the absolute value of the slope of $(a,b)$, i.e., it is a point of $P$ with a supporting line parallel to the
`chord' $(a,b)$. Compute such a worst point $q$; if the error is at most $\epsilon$, then terminate, otherwise add $q$ to the set
$S$ to form an approximate set $\{a,q,b\}$ and recurse on the two intervals $(a,q)$ and $(q,b)$. In Section~\ref{sec:prelim} we
give a more detailed formal description (for example, in some cases one can determine from previous information that the maximum
possible error in an interval is upper bounded by $\epsilon$ and do not need to call $\mathrm{Comb}$).

The algorithm is quite natural, it has been often reinvented and is commonly used
for a number of other purposes.
As pointed out in \cite{Ro}, an early application was by Archimedes who used it to approximate
a parabola for area estimation \cite{Ar}.
In the area of {\em parametric optimization}, the algorithm is known as
the ``Eisner-Severance" method after \cite{ES}.
Note that parametric optimization is closely related to bi-objective optimization.
For example, in the parametric shortest path problem, each edge $e$ has
cost $c_e + \lambda d_e$ that depends on a parameter $\lambda$.
The length of the shortest path is a piecewise linear function of
$\lambda$ whose pieces correspond to the vertices of the convex Pareto
curve for the bi-objective shortest path problem with cost vectors $c,d$ on the
edges. A call to the $\mathrm{Comb}$ routine for the bi-objective problem corresponds
to solving the parametric problem for a particular value of the parameter.

The Chord algorithm is also useful for the approximation of convex functions, and for the approximation and smoothening of
convex and even non-convex curves. In the case of functions, an appropriate measure of distance between the function $f$ and the
approximation is the vertical distance, while for curves a natural measure is the Hausdorff distance; note that for a given
curve and approximating segment $ab$, the same point $q$ of the curve with supporting line parallel to $ab$ maximizes also the
above distances. In the context of smoothening and compressing curves and polygonal lines for graphics and related
applications, the Chord algorithm is known as the Ramer-Douglas-Peucker algorithm, after~\cite{Ra,DP} who independently
proposed it.

Previous work has analyzed the Chord algorithm (and variants) for achieving an $\epsilon$-approximation of a function or curve
with respect to vertical and Hausdorff distance, and proved bounds on the cost of the algorithm as a function of  $\epsilon$:
for all convex curves of length $L$, (under some technical conditions on the derivatives) the algorithm uses at most $O
(\sqrt{L/\epsilon})$ calls to construct an $\epsilon$-approximation ,
and there are curves (for example, a portion of a circle) that require
$\Omega (\sqrt{L / \epsilon})$ calls~\cite{Ro, YG}.

Note however that these results do not tell us what the performance ratio is, because for many instances, the optimal cost
$\mathrm{OPT}_{\epsilon}$ may be much smaller than $\sqrt{L/\epsilon}$, perhaps even a constant. For example, if $P$ is a
convex polygonal line with  few vertices, then the Chord algorithm will perform very well for $\eps=0$; in fact, as shown by
\cite{ES} in the context of parametric optimization, if there are $N$ breakpoints, then the algorithm will compute the exact
curve after $2N-1$ calls. (The problem is of course that in most bi-objective and parametric problems, the number $N$ of
vertices is huge, or even infinite for continuous problems, and thus we have to approximate.)

In this paper we provide sharp upper and lower bounds on the performance (competitive) ratio of the Chord algorithm, both in
the worst case and in the average case setting. Consider a bi-objective problem where the objective functions take values in
$[2^{-m},2^m]$. We prove that the worst-case performance ratio of the Chord algorithm for computing an $\epsilon$-convex Pareto
set is $\Theta (\frac{m + \log(1/\epsilon)}{\log m + \log \log (1/\epsilon)})$. The upper bound implies in particular that for
problems with polynomially computable objective functions and a polynomial-time (exact or approximate) Comb routine, the Chord
algorithm runs in polynomial time in the input size and $1/\epsilon$.
We show furthermore that there is no algorithm with constant performance ratio.
In particular, every algorithm (even randomized) has performance ratio at least
$\Omega (\log m + \log \log (1/\epsilon))$.

Similar results hold for the approximation of convex curves with respect to the Hausdorff distance. That is, the performance
ratio of the Chord algorithm for approximating a convex curve of length $L$ within Hausdorff distance $\epsilon$, is
$\Theta(\frac{\log(L/\epsilon)}{ \log \log (L/\epsilon)})$. Furthermore, every algorithm has worst-case performance ratio at
least $\Omega (\log \log (L/\epsilon))$.

We also analyze the expected performance of the Chord algorithm for some natural probability distributions. Given that the
algorithm is used in practice in various contexts with good performance, and since worst-case instances are often pathological
and extreme, it is interesting to analyze the average case performance of the algorithm. Indeed, we show that the performance
on the average is exponentially better. Note that Chord is a simple natural greedy algorithm, and is not tuned to any
particular distribution. We consider instances generated by a class of product distributions that are ``approximately'' uniform
and prove that the expected performance ratio of the Chord algorithm is $\Theta (\log m + \log \log (1/\eps))$ (upper and lower
bound). Again similar results hold for the Hausdorff distance.

\smallskip

\noindent {\bf Related Work.} There is extensive work on multiobjective optimization, as well as on approximation of curves in
various contexts. We have discussed already the main related references. The problem addressed by the Chord algorithm fits
within the general framework of determining the shape by probing \cite{CY}. Most of the work in this area concerns the exact
reconstruction, and the analytical works on approximation (e.g., \cite{LB,Ro,YG}) compute only the worst-case cost of the
algorithm in terms of $\epsilon$ (showing bounds of the form $O(\sqrt{L/\epsilon})$). There does  not seem to be any prior work
comparing the cost of the algorithm to the optimal cost for the instance at hand, i.e., the approximation ratio, which is the
usual way of measuring the performance of approximation algorithms.

The closest work in multiobjective optimization is our prior work \cite{DY2} on the approximation of convex Pareto curves using
a different cost metric. Both of the metrics are important and reflect different aspects of the use of the approximation in the
decision making process. Consider a problem, say with two objectives, suppose we make several calls, say $N$, to the $\mathrm{Comb}$
routine, compute a number of solution points, connect them and present the resulting curve to the decision maker to visualize
the range of possibilities, i.e., get an idea of the true convex Pareto curve. (The process may not end there, e.g., the
decision maker may narrow the range of interest, followed by computation of a better approximation for the narrower range, and
so forth). In this scenario, we want to achieve as small an error $\epsilon$ as possible, using as small a number $N$ of calls
as we can, ideally, as close as possible to the minimum number $\mathrm{OPT}_{\epsilon}(I)$ that is absolutely needed for the
instance. In this setting, the cost of the algorithm is measured by the number of calls (i.e., the computational effort); this
is the cost metric that we study in this paper, and the performance ratio is as usual the ratio of the cost to the optimum
cost. Consider now a scenario where the decision maker does not just inspect visually the curve, but will look more closely at
a set of solutions to select one; for instance a physician in the radiotherapy example will consider carefully a small number
of possible treatments in detail to decide which one to follow. Since human time is much more limited than computational time
(and more valuable, even small constant factors matter a lot), the primary metric in this scenario is the number $n$ of
selected solutions that is presented to the decision maker for closer investigation (we want $n$ to be as close as possible to
$\mathrm{OPT}_{\epsilon}(I)$), while the computational time, i.e., the number $N$ of calls, is less important and can be much larger (as
long as it is feasible of course). This second cost metric (the size $n$ of the selected set) is studied in \cite{DY2} for the
convex Pareto curve (and in \cite{VY,DY} in the nonconvex case). Among other results, it is shown there that for all
bi-objective problems with an exact $\mathrm{Comb}$ routine and a continuous convex space, an optimal $\epsilon$-convex Pareto set (i.e.,
one with $\mathrm{OPT}_{\epsilon}(I)$ solutions) can be computed in polynomial time using $O(m/\epsilon)$ calls to $\mathrm{Comb}$ in
general, (though more efficient algorithms are obtained for specific important problems such as bi-objective LP). For discrete
problems, a $2$-approximation to the minimum size can be obtained in polynomial time, and the factor $2$ is inherent. As remarked
above, both cost metrics are important for different stages of the decision making. Recall also that, as noted earlier, the
Chord algorithm runs in polynomial time, and furthermore, one can show that its solution set can be post-processed to select a subset that is a
$\epsilon$-convex Pareto set of size at most $2\mathrm{OPT}_{\epsilon}(I)$.

\bigskip

\noindent {\bf Structure of the paper.} The rest of the paper is organized as follows: Section~\ref{sec:prelim} describes the model and states our main results,
Section~\ref{sec:worst} concerns the worst-case analysis, and Section~\ref{sec:average} the average-case analysis. Section~\ref{sec:concl} concludes the paper
and suggests the most relevant future research directions.

\medskip








\section{Model and Statement of Results} \label{sec:prelim}

This section is structured as follows: After giving basic notation, in Section~\ref{ssec:defs} we describe the relevant definitions and framework from multiobjective optimization. In Section~\ref{ssec:chord-description} we provide a formal description of the Chord algorithm in tandem with an intuitive explanation.
Finally, in Section~\ref{ssec:results} we state our results on the performance of the algorithm as well as our general lower bounds.

\smallskip

\noindent {\bf Notation.} We start by introducing some notation used throughout the paper. For $n,i,j \in \Z_+$, we will denote $[n] :=
\{1,2,\ldots,n\}$ and $[i, j] := \{ i, i+1, \ldots, j \}$. For $p,q,r \in \R^2$, we denote by $pq$ the line segment with
endpoints $p$ and $q$, $(pq)$ denotes its length, $\triangle (p q r)$ is the triangle defined by $p, q, r$ and $\angle (pqr)$
is the internal angle of $\triangle (p q r)$ formed by $pq$ and $qr$.

We will use $x$ and $y$ as the two coordinates of the plane. If $p$ is a point on the plane, we use $x(p)$ and $y(p)$ to
denote its coordinates; that is, $p = \big(x(p),y(p)\big)$. We will typically use the symbol $\lambda$ to denote the
(absolute value of the) slope of a line, unless otherwise specified. Sometimes we will add an appropriate subscript, i.e., we will use
$\lambda_{pq}$ to denote the slope of the line defined by $p$ and $q$. For a Lebesgue measurable set $A \subseteq \R^2$ we will
denote its area by $S(A)$.

\subsection{Definitions and Background} \label{ssec:defs}

We describe the general framework of a bi-objective problem $\Pi$ to which our results are applicable. A bi-objective
optimization problem has a set of valid instances, and every instance has an associated set of feasible solutions, usually
called the solution or decision space. There are two objective functions, each of which maps every instance--solution pair to a
real number. The problem specifies for each objective whether it is to be maximized or minimized.

Consider the plane whose coordinates correspond to the two objectives. Every solution is mapped to a point on this plane. We
denote the objective functions by $x$ and $y$ and we use $\I$ to denote the objective space (i.e., the set of $2$-vectors of
objective values of the feasible solutions for the given instance). As usual in approximation, we assume that the objective
functions are polynomial time computable and take values in $[2^{-m}, 2^m]$, i.e., $\I \subseteq [2^{-m}, 2^m]^2$, where $m$ is
polynomially bounded in the size of the input. Note that this framework covers all discrete combinatorial optimization problems
of interest (e.g., shortest paths, spanning tree, matching, etc), but also contains many continuous problems (e.g., linear and
convex programs, etc). Throughout this paper we assume, for the sake of concreteness, that both objective functions $x$ and $y$
are to be minimized. All our results hold also for the case of maximization or mixed objectives.

\smallskip

Let $p,q \in \mathbb{R}^2_{+}$. We say that $p$ {\em dominates} $q$ if $p \leq q$ (coordinate-wise). We say that $p$ {\em $\eps$-covers}
$q$ ($\eps \ge 0$) if $p \le (1+\eps)q$. Let $A \subseteq \R^2_+$. The {\em Pareto set} of $A$, denoted by $\p(A)$, is the subset of
undominated points in $A$ (i.e., $p \in \p(A)$ iff $p \in A$ and no other point in $A$ dominates $p$). The convex Pareto set of
a $A$, denoted by $\cp(A)$, is the minimum subset of $A$ whose convex combinations dominate (every point in) $A$. We also use
the term {\em lower envelope} of $A$ to denote the Pareto set of its convex hull, i.e., $\env(A) = \p(\ch(A))$. In particular, if $A$
is convex its lower envelope is identified with its Pareto set. If $A$ is finite, its lower envelope is a convex polygonal
chain with vertices the points of $\cp(A)$.  Note that, for any set $A$, the lower envelope $\env(A)$ is a convex and monotone
decreasing planar curve. For $p,q \in \env(A)$ we will denote by $\env(pq)$ the subset of $\env(A)$ with endpoints $p$, $q$.

An {\em $\eps$-convex Pareto set} of $A$ (henceforth $\eps$-CP) is a subset $\cp_{\eps}(A)$ of $A$ whose convex combinations
$\eps$-cover (every point in) $A$. Note that such a set need not contain points dominated by convex combinations of other
points, as they are redundant. If a set contains no redundant points, we call it non-redundant. Let $S = \{s_i\}_{i=1}^k
\subseteq \mathbb{R}^2_{+}$, $x(s_i) < x(s_{i+1})$ and $y(s_i) > y(s_{i+1})$, be a non-redundant set. By definition, $S$ is an
$\eps$-CP for $A$ if and only if the polygonal chain $\env(S) = \langle s_1, \ldots, s_k \rangle$ $\eps$-covers $\cp(A)$. 

We define the {\em ratio distance} from a point $p$ to a point $q$ as $\mathcal{RD}(p,q) = \max \{ x(q)/x(p)-1, y(q)/y(p)-1, 0 \}$. 
(Note the asymmetry in the definition.)
Intuitively, it is the minimum value of $\eps \geq 0$ such that $q$ $\eps$-covers $p$. We also define the ratio distance between sets of points.
If $S_1, S_2 \subseteq \mathbb{R}^2_{+}$, then $\mathcal{RD}(S_1, S_2) = \sup_{q_1 \in S_1} \inf_{q_2 \in S_2}
\mathcal{RD}(q_1,q_2)$. As a corollary of this definition, the set $S\subseteq A$ is an $\eps$-CP for $A$ if and only if
$\rd(\env(A), \env(S)) = \rd(\cp(A), \env(S)) \le \eps$.

\smallskip

The above definitions apply to any set $A \subseteq \R^2_+$. Let $\Pi$ be a bi-objective optimization problem in the
aforementioned framework. For an instance of $\Pi$, the set $A$ corresponds to the objective space $\I$ (for the given
instance). We do not assume that the objective space $\I$ is convex; it may well be discrete or a continuous non-convex set. It
should be stressed that the objective space is not given explicitly, but rather implicitly through the instance. In particular,
we access the objective space $\I$ of $\Pi$ via an oracle $\comb$ that (exactly or approximately) minimizes non-negative linear
combinations $y+\lambda x$ of the objectives. That is, the oracle takes as input a parameter $\lambda \in \R_+$ and outputs a
point $q \in \I$ (i.e., a feasible point) that (exactly or approximately) minimizes the combined objective function
$h_{\lambda}(x,y) = y+ \lambda x$. We use the convention that, for $\lambda=+\infty$, the oracle minimizes the $x$ objective.

Formally, for $\lambda \in \mathbb{R}_+$, we denote by $\comb(\lambda)$ the problem of optimizing the combined objective
$h_{\lambda}(x,y)$ over $\I$. Let $\delta \in \R_+$ be an accuracy parameter. Then, for $\lambda \in \mathbb{R}_+$, we will
denote by $\comb_{\delta}(\lambda)$ a routine that returns a point $q \in \I$ that optimizes $h_{\lambda}$ up to a factor of
$(1+\delta)$, i.e., $h_{\lambda}(q) \leq (1+\delta) \cdot \min \{h_{\lambda}(p) \mid p \in \I \}$. In other words, the
$\comb_{\delta}$ routine is an ``approximate optimization oracle'' for the objective space $\I$. We say that the $\comb$
problem has a polynomial time approximation scheme (PTAS), if for any instance of $\Pi$ and any $\delta>0$ there exists a routine $\comb_{\delta}(\lambda)$ 
(as specified above) that runs in time polynomial in the size of the instance. As shown in~\cite{DY2}, for any bi-objective problem in the aforementioned
framework, there is a PTAS for constructing an $\eps$-convex Pareto set if and only if there is a PTAS for the $\comb$ problem.

\smallskip

We now provide a geometric characterization of $\comb_{\delta}(\lambda)$ that will be crucial throughout this paper.
Consider the point $q \in \I$ returned by $\comb_{\delta}(\lambda)$ and the corresponding line $\ell(q, \lambda)$ through $q$
with slope $-\lambda$, i.e., $\ell(q, \lambda) = \{(x,y) \in \mathbb{R}^2 ~\mid~ h_{\lambda}(x,y) = h_{\lambda}(q)\}$.  Then
there exists no solution point (i.e., no point in $\I$) below the line $(1+\delta)^{-1} \cdot \ell(q, \lambda) \eqdef \{ (x,y) \in
\mathbb{R}^2 ~\mid~ h_{\lambda}(x,y) = h_{\lambda}(q)/(1+\delta) \}$.  Geometrically, this means that we sweep a line of
absolute slope $\lambda$, 
until it touches (exactly or approximately) the undominated boundary (lower envelope) of the objective space $\I$. 
For $\delta=0$, the routine returns a point on the lower envelope $\env(\I)$, while for $\delta>0$
it returns a (potentially) dominated point of $\I$ ``close'' to the boundary (where the notion of ``closeness'' is
quantitatively defined by the aforementioned condition). See Figure~\ref{fig:comb} for an illustration.

\begin{figure} 
\begin{center}
\epsfig{file=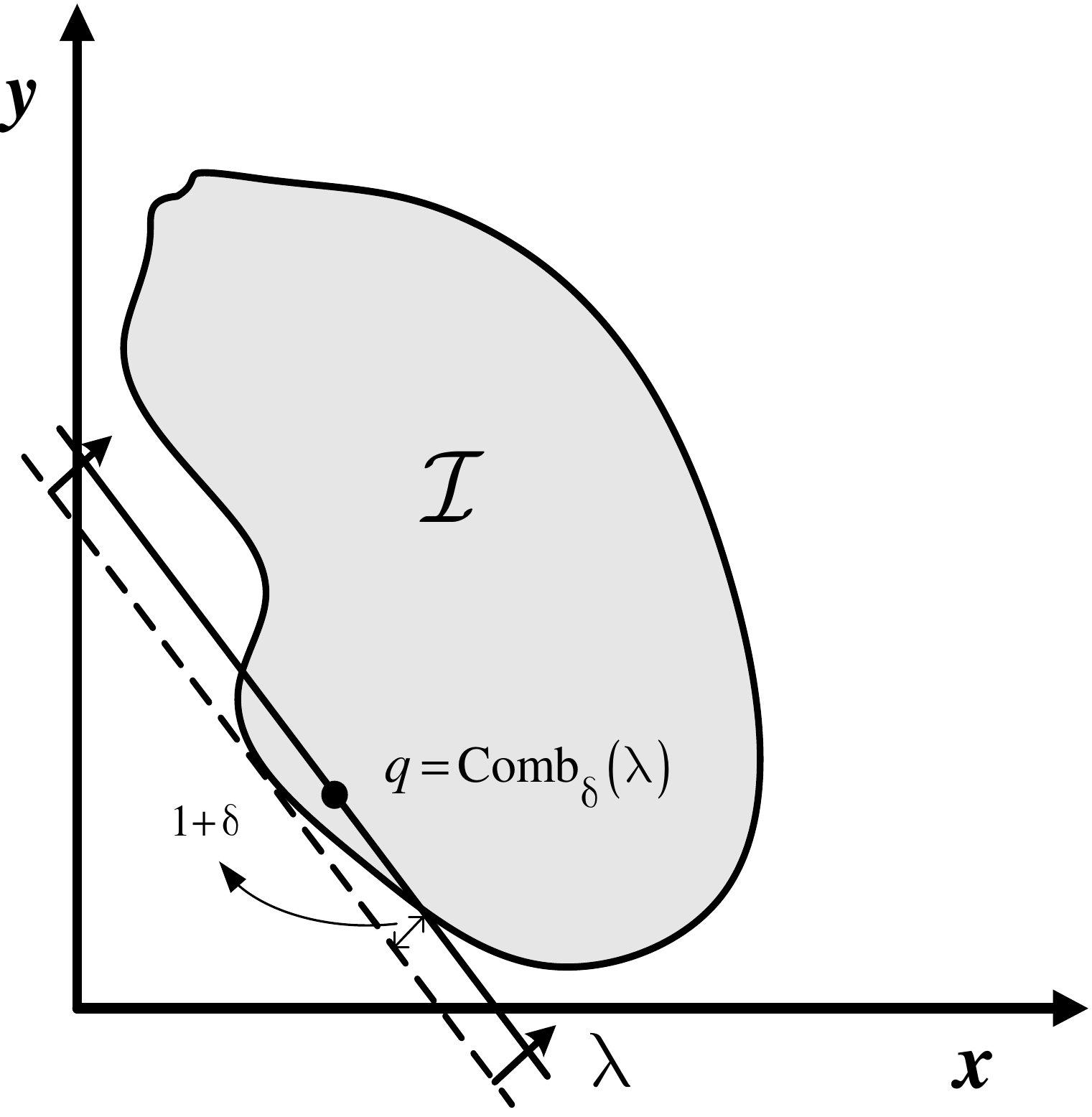,width=8cm}
\end{center}
\caption{A geometric interpretation of the $\comb_{\delta}(\lambda)$ routine. The shaded region represents the objective space
$\I$. The routine guarantees that there exist no solution points strictly below the dotted line.} \label{fig:comb}
\end{figure}

If $q$ is the (feasible) point in $\I$ returned by $\comb_{\delta}(\lambda)$, then we write $q = \comb_{\delta}(\lambda)$. We
assume that either $\delta=0$ (i.e., we have an exact routine), or we have a PTAS. For $\delta=0$, i.e., when the optimization is
exact, we omit the subscript and denote the Comb routine simply by $\comb(\lambda)$.

\smallskip

We will denote by $\opt_{\eps} (\I)$ the size of an optimum $\eps$-convex Pareto set for the given instance, i.e., an
$\eps$-convex Pareto set with the minimum number of points. Note that, obviously every algorithm that constructs an $\eps$-CP,
must certainly make at the very least $\opt_{\eps} (\I)$ calls to $\comb$, just to get $\opt_{\eps} (\I)$ points -- which are
needed at a minimum to form an $\eps$-CP; this holds even if the algorithm somehow manages to always be lucky and call $\comb$
with the right values of $\lambda$ that identify the points of an optimal $\eps$-CP. Having obtained the points of an optimal
$\eps$-CP, another $\opt_{\eps} (\I)$ many calls to Comb with the slopes of the edges of the polygonal line defined by the
points, suffice to verify that the points form an $\eps$-CP. Hence, the ``offline'' optimum number of calls is at most $2 \cdot
\opt_{\eps} (\I)$.

Let $\chd_\eps (\I)$ be the number of $\comb$ calls required by the Chord algorithm on instance $\I$. The {\em worst-case performance ratio} of the algorithm is defined to be $\sup_{\I} \frac{\chd_\eps(\I)}{\opt_\eps(\I)}$. If the inputs are drawn
from some probability distribution $\D$, then we will use the {\em expected performance ratio} $\E_{\I \sim \D} \left[
\frac{\chd_\eps(\I)}{\opt_\eps(\I)} \right]$ as a measure (note that we shall omit the subscript ``$\I \sim \D$'' when the
underlying distribution over instances will be clear from the context).

\medskip

While the main focus of this paper is on approximation of multiobjective optimization problems, 
our analysis also applies (with minor modifications) to related settings (in which the the Chord algorithm has been used). 
Consider for example the following classical {\em curve simplification} problem: 
Given a convex curve $C$ of length (at most $L)$ on the plane, find the minimum number of points on $C$ so that
the corresponding convex polygonal chain $C_{\eps}$ approximates the curve $C$ within distance error $\eps$.
A popular distance measure in this setting is the {\em Hausdorff distance} of $C_{\eps}$ from $C$, i.e.,
the maximum euclidean distance of a point in the actual curve from the approximating curve.
Note that the Hausdorff distance is invariant under translation, while our ratio distance is invariant under scaling.



We remark that our upper and lower bounds for the performance of the Chord algorithm wrt the ratio distance 
apply with minor modifications for the Hausdorff distance. This can be seen as follows:
A curve of length $L$ located anywhere on the plane can be scaled down by $L$ and translated to
the unit square $[1,2]$. By definition, approximating the original curve within Hausdorff distance $\eps$ 
is equivalent to approximating the new curve within Hausdorff distance error $\epsilon/L$. 
A simple calculation shows that for a convex curve in the unit square $[1, 2]$ the two metrics (Hausdorff and ratio distance) 
are within a constant factor of each other. Hence, the upper and lower bounds on the approximation wrt ratio distance give
immediately corresponding bounds wrt Hausdorff.

We also define the horizontal distance. We use this distance as an intermediate tool for our lower bound construction in
Section~\ref{ssec:worst-lower}. The {\em horizontal distance} from a point $p$ to 
a point $q$ is defined $\hd(p, q) = \max \{ x(q) - x(p), 0 \}$. The
horizontal distance from $p$ to a line $\ell$ (that is not horizontal) is $\hd(p, \ell) = \hd (p, p_\ell)$, where $p_\ell$ is the
$y$-projection of $p$ on $\ell$ (i.e., the point that a horizontal line from $p$
intersects $\ell$).


\smallskip

\begin{Remark}
\emph{All the upper bounds of this paper on the performance of the Chord algorithm hold under the assumption that we have a
PTAS for the Comb problem. On the other hand, our lower bounds apply even for the special case that an exact routine is
available. For the clarity of the exposition, we describe the Chord algorithm and prove our upper bounds for the case
of an exact Comb routine. We then describe the simple modifications in the algorithm and analysis for the case of an approximate routine.}
\end{Remark}


\subsection{The Chord Algorithm} \label{ssec:chord-description}

We have set the stage to formally describe the algorithm. Let $\Pi$ be a bi-objective problem with an efficient exact $\comb$
routine. Given $\eps>0$ and an instance $\I$ of $\Pi$ (implicitly via $\comb$), we would like to construct an $\eps$-CP for
$\I$ using as few calls to $\comb$ as possible. As mentioned in the introduction, a popular algorithm for this purpose is the
Chord algorithm that is the main object of study in this paper. In Table~\ref{table:chord} below we describe the algorithm in
detailed pseudo-code. The pseudo-code corresponds exactly to the description of the algorithm in the introduction.

The (recursively defined) routine Chord is called from the main algorithm and returns a set of points $Q \subseteq \I$ that is
an $\eps$-CP for $\I$. The recursive description of the algorithm is quite natural and will be useful in the analysis.

\vspace{-.4in}

\begin{table}[!h]

\begin{tabular}{p{3in}p{3.8in}}

\begin{flushleft}
\textbf{Chord Algorithm} (\emph{Input:} $\I$, $\eps$)\\

\medskip

$a =\comb (+\infty)$;\\
\smallskip
$b = \comb(0)$; \\
\smallskip
$c = (x(a), y(b))$; \\

\medskip

\textbf{Return} $Q = \textrm{Chord } (\{a, b, c \}, \eps)$.
\end{flushleft}

&

\begin{flushleft}
$\textbf{Routine } \textrm{Chord } (\textit{Input: } \{l, r, s \}, \eps)$\\
\smallskip
\textbf{If} $\rd(s, lr) \leq \eps$ \textbf{return} $\{ l, r \}$;\\
\smallskip
$\lambda_{lr} = \textrm{ absolute slope of } lr$;
$q = \comb (\lambda_{lr})$;\\
\textbf{If} $\rd (q, lr) \leq \eps$ \textbf{return} $\{ l, r \}$;\\
\smallskip
$\ell(q):=$ line parallel to $lr$ through $q$;\\
$s_l = ls \cap \ell(q)$; $s_r = rs \cap \ell(q)$;\\

\smallskip

$Q_l = \textrm{Chord}(\{l, q, s_l\}, \eps)$; $Q_r = \textrm{Chord}(\{q, r, s_r\}, \eps)$;\\

\smallskip
\textbf{Return} $Q_l \cup Q_r$.
\end{flushleft}

\end{tabular}


\caption{Pseudo-code for Chord algorithm.} \label{table:chord}
\end{table}


Let us proceed to explain the notation used in the pseudo-code in tandem with some intuitive understanding of the algorithm.
First, the feasible points $a, b \in \I$ minimize the $x$-objective and $y$-objective respectively. (Note that these points may
be dominated, i.e., are not necessarily the extreme points of $\cp(\I)$; however, this does not affect our analysis.) By
monotonicity and convexity, the lower envelope is contained in the right triangle $\triangle(acb)$, i.e., $\env(\I) \subseteq
\triangle(acb)$. (Note that the point $c$ is not a feasible point, but is solely defined for the purpose of ``sandwiching'' the
lower envelope.)

The Chord routine takes as input (i) The desired error tolerance $\eps$, and (ii) The ordered $3$-set of points $\{l,r,s\}$. In
every recursive call of the Chord routine, the points $l$ (left) and $r$ (right) are (feasible) points of the lower envelope,
i.e., $l, r \in \I \cap \env(\I)$. Moreover, the point $s$ is a (not necessarily feasible) point and the following conditions
are satisfied:
\begin{itemize}
\item The point $s$ lies to the right of $l$, to the left of $r$ and below the line segment $lr$.
In particular, this implies that the triangle $\triangle(lsr)$ is either right or obtuse, i.e., $\angle(lsr) \in [\pi/2, \pi)$.

\item The subset of the lower envelope (convex curve) with endpoints $l$ and $r$ is contained in $\triangle(lsr)$, i.e.,
$\env(lr) \subseteq \triangle(lsr)$.
\end{itemize}

See Figure~\ref{fig:chord} for an illustration. The red curve represents the lower envelope between the points $l$ and $r$,
i.e., the unknown curve we want to approximate. (Note that the feasible points $l, r$ are the results of previous recursive
calls.) The point $q = \comb(\lambda_{lr})$ is the feasible point in $\env(\I)$ computed in the current iteration (recursive
call). We remark that this point is at maximum ratio distance from the ``chord'' $lr$ -- among all points of $\env(lr)$. The
Chord routine will recurse on the triangles $\triangle(ls_lq)$ and $\triangle(qs_rr)$. Note that, by construction, the line
$s_ls_r$ is parallel to $lr$.

During the execution of the algorithm, we ``learn'' the objective space in an ``online'' fashion. After a number of iterations,
we have obtained information that imposes an ``upper'' and a ``lower'' approximation to $\env(\I)$. In particular, the computed
solution points define a polygonal chain that is an upper approximation to $\env(\I)$ and the supporting lines at these points
define a lower approximation. As the number of iterations increases, these bounds become more and more refined, hence we obtain
a better approximation to the curve.

Consider for example Figure~\ref{fig:chord}. Before the current iteration of the Chord routine, the only information available
to the algorithm is that the lower envelope (between points $l$ and $r$) lies in $\triangle(lsr)$, i.e., $lr$ is an upper
approximation and the polygonal chain $\langle l,s,r \rangle$ is a lower approximation. Given the information the algorithm has
at this stage, the potential error of this initial approximation is the ratio distance $\rd(s,lr)$. (If this distance is at
most $\eps$, then the segment $lr$ $\eps$-covers the subset of the lower envelope between $l,r$ and the routine terminates.
Otherwise, the point $q$ is computed and the approximation is refined, if necessary, as explained above.) After the current
iteration, the upper approximation is refined to be $\langle l,q,r \rangle$ and the lower approximation is $\langle l, s_l,
s_r, r \rangle$ . The potential error given the available information now is $\max \{\rd(s_l, lq), \rd(s_r, qr) \} <
\rd(s,lr)$. By applying these arguments recursively, we get that the Chord algorithm always terminates and, upon termination, it
outputs an $\eps$-CP for $\I$ (see Lemma~\ref{lem:chord-finds-eps-convex} for a rigorous proof).
%



\begin{figure}[h!]
\begin{center}
\epsfig{file=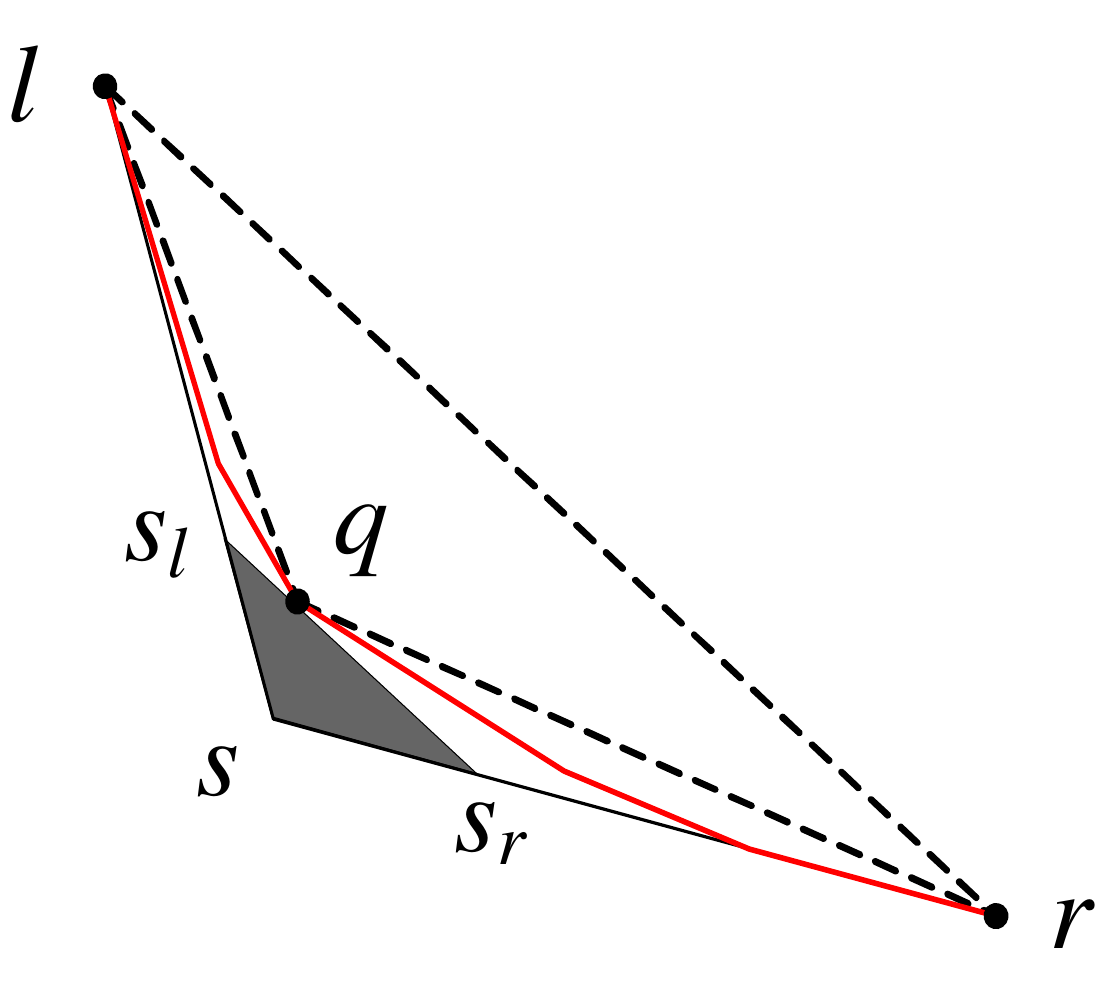,width=9cm}
\end{center}
\caption{Illustration of one iteration of the Chord routine.} \label{fig:chord}
\end{figure}


Consider the recursion tree built by the Chord algorithm. Every node of the tree corresponds to a triangle $\triangle(lsr)$
(input to the Chord routine at the corresponding recursive call). In the analysis, we shall use the following convention: There
is no node in the recursion tree if, at the corresponding step, the routine terminates without calling 
$\comb$ (i.e., if $\rd(s, lr) \leq \eps$ in the aforementioned description).

\medskip

The pseudo-code of Table~\ref{table:chord} is specialized for the ratio distance, but one may use other metrics based on the
application. In the context of convex curve simplification, our upper and lower bounds for the Chord algorithm also apply for
the Hausdorff distance (i.e. the maximum euclidean distance of a point in the actual curve from the approximate curve).


\subsection{Our Results}  \label{ssec:results}

We are now ready to state our main results. Our first main result is an  analysis of the Chord algorithm on worst-case
instances that is tight up to constant factors. In particular, for the ratio distance we prove

\begin{theorem} \label{thm:worst}
The worst-case performance ratio of the Chord algorithm (wrt the ratio distance) is $\Theta \big(
\frac{m+\log(1/\epsilon)}{\log m + \log\log(1/\epsilon)} \big).$ 
\end{theorem}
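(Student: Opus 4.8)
Write $K \eqdef m + \log(1/\eps)$. Since $\I \subseteq [2^{-m},2^m]^2$ and we measure approximation by ratio distance down to $\eps$, the number of distinct ``ratio-distance scales'' between the diameter of the box and the target accuracy is $\Theta(K)$, while the claimed denominator satisfies $\log m + \log\log(1/\eps) = \Theta(\log K)$. Thus the statement amounts to showing that the worst-case value of $\chd_\eps(\I)/\opt_\eps(\I)$ is $\Theta(K/\log K)$, and I would prove the matching upper and lower bounds separately.

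\textbf{Upper bound.} By Lemma~\ref{lem:chord-finds-eps-convex} the algorithm is correct, so it suffices to count $\comb$ calls, i.e.\ the internal nodes of the recursion tree, each a triangle $\triangle(lsr)$ with $\rd(s,lr) > \eps$. I would argue in two steps. First, a \emph{depth bound}: along any root-to-leaf path the node error $\rd(s,lr)$ is nonincreasing and all coordinates stay in $[2^{-m},2^m]$, so tracking a geometric potential of the triangle (its area, or equivalently the product of its base length and height measured in ratio-distance units) shows that after $O(K)$ successive refinements the error must fall to $\le \eps$; hence the tree has depth $O(K)$. Second, a \emph{charging step}: fix an optimal $\eps$-CP and charge each $\comb$ call to the nearest vertex of the optimal polygonal chain that its split is refining. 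The heart of the matter is the claim that at most $O(K/\log K)$ calls are charged to any single optimal vertex. Intuitively, in a neighborhood where the optimal chain bends only once the residual curve is ratio-flat on either side, so a long chain of refinements there is possible only if each split contracts the residual error by a factor polynomial in $K$; quantifying this contraction, and balancing it against the $O(K)$ range of scales, caps the count at $O(K/\log K)$ per vertex. Summing over the $\opt_\eps(\I)$ vertices gives $\chd_\eps(\I) = O\big(\tfrac{K}{\log K}\big)\,\opt_\eps(\I)$.

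\textbf{Lower bound.} I would construct an explicit objective space, accessed through an exact $\comb$ oracle, whose lower envelope has a single localized bend so that $\opt_\eps(\I) = O(1)$, yet on which the greedy worst-point rule $q = \comb(\lambda_{lr})$ is maximally wasteful. The curve is calibrated so that each forced step cuts a chord that straddles the bend rather than resolving it, reducing the residual ratio-distance error by only a factor $K^{\Theta(1)}$; consequently $\Theta(K/\log K)$ steps are needed to drive the error from its initial value $\Theta(2^{2m})$ down to $\eps$, a range whose logarithm is $\Theta(K)$. I would first design and analyze this gadget with respect to the horizontal distance $\hd(\cdot,\cdot)$, where the per-step recursion has a clean closed form, and then transfer to the ratio distance using that, once the curve is scaled into a bounded box, $\hd$ and $\rd$ agree up to constant factors (the same device used in the Hausdorff remark). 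Since the construction keeps $\opt_\eps(\I)$ bounded, this yields a worst-case ratio of $\Omega(K/\log K)$.

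\textbf{Main obstacle.} In both directions the crux is precisely the $\log K$ improvement over the trivial bound $K$. A naive level-by-level count gives only $\chd_\eps = O(K)\,\opt_\eps$, and a naive single-cascade example gives only an $\Omega(K)$-versus-$O(1)$ separation; neither distinguishes $K$ from $K/\log K$. The genuinely hard work is therefore (i) the per-vertex counting lemma in the upper bound, showing that accumulating many $\comb$ calls around one optimal vertex forces the residual error to contract fast enough that only $O(K/\log K)$ of them can fit into the $O(K)$ available scales, and (ii) calibrating the lower-bound gadget so that each greedy step provably consumes a $\Theta(\log K)$ slice of the scale budget while leaving $\opt_\eps(\I)$ constant. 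Matching these two $\log K$ factors exactly is what makes the bound tight.
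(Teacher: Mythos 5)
Your overall architecture matches the paper's (correctness, an area/potential depth bound of $O(K)$ where $K=m+\log(1/\eps)$, and a charging argument for the upper bound; a skewed gadget analyzed first in horizontal distance for the lower bound), but in both directions the step you yourself flag as ``the heart of the matter'' rests on a claim that is false, so there are genuine gaps. For the upper bound, your per-vertex counting lemma is justified by the assertion that a long chain of refinements near one optimal vertex is possible ``only if each split contracts the residual error by a factor polynomial in $K$.'' This is contradicted by the paper's own lower-bound instance (Theorem~\ref{thm:rd-lb-ws}): there $\opt_{\eps}\le 3$, the Chord algorithm makes $\Theta(K/\log K)$ calls all charged to a single bend, and yet the residual error decreases only \emph{additively}, by $\Theta(\eps/k)$ per call (a multiplicative factor $1-\Theta(1/k)$, not $K^{-\Theta(1)}$); the error stays $\Theta(\eps)$ throughout. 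Hence no argument that tracks the residual error per step can beat the trivial $O(K)$ bound. What the paper tracks instead (Theorem~\ref{thm:ws-ub-tight}) is the excess ratio distance $H_i$ of each newly added point from the fixed line $b'q^{\ast}$ through the optimal vertex, with ratios $P_i=H_i/H_{i-1}$, and it derives \emph{two} product constraints: $\prod_i P_i > \eps/2^{2m}$ (the scale budget), and --- this is the step that actually uses optimality, i.e., that the single chord $aq^{\ast}$ already $\eps$-covers its portion of the curve --- $\prod_{i\ge 2}(1-P_i) > 1/2$ (Claim~\ref{claim:tu3}, via the points $d_i$, which can never cross $q^{\ast}$). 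Only the tension between these two products (for fixed $\prod_i P_i$, the product $\prod_i (1-P_i)$ is maximized when all $P_i$ are equal) yields $(k-1)^{k-1} < 2^{2m}/\eps$ and hence $k=O(K/\log K)$. Your sketch has no substitute for the second constraint, which is precisely the missing idea.

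For the lower bound there are two problems. First, the transfer device you invoke is false: the horizontal distance $\hd$ and the ratio distance $\rd$ are \emph{not} within constant factors on a bounded box (for a point below a near-horizontal chord, $\hd$ exceeds $\rd$ by an arbitrarily large factor); the constant-factor equivalence in the paper's remark concerns the \emph{Hausdorff} distance. Indeed, if $\hd$ and $\rd$ were equivalent up to constants, Theorem~\ref{thm:hd-lb-gen-ws} (every algorithm has \emph{unbounded} performance ratio wrt $\hd$, even in the unit square with error $1/2$) would contradict the bounded $\rd$ ratio of the Chord algorithm. The paper's actual transfer is the additive estimate $\rd \le \hd \le \rd + L^2 + L/\lambda$ (Lemma~\ref{lem:hd-vs-rd}), which is useful only in the skewed regime where the horizontal extent is $L=O(\eps)$ and all relevant chords have huge slope $\lambda$; controlling this error term is a substantial part of the proof (it is why only $j^{\ast}/8$ of the $j^{\ast}$ constructed points are shown to be selected). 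Second, your cascade picture --- error starting at $\Theta(2^{2m})$ and dropping by $K^{\Theta(1)}$ per step --- misidentifies the geometrically decreasing quantity: with $\opt_{\eps}=O(1)$ the curve must lie within ratio distance $\eps$ of the optimal chain, which forces the hard instance to be skewed and keeps its error at $\Theta(\eps)$ at every step; what decreases geometrically (by a factor $\approx k$ per step, from $2^{m}$ down to roughly $\eps$) is the height of the successive points above the optimal supporting line, not the error. Finally, even granting a gadget, the bulk of the work is the induction showing that the Chord algorithm really visits all the constructed vertices in order (Lemma~\ref{claim:chd-hd}); your sketch does not address it.
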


The lower bound on the performance of the Chord algorithm is proved in Section~\ref{ssec:worst-lower}. 
In Section~\ref{ssec:worst-lower-gen}, we also prove a general lower bound of  $\Omega (\log m + \log \log (1/\eps))$ 
on the performance ratio of {\em any} algorithm in the $\comb$ based model, as well as lower bounds for approximating
with respect to the horizontal (resp. vertical) distance.

In Section~\ref{ssec:worst-upper} we
give a proof of the upper bound. In Section~\ref{sssec:worst-upper-simple} we start by presenting the slightly weaker upper bound of $O(m + \log (1/\eps))$; 
this result has the advantage that its proof is simple and intuitive. The proof of the asymptotically tight upper bound requires a more careful analysis and
is presented in Section~\ref{sssec:worst-upper-final}.

\begin{Remark}
\emph{It turns out that the Hausdorff distance behaves very similarly to the ratio distance. In particular, by essentially
identical proofs, it follows that the performance ratio of the Chord algorithm for approximating a convex curve of length $L$
within Hausdorff distance $\epsilon$, is $\Theta(\frac{\log(L/\epsilon)}{ \log \log (L/\epsilon)})$. Furthermore, every
algorithm has worst-case performance ratio at least $\Omega (\log \log (L/\epsilon))$.}
\end{Remark}

In the process, we also analyze the Chord algorithm with respect to the horizontal distance metric (or by symmetry the vertical distance). We
show that in this setting the performance ratio of the algorithm is unbounded. In fact, we can get a strong lower bound in this
case: \emph{Any} algorithm with oracle access to $\comb$ has an unbounded performance ratio (Theorem~\ref{thm:hd-lb-gen-ws}), 
even on instances that lie in the unit square and the desired accuracy is a constant.

\medskip

Our second main result is an asymptotically tight analysis of the Chord algorithm in an average case setting (wrt the ratio
distance). Our random instances are drawn from 
standard distributions that have been widely used for the average
case analysis of geometric algorithms in a variety of settings. In particular, we consider (i) a Poisson Point Process on the
plane and (ii) $n$ points drawn from an ``un-concentrated'' product distribution. We now formally define these distributions.

\begin{definition} \label{def:PPP}
A (spatial, homogeneous) {\em Poisson Point Process (PPP)} of intensity $\nu$
on a bounded subset $\mathcal{S} \subseteq \mathbb{R}^2$ is a collection of random variables
$\{N(A)~|~A \subseteq \mathcal{S}~\text{is Lebesgue measurable}\}$
(representing the number of points occurring in every subset of $\mathcal{S}$), such that:
(i) for any Lebesgue measurable $A$,
$N(A)$ is a Poisson random variable with parameter $\nu \cdot S(A)$;
(ii) for any collection of \emph{disjoint} subsets
$A_1, \ldots, A_k$ the random variables $\{ N(A_i), i \in [k]  \}$ are mutually independent.
\end{definition}

\begin{definition} \label{def:delta-balanced distn}
Let $A$ be a bounded Lebesgue-measurable subset of  ${\mathbb R}^2$, and let $\cal D$ be a distribution over $A$. The
distribution $\cal D$ is called $\gamma$-{\em balanced}, $\gamma \in [0,1)$, if for all Lebesgue measurable subsets $A'
\subseteq A$, ${\cal D}(A') \in \left[ {(1-\gamma) \cdot {\cal U}(A')}, {{\cal U}(A') \over (1-\gamma)}\right]$, where $\cal U$
is the uniform distribution over $A$.
\end{definition}

We assume that $\gamma$ is an absolute constant and we omit the dependence on $\gamma$ in the performance ratio below. We
prove:

\begin{theorem}\label{thm:average}
For the aforementioned classes of random instances, the expected performance ratio of the Chord algorithm (wrt to the ratio distance) is $\Theta \big( \log
m + \log \log (1/\epsilon) \big)$.
\end{theorem}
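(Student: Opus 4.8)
The plan is to prove matching upper and lower bounds on the \emph{two} expected quantities $\E[\chd_\eps(\I)]$ and $\E[\opt_\eps(\I)]$, show that they differ by exactly a $\Theta(\log M)$ factor, where $M := m+\log(1/\eps)$, and then transfer this to the expected \emph{ratio} via concentration. First I would reduce both input models to the single model of $n$ i.i.d.\ uniform points: for the PPP, conditioning on the total count $N$ makes the points i.i.d.\ uniform on $\mathcal S$, and $N$ is sharply concentrated around $\nu\,S(\mathcal S)$ by Poisson tail bounds, so the PPP bound follows from the uniform one after taking expectations over $N$; for a $\gamma$-balanced $\D$, Definition~\ref{def:delta-balanced distn} guarantees that the $\D$-probability of every measurable event is within the constant factor $1/(1-\gamma)$ of its $\U$-probability, so every first- and second-moment estimate carries over with only constant-factor loss. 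Since the ratio distance is scale-invariant, I may fix the domain to span the dynamic range $[2^{-m},2^m]$, so that $M$ is the effective number of scales.

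The heart of the argument is a per-scale analysis of the random lower envelope $\env(\I)$. I would dyadically decompose the domain (in the $\log$-coordinates natural to the ratio distance) into $\Theta(M)$ slabs, and in each slab control the local geometry of $\env(\I)$ — the number and lengths of the hull edges and the local curvature — using standard estimates for the convex hull of uniform points. Writing $o_s$ and $c_s$ for the number of $\opt$-segments and the number of $\comb$-calls, respectively, that fall in slab $s$, I would show: (i) [upper] with high probability no slab forces more than $c_s = O(\log M)\cdot o_s$, i.e.\ the greedy recursion of Table~\ref{table:chord} over-refines a typical (regular) slab by at most a logarithmic factor, ruling out the worst-case multi-scale behaviour that drives Theorem~\ref{thm:worst}; and (ii) [lower] a constant fraction of slabs genuinely forces $c_s = \Omega(\log M)\cdot o_s$ with constant probability, because the farthest-point rule produces a geometrically decreasing sequence of sub-triangle errors of typical length $\Theta(\log M)$ before matching $\opt$'s granularity. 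Summing over slabs gives $\E[\chd_\eps]=\Theta(\log M)\cdot\E[\opt_\eps]$.

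To pass from the ratio of expectations to $\E[\chd_\eps/\opt_\eps]$ I would establish concentration of both $\chd_\eps$ and $\opt_\eps$ around their means. Because the contributions of well-separated slabs are essentially independent and changing a single input point perturbs the random envelope only locally, a bounded-differences (Efron--Stein / McDiarmid) argument yields that both quantities concentrate up to constant factors with probability $1-M^{-\Omega(1)}$. On this good event the ratio is $\Theta(\log M)$; on its complement I would invoke the deterministic worst-case bound $\chd_\eps/\opt_\eps = O(M/\log M)$ of Theorem~\ref{thm:worst}, whose contribution to the expectation is $M^{-\Omega(1)}\cdot O(M/\log M)=o(1)$. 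Combining the two cases gives $\E[\chd_\eps/\opt_\eps]=\Theta(\log M)=\Theta(\log m+\log\log(1/\eps))$, and the same scheme — with Euclidean distance replacing ratio distance in the per-slab estimates — yields the Hausdorff statement of the following Remark.

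The step I expect to be the main obstacle is part (i) of the per-slab analysis: proving that randomness caps Chord's over-refinement at $O(\log M)$ per slab rather than the worst-case $\Theta(M/\log M)$. This requires fine, simultaneous control across all $\Theta(M)$ scales of the fluctuations of the random hull, so that a union bound over slabs still leaves a $1-M^{-\Omega(1)}$ good event, together with a charging argument from recursion-tree nodes to $\opt$-segments that is robust to these fluctuations. A secondary difficulty is controlling the lower tail of $\opt_\eps$: since a small $\opt$ would inflate the ratio, the concentration argument must rule out atypically flat random envelopes, which is exactly where scale-invariance and the dynamic-range lower bound on the number of hull edges are used.
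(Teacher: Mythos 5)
Your proposal contains two genuine gaps, either of which is fatal on its own.

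\textbf{The concentration step does not work.} The bridge from $\E[\chd_\eps]=\Theta(\log M)\cdot\E[\opt_\eps]$ to $\E[\chd_\eps/\opt_\eps]=\Theta(\log M)$ rests on bounded-differences concentration of both quantities, and this fails for several reasons. First, $\chd_\eps$ is not a bounded-differences function of the point set in any useful sense: the Chord recursion is adaptive, so moving a single point can move the first touching point $q_1$, which changes every subsequent chord and can restructure the whole recursion tree --- the effect of one point is not local. Second, even granting an $O(1)$ per-point effect, McDiarmid gives only \emph{additive} fluctuations of order $\sqrt{n}$, whereas the quantities you must control \emph{multiplicatively} are far smaller: the lower envelope of $n$ i.i.d.\ uniform points has only $\Theta(n^{1/3})$ vertices in expectation, $\opt_\eps$ is smaller still, and in the instances relevant to the lower bound $\opt_\eps\le 3$. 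Third, the failure probability obtainable from concentration is governed by $n$, not by $M=m+\log(1/\eps)$; these parameters are decoupled, and in the sparse regime you cannot reach the $1-M^{-c}$ with $c>1$ that your bad-event estimate $M^{-\Omega(1)}\cdot O(M/\log M)=o(1)$ requires. The paper's proof never needs $\opt_\eps$ to concentrate at all: it compares $\chd_\eps$ to $\opt_\eps$ \emph{pathwise}, via the charging bound of Lemma~\ref{lem:opt-lb} ($\opt_\eps$ is at least the number of lowest internal nodes of the recursion tree), and uses randomness only to prove geometric area shrinkage with high probability (Lemma~\ref{lem:area recursion}), so that the tree can be truncated at depth $O(\log\log(\nu S_1))$ with leaf triangles containing $O(\log\log(\nu S_1))$ expected points; the residual bad event, on which the ratio is unbounded in the Poisson case, is handled by the tail estimate of Claim~\ref{claim:ppp-tail}, not by a worst-case ratio bound.

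\textbf{The lower bound is asserted, not proved, and is false in the form stated.} Your claim (ii) --- that on the generic full-range uniform instance a constant fraction of slabs forces $c_s=\Omega(\log M)\cdot o_s$ --- is the crux, and it cannot hold throughout the parameter range you allow: once the density is high enough relative to a balanced domain, Propositions~\ref{fact:dens-ppp} and~\ref{fact:dens-unif} show that Chord makes $O(1)$ expected calls (a random point lands in the corner triangle $T^{\ast}$ and the very first $\comb$ call certifies an $\eps$-CP), so the expected ratio is $O(1)$, not $\Omega(\log M)$. The theorem's lower bound is existential over the instance class, and the paper realizes it with a purpose-built instance (Theorem~\ref{thm: lower bound PPP}): a thin right triangle of height $\approx 2^m$ and width $O(\eps)$ with PPP intensity tuned to $\nu=1/\eps^2$, for which $\opt_\eps\le 3$ deterministically, the ratio distance is tracked by the horizontal distance (Lemma~\ref{lem:hd-vs-rd}), and with constant probability the recursion tree contains a surviving path of length $\Omega(\log\log(1/\eps))$. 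Your reduction, which ``fixes the domain to span $[2^{-m},2^m]$'' and never tunes the density, forecloses exactly this construction; without it, the lower half of the $\Theta(\cdot)$ has no support.
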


The upper bound proof is given in Section~\ref{ssec:average-upper} and the lower bound one in Section~\ref{ssec:average-lower}. We
first present detailed proofs for the case of PPP and then present the (more involved) case of product distributions. (We note
that similar results apply also for approximation under the Hausdorff distance.)



\section{Worst--Case Analysis} \label{sec:worst}

\subsection{Lower Bounds} \label{ssec:worst-lower}
In Section~\ref{ssec:worst-lower-chord} we prove a tight lower bound
for the Chord algorithm for the ratio distance metric. In Section~\ref{ssec:worst-lower-gen} we show our general lower bounds, both
for the ratio distance and the horizontal distance.

\subsubsection{Lower Bound for Chord Algorithm} \label{ssec:worst-lower-chord}
Our main result in this section is a proof of the lower bound statement in Theorem~\ref{thm:worst}.
In fact, we show a stronger statement that also rules out the possibility of constant factor bi-criteria approximations, i.e., it 
applies even if the Chord algorithm is allowed (ratio distance) error $\Omega (\eps)$ and we compare it against the optimal $\eps$-CP set.

\begin{theorem} \label{thm:rd-lb-ws}
Let $\mu \geq 1$ be an absolute constant. Let $\eps>0$ be smaller than a sufficiently small constant and $m>0$ be large enough.
There exists an instance $\I_{LB} = \I_{LB} (\eps, m, \mu)$
such that $\opt_\eps (\I_{LB})  \leq 3$ and
$\chd_{\mu \cdot \eps} (\I_{LB}) = \Omega \Big( (1/\mu) \cdot \frac{m+\log(1/\epsilon)}{\log m + \log\log(1/\epsilon)} \Big)$.
\end{theorem}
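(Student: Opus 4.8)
The plan is to construct a single convex curve $\I_{LB}$ that is almost flat relative to a carefully chosen two-segment chain (so that three points suffice to $\eps$-cover it), yet is shaped so that the greedy worst-point rule of Chord is forced to refine it across all $\Theta(m + \log(1/\eps))$ ``scales'' of precision available in the box $[2^{-m}, 2^m]^2$. The driving phenomenon is the scale-sensitivity of the ratio distance near small coordinates: when a coordinate has size $\approx 2^{-m}$, a point $q$ with $x(q) - x(p) = \eps \cdot 2^{-m}$ already has $\rd(p,q) \approx \eps$, so covering the low-coordinate corner to within ratio distance $\eps$ demands absolute precision $\eps \cdot 2^{-m}$, i.e.\ roughly $K := m + \log(1/\eps)$ bits. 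An optimal solution can spend a single well-placed intermediate point to absorb this corner, but Chord, which can only add the point of maximum ratio distance from its current chord, is driven to resolve the corner one scale at a time.

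Following the hint in Section~\ref{ssec:worst-lower}, I would first build the hard instance for the horizontal distance $\hd$, where the arithmetic is additive and cleaner, and then transfer it to the ratio distance. Concretely, I would design a convex decreasing polygonal curve of ``width'' $W$ in the $x$-direction on which (i) some three points form a chain that $\hd$-covers the whole curve within the target error, while (ii) the Chord routine, driven by its worst-point rule, is forced to produce a recursion path of length $\Omega(W / \log W)$: at each step the worst point $q = \comb(\lambda_{lr})$ it is compelled to return exposes a new sub-curve whose potential error $\max\{\rd(s_l, lq), \rd(s_r, qr)\}$ still exceeds $\mu\eps$, so the recursion cannot halt. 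The transfer to $\rd$ is by exponentiating the $x$-coordinate, $x \mapsto 2^{x}$ (and placing the curve in the low-$y$ corner), under which an additive horizontal gap becomes a multiplicative gap, so $\hd$ and $\rd$ agree up to constants on the relevant range; matching parameters turns the horizontal width into $W = \Theta(m + \log(1/\eps)) = \Theta(K)$.

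Two things must then be established. For the upper bound $\opt_\eps(\I_{LB}) \le 3$ I would exhibit an explicit intermediate point $p$ on the curve and verify, using the sandwiching of $\env(\I_{LB})$ between the chain $\langle a, p, b\rangle$ and the curve, that $\rd(\cp(\I_{LB}), \env(\{a,p,b\})) \le \eps$; the point $p$ is placed precisely so that its two chords absorb the amplified corner region in one shot. For the Chord lower bound I would argue inductively along the recursion path: maintaining the invariant that before step $i$ the algorithm's current chord lies a controlled amount above the $i$-th scale of the curve, I would show that the forced call returns the next scale's extreme point, that the residual ratio distance still exceeds $\mu\eps$, and hence that a node is created at depth $i$. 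The spacing between consecutive scales is the crux: it must be small enough (cumulatively within ratio distance $\eps$ of the two-segment chain) to keep $\opt_\eps \le 3$, yet large enough that each scale is individually at ratio distance $> \mu\eps$ from the locally relevant chord. Balancing these forces the scales to be separated by a factor $\Theta(\log K)$ in log-coordinates, so only $N = \Theta(K / \log K)$ of them fit, and the extra slack $\mu$ enlarges the required separation by a constant factor, producing the claimed $(1/\mu)$ loss.

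The main obstacle I anticipate is exactly this balancing step: proving that the per-call progress of Chord's greedy rule is $\Theta(\log K)$ scales --- no more (else $\opt_\eps > 3$) and no less (else the count drops below $K/\log K$). This requires a tight two-sided estimate relating the ratio distance of a curve point to a short local chord versus to the global two-segment chain, carried uniformly across all $N$ scales, and it is where the convexity constraints, the asymmetry of $\rd$, and the factor $\mu$ all interact. Everything else --- the additive horizontal-distance warm-up, the exponential transfer, and the verification $\opt_\eps \le 3$ --- I expect to be routine once the scale geometry is fixed.
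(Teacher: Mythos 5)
Your high-level plan coincides with the paper's: a horizontal-distance warm-up, a transfer to the ratio distance, a proof that $\opt_\eps \le 3$ via one well-placed intermediate point, an inductive forcing argument along a single recursion path, and the balance ``scale separation $\Theta(\log K)$, hence $\Theta(K/\log K)$ forced calls'' with $K = m + \log(1/\eps)$; even your accounting of the $\mu$ factor matches. But the proposal stops exactly where the proof starts. The construction realizing the balance is never specified, and it is the entire content of the theorem. The paper's instance $\I_G(H,L,k,j)$ places vertices inside a right triangle of height $H = 2^m-1$ and width $L = (\mu+1)\eps$, with $y(q_i)-1$ shrinking by the factor $k+i-1$ at step $i$, and with $q_i$ lying on the line through $q_{i-1}$ parallel to the chord $q_{i-2}b$. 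This specific choice is what simultaneously (a) makes $q_i$ the exact output of $\comb$ on the current chord, (b) makes the horizontal error decrease only additively, $\hd(q_i^{\ast}, q_ib) = L(k-1)/(k+i-1)$ (Claim~\ref{lem:hd-induction}), so that Chord cannot certify error $\le \mu\eps$ for $\approx k/\mu$ steps, and (c) keeps every $q_i$ within horizontal distance $\eps'_L < \eps$ of the single segment $aq_{j+1}$, giving $\opt_\eps \le 3$. Asserting an ``inductive invariant'' that the chord stays a controlled amount above the next scale, without exhibiting geometry that enforces it, is a restatement of the goal rather than an argument; you correctly flag this balancing step as the main obstacle, but that obstacle \emph{is} the proof.

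Second, your transfer mechanism would not work as stated. Exponentiating $x \mapsto 2^x$ does not map lines to lines, so the chords and the supporting lines queried by $\comb$ on the transformed instance correspond to nothing in the warm-up instance; the whole trajectory analysis would have to be redone natively in the transformed coordinates, so the warm-up buys you nothing. More decisively, your scales sit in the $x$-coordinates themselves, which must lie in $[2^{-m},2^m]$: their multiplicative extent is capped at $2^{2m}$, so only $O(m/\log K)$ scales separated by factors $K^{\Theta(1)}$ can fit, and the $\log(1/\eps)$ term of the lower bound is lost whenever $\eps < 2^{-m}$ --- a regime the theorem covers, since $\eps$ and $m$ are independent parameters. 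The paper's transfer avoids both problems: the instance is tall and skinny ($x \in [1, 1+O(\eps)]$, $y \in [1,2^m]$), the scales live in the offsets $y-1$, which can range over $[\Theta(\eps), 2^m]$ --- multiplicative extent $2^{K}$ --- while the coordinates stay in the box, and Lemma~\ref{lem:hd-vs-rd} shows that on this \emph{same} instance $\hd$ and $\rd$ differ by at most $L^2 + L/\lambda$, which is negligible as long as the chord slopes $\lambda$ stay large (guaranteed by the geometric shrinking of the heights for the first $j^{\ast}/8$ steps). Because that comparison is between two metrics on the same instance, with the same chords, the horizontal-distance analysis of Lemma~\ref{claim:chd-hd} transfers essentially verbatim --- which is precisely the step your exponentiation idea cannot replicate.
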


\begin{proof}
The lower bound applies even if an exact $\comb$ routine is available, hence we restrict ourselves to this case.
Before we proceed with the formal proof, we give an explanation of our construction for the case $\mu=1$ and $m=1$. The rough
intuition is that the algorithm can perform poorly when the input instance is ``skewed'', i.e., we have a triangle $\triangle
(abc)$ where $(ac)>> (bc)$. For such instances one can force the algorithm to select many ``redundant'' points (hence, perform
many calls to $\comb$) to obtain a certificate it has found an $\eps$-CP set, even when few points (calls) suffice.

For the special case under consideration, the ``hard'' instance has endpoints $a = (1,2)$ and $b=(1+ 2 \eps, 1)$, where $\eps$ is
sufficiently small (to be specified later). Initially, the only available information to the algorithm is that the convex Pareto set for the given 
instance lies in the right triangle  $\triangle (acb)$, where $c = (1,1)$. Observe that, for an instance with these endpoints, the initial error $\rd(c, ab)$ is roughly
equal to $2\eps$ and one intermediate point $q^{\ast}$ together with the rightmost point of the curve always suffice 
to form an $\eps$-CP set, i.e., the optimal size is no more than $2.$ Our construction will define a sequence of points $\{q_1, \ldots, q_j \}$ (ordered in increasing
$x$-coordinate) which will form the Convex Pareto set for the corresponding instance and that force the Chord algorithm to select 
\emph{all} the $q_i$'s (in order of increasing $i$), until it finds $q^{\ast} = q_j$. That is, the Chord algorithm will 
monotonically converge to the optimal point by visiting all the vertices of the instance in order (in increasing $x$-coordinate).

Let $\lambda_{ab} = 1/(2\eps)$ be the slope of $ab$. The algorithm starts by calling $\comb(\lambda_{ab})$ to find a
solution point at maximum ratio distance from the chord $ab$. Our construction guarantees that $q_1 = \comb(\lambda_{ab})$.
That is, if $\ell(q_1)$ is the line parallel to $ab$ through $q_1$, $\ell(q_1)$ supports the objective space. 
The point $q_1$ is selected on the line segment $ac$ so that $(q_1c) = (ac)/k$, i.e., it is obtained by subdividing (the length of)
$ac$ geometrically with ratio $k$ -- for an appropriate $k$ (to be specified next).

Consider the point $q_1^{\ast} = \ell(q_1) \cap bc$. 
The error of the approximation $\{a, q_1, b \}$ equals $\rd(q_1^{\ast}, q_1b)$ (note that the error to the left of $q_1$ is $0$). 
If $k \leq 2$, we are already done, since $x(q_1^{\ast}) \geq
1+\eps$, which implies $\rd(q_1^{\ast}, q_1b) \leq \eps$. On the other hand, if $k \geq 1/\eps$, we are also done since $y(q_1)
\leq 1+\eps$, hence $\rd(q_1^{\ast}, q_1b) \leq \eps$. If $\omega(1) \leq k \leq o(1/\eps)$, it is not hard to show that
$$\rd(q_1^{\ast}, q_1b) \approx  \hd (q_1^{\ast}, q_1b) = (q_1^{\ast} b) = 2 \epsilon \cdot (1-1/k)$$
Hence, after the first call to $\comb$, the error has decreased by an additive of $2\eps/k << \eps$
and the algorithm will recurse on the triangle $\triangle(q_1q_1^{\ast}b)$. 

Note that $\lambda_{q_1b} = \lambda_{ab}/k = (2\eps)^{-1}/k$. 
The algorithm proceeds by calling $\comb(\lambda_{q_1b})$ and this call will return the point $q_2$. 
Let $\ell(q_2)$ be the (supporting) line parallel to $q_1b$ through $q_2$. Similarly, $\ell(q_2)$ supports the objective space.
The point  $q_2$ is selected by repeating our ``geometric subdivision trick.'' Recall that there are no feasible points below the line $q_1 q_1^{\ast}$. 
Let $q'_2$ be the projection of $q_2$ on $ac$. We select $q_2$ on the segment $q_1 q_1^{\ast}$ so that $(q'_2c) = (q_1c)/k$. 
The error of the approximation $\{a, q_1, q_2, b \}$ equals
$\rd(q_2^{\ast}, q_2b)$, where $q_2^{\ast} = \ell(q_2) \cap bc$. If $(q'_2c) = (ac)/k^2 >> \eps$, we have that
$$\rd(q_2^{\ast}, q_2b) \approx \hd (q_2^{\ast}, q_2b)  =  (q_2^{\ast} b) \approx 2\epsilon \cdot (1-2/k),$$ i.e., after the second step of the algorithm, 
the error has decreased by another additive $2\eps/k$ and the algorithm will recurse on the triangle $\triangle(q_2q_2^{\ast}b)$. 

We can repeat this process iteratively, where (roughly) in step $i$ we select $q_i$ on the line
$q_{i-1} q_{i-1}^{\ast}$, so that the length of the projection satisfies $(q_i'c) = (q_{i-1}'c)/k$. This iterative process can
continue as long as $(q_i'c)>>\eps$. Also note that the number $j$ of iterations cannot be more than $\approx k/2$ because
$x(q_i) \approx 1 + i \cdot (2\eps/k)$ and $x(q^{\ast}) \leq 1+\eps$. Since,  $(q_i'c)  = 1/k^{i}$ it turns out that the
optimal choice of parameters is $2 j \approx k \approx \frac{\log(1/\eps)}{\log \log (1/\eps)}$.

\smallskip
We stress that the actual construction is more elaborate than
the one presented in the intuitive explanation above.
Also, to show a bi-criterion lower bound,
we need to add one more point $q_{j+1}$ so that the Chord algorithm
selects $\{q_1, \ldots, q_j \}$ until it forms a $(\mu \cdot \epsilon)$-CP, 
while the point $q_{j+1}$ (along with the rightmost point $b$) suffice to form an $\eps$-CP.

\medskip

The formal proof comes in two steps. We first analyze the Chord algorithm with respect to the horizontal distance metric. We show that
the performance ratio of the algorithm is unbounded in this setting (this statement also holds for the vertical distance by
symmetry). In particular, for every $k \in \mathbb{N}$, there exists an instance $\I_G$ (lying entirely in the unit square) so that
the Chord algorithm (applied for additive error $1/2$) has performance ratio $k$ . We then show that, for an appropriate setting of the parameters in
$\I_{G}$, we obtain the instance $\I_{LB}$ that yields the desired lower bound with respect to the ratio distance.

\smallskip

\noindent \emph{\textbf{Step 1:}} The instance $\I_G (H, L, k, j)$ lies in the triangle $\triangle (abc)$, where $a = (1,1+H)$,
$b = (1+L, 1)$ and $c = (1,1)$. The points $a$ and $b$ are (the extreme) vertices of the convex Pareto set. We introduce two
additional parameters. The first one, $k \in \mathbb{N}$, is the ratio used in the construction to geometrically subdivide the
length of the line $ac$  in every iteration. The second one, $j \in \mathbb{N}$ with $j \in [1, k-1]$, is the number of
iterations and equals the number of vertices in the instance.

We define a set of points $Q = \{ q_i \}_{i=0}^{j+2}$ ordered in increasing $x$--coordinate and decreasing $y$--coordinate. Our instance will be the convex polygonal line with vertices the points of $Q$. We
set $q_0 =a$ and $q_{j+2} =b$. The set of points $\{ q_1, \ldots, q_{j+1}  \}$ is defined recursively as follows:
\begin{enumerate}
\item The point $q_1$ has $x(q_1) = x(a)$ and $y(q_1) = y(c) + \big( y(a)- y(c) \big) / k$.

\item For $i \in [2,j+1]$ the point $q_i$ is defined as follows:
      Let $\ell(q_{i-1})$ denote the line parallel to $q_{i-2}b$ through $q_{i-1}$.
      The point $q_i$ is the point of this line with $y(q_i) = y(c) +  \big( y(q_{i-1}) - y(c) \big)/(k+i-1)$.
\end{enumerate}

The algorithm is applied on this instance with desired horizontal distance error 
$$\epsilon_L(L, k, j) \eqdef L \cdot \frac{k-1}{k+j-1}.$$ 
Also denote $$\eps'_L(L,k,j) \eqdef L \cdot \frac{k-1}{k} \cdot
\frac{j}{k+j-1} = (j/k) \cdot \eps_L.$$
Note that $\eps'_L < \eps_L$.

See Figures~\ref{chord-fig-hd1} and~\ref{chord-fig-hd2} for a graphic illustration of the worst-case instances for the Chord
algorithm. We would like to stress that the figures are not drawn to scale. In particular, in the figures below we have $H=L$,
while the actual lower bound for the ratio distance applies for $H>>L$; in particular, for $H = 2^m$ and $L=O(\epsilon)$.

\begin{figure}[h!]
\begin{center}
\epsfig{file=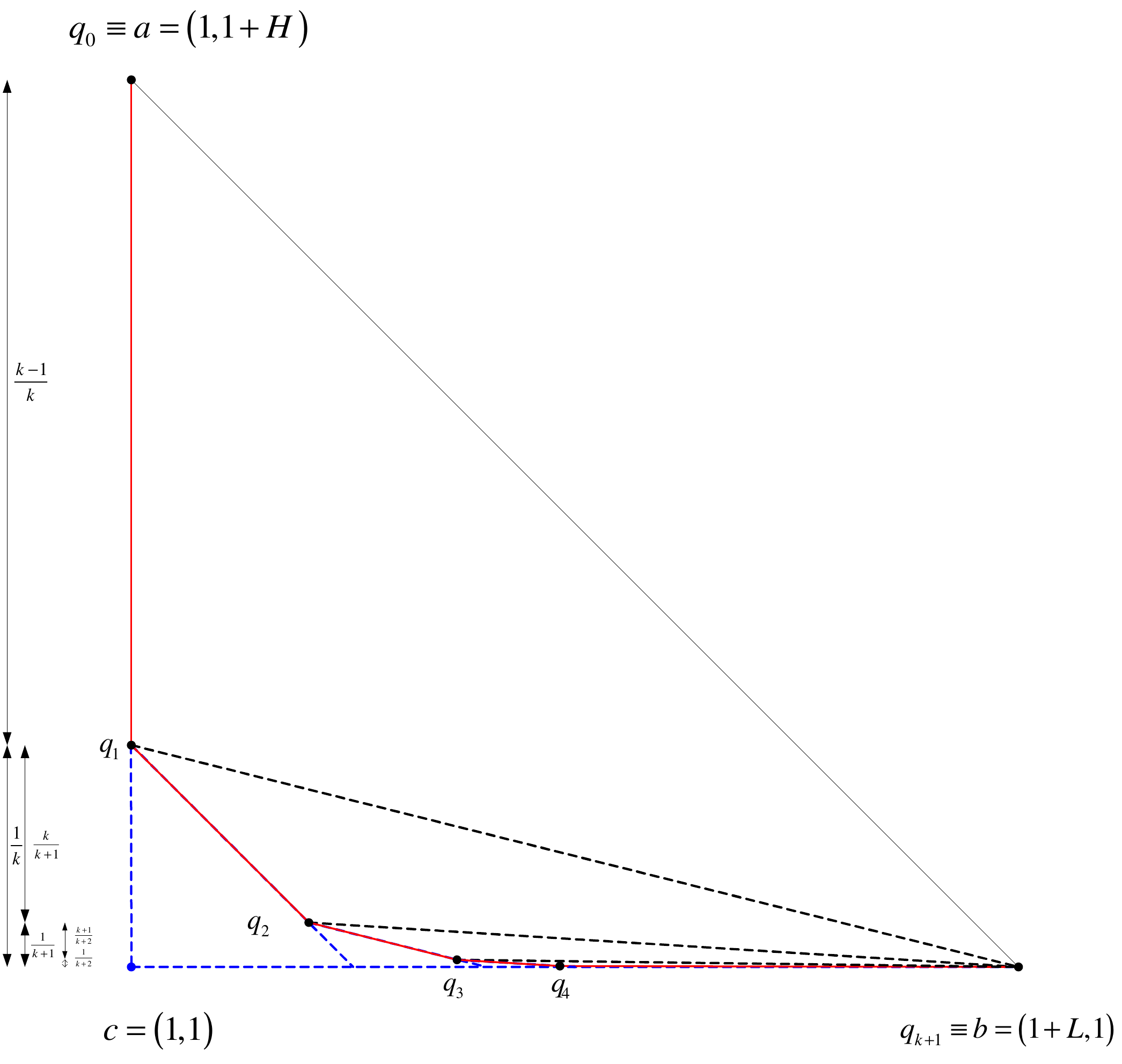,width=10cm}
\end{center}
\caption{Illustration of the lower bound. The figure depicts the case $j=k=4$. 
}
\label{chord-fig-hd1}
\end{figure}

\begin{figure}[h!]
\begin{center}
\epsfig{file=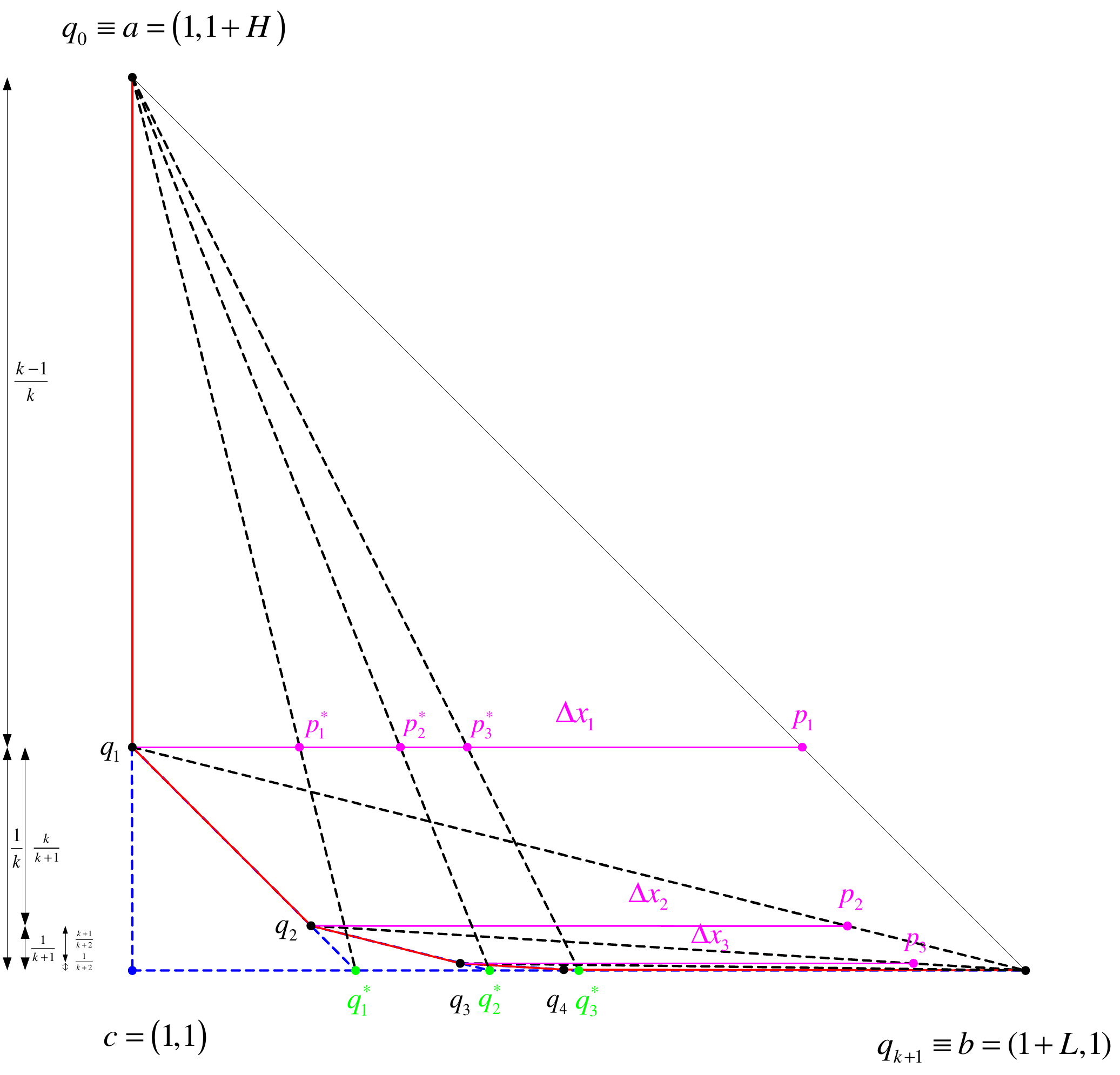,width=10cm}
\end{center}
\caption{Illustration of definitions for counterexample in Figure~\ref{chord-fig-hd1}.} \label{chord-fig-hd2}
\end{figure}


\noindent We show the following:
\begin{lemma} \label{claim:chd-hd}
The Chord algorithm applied to the instance $\I_G$ and error bound $\eps_L$ wrt horizontal distance
selects the sequence of points $\langle q_1, q_2, \ldots, q_j \rangle$, while the set $\{a, q_{j+1}, b\}$
attains error $\eps'_L < \eps_L$.
\end{lemma}
\begin{proof}
For $i \in [j-1]$, let $q_{i}^{\ast}$ be the intersection of the line $\ell
(q_i)$ -- the line parallel to $q_{i-1}b$ through $q_i$--with $bc$. The error of $\{a, q_{j+1}, b\}$ is exactly $\hd(q_1,
aq_{j+1})$. Observe that $\hd(q_1, aq_{j+1}) < \hd(q_1, aq_{j}^{\ast})$. By a simple geometric argument we obtain $\hd(q_1,
aq_{j}^{\ast}) = \eps'_L$, which yields the second statement. For the first statement, we show inductively that the recursion
tree built by the algorithm for $\I_G$ is a path of length $j-1$ and at depth $i-1$, for $i \in [j]$, the Chord subroutine
selects point $q_i$. The proof amounts to noting that the error of the approximation $\{q_1, \ldots, q_i\}$ is
$\hd(q_{i}^{\ast}, q_{i}b)$, which is $>\eps_L$ for $i<j$ and $=\eps_L$ for $i=j$.

To provide the details we need some notation. For $i \in [j]$, we denote $\Delta x_i \eqdef \hd(q_i, q_{i-1}b)$. Let $p_i$
be the $y$-projection of $q_i$ on $q_{i-1}b$, so that $\Delta x_{i} = (q_i p_i)$. Recall that $q_{i}^{\ast}$
denotes the intersection of the line $\ell (q_i)$  -- the line parallel to $q_{i-1}b$ through $q_i$--with $bc$. If $p_i^{\ast}$ is
the $y$-projection of $q_1$ on $aq_{i}^{\ast}$, we have $ \hd(q_1, aq_{i}^{\ast})= (q_1 p_i^{\ast})$. (See
Figure~\ref{chord-fig-hd2} for an illustration of these definitions.) We start with the following claim:


\begin{claim} \label{lem:hd-induction}
For all $i \in [j]$, we have $\Delta x_{i} = L \cdot (k-1)/(k+i-1)$.
\end{claim}

\begin{proof}
By induction on $i$. For the induction basis ($i=1$), we observe that the triangles $\triangle (a q_1 p_1)$ and $\triangle(a c
b)$ are similar, hence
\[ \frac{\Delta x_1}{(bc)} = \frac{(aq_1)}{(ac)} \]
which yields $$\Delta x_1 = (bc) \cdot (1-1/k) = L \cdot (k-1)/k$$ as desired.

Suppose that the claim is true for $i \in [j-1]$. We will prove it for $i+1$. We similarly exploit the similarity of the
triangles $\triangle (q_i q_i^{\ast} b)$ and $\triangle (q_i q_{i+1} p_{i+1})$, from which we get
\begin{equation}
\frac{\Delta x_{i+1}}{(q_i^{\ast} b)} = \frac{(q_iq_{i+1})}{(q_iq_i^{\ast})} = \frac{y(q_i) - y(q_{i+1})}{y(q_i)}
\label{eq:one}
\end{equation}
\noindent where the second equality follows from the collinearity of $q_i, q_{i+1}, q_i^{\ast}$. Observe that the third term in~(\ref{eq:one}) 
is equal to $(k+i-1)/(k+i)$ by construction. Now note that, because of the parallelogram
$(q_i p_i b q_i^{\ast})$, we have $(q_i^{\ast} b) = \Delta x_i$. Hence, (\ref{eq:one}) and the induction hypothesis imply
\[ \Delta x_{i+1} = \Delta x_{i} \cdot \frac{k+i-1}{k+i} = L \cdot \frac{k-1}{k+i-1} \cdot \frac{k+i-1}{k+i} =  L \cdot \frac{k-1}{k+i} \]
which completes the proof of the claim.
\end{proof}

Since $(q_i^{\ast}b) = \Delta x_{i}$ (as noted in the proof of Claim~\ref{lem:hd-induction}), it follows that for all $i \in [j-1]$ we have
\begin{equation}
x(q_i^{\ast}) = 1 + L \cdot \frac{i}{k+i-1} \label{eq:xistar}.
\end{equation}
We will start by showing that the set $\{a, q_{j+1}, b \}$ is an $\eps'_L$-convex Pareto under the horizontal distance. 
First, note that the error to the right of $q_{j+1}$, i.e., the distance of the lower envelope from $q_{j+1}b$ is in fact zero 
(since $q_{j+1} b$ is the rightmost edge of the lower envelope). It suffices to bound from above the error to its left. 
Since $aq_{j+1}$ has absolute slope larger than $ab$, the unique point (of the lower envelope) at maximum distance from $aq_{j+1}$ is $q_1$. 
Thus, we have that $\hd(q_1, aq_{j+1}) < \hd(q_1, aq_{j}^{\ast})$. From the similarity of the triangles $\triangle(caq_{j}^{\ast})$ and $\triangle (q_1 a p_{j}^{\ast})$
we get $$\hd(q_1, aq_{j}^{\ast}) = (1-1/k) \cdot (cq_{j}^{\ast}) = (1-1/k) \cdot  (x(q_{j}^{\ast})-x(c)) = L \cdot (1-1/k) \cdot
\frac{j}{k+j-1} = \eps'_L.$$ Hence, $\hd(q_1, aq_{j+1}) < \eps'_L$ as desired.

\smallskip

We now proceed to analyze the behavior of the Chord algorithm. We will show that the algorithm selects the points $q_1, q_2,
\ldots, q_j$ (in this order) till it terminates. Formally, we consider the recursion tree built by the algorithm for
the instance $\I_G$ and prove that it is a path of length $j-1$. In particular, for all $i \in [j]$, at depth $i-1$, the Chord
subroutine selects point $q_i$.

We prove the aforementioned statement by induction on the depth $d$ of the tree. Recall that the Chord algorithm initially finds
the extreme points $a$ and $b$. For $d=0$ (first recursive call), the algorithm selects a point of the lower envelope with
maximum horizontal distance from $ab$. By construction, all the points (of the lower envelope) in the line segment $q_1 q_2$
have the same (maximum) distance from $ab$ (since $q_1 q_2$ is parallel to $ab$). Hence, any of those points may be potentially
selected. Since $\comb$ is a black-box oracle, we may assume that indeed $q_1$ is
selected\symbolfootnote[2]{This simplifying assumption is only used for the sake of the exposition. We can slightly perturb the instance so that the
absolute slope of $q_1q_2$ is ``slightly'' smaller than $\lambda_{ab}$, so that the effect on the actual distances is
negligible. By doing so, the point $q_1$ will be the ``unique minimizer'' for $\comb(\lambda_{ab})$.}. The maximum
error after $q_1$ is selected equals $\hd(q_1^{\ast}, q_1b) = x(b) - x(q_1^{\ast}) = L \cdot (1-1/k) > \eps_L$. Hence, the
algorithm will not terminate after it has selected $q_1$.

For the inductive step, we assume that the recursion tree is a path up to depth $d \in [j-2]$ and the algorithm selected the
points $\{q_1, q_2, \ldots, q_{d+1} \}$ up to this depth. We analyze the algorithm at depth $d+1$. At depth $d+1$ the information available to the 
algorithm is that (i) the error to the left of $q_{d+1}$ is $0$ and (ii) the error to its right is $\hd(q_{d+1}^{\ast}, q_{d+1}b) = L \cdot
(k-1)/(k+d) > \eps_L$ (since $d \leq j-2$). Hence, the algorithm does not terminate and it calls $\comb$ to find a point between
$q_{d+1}$ and $b$ at maximum distance from $q_{d+1}b$. By construction, the points of maximum distance are those belonging to
the line $q_{d+2} q_{d+3}$ (which is parallel to $q_{d+1}b$); similarly, we can assume $q_{d+2}$ is selected. At this point of the execution 
the algorithm has the information that the error to the left of $q_{d+2}$ is $0$ and the error to its right is $\hd(q_{d+2}^{\ast}, q_{d+2}b) =
 L \cdot (k-1)/(k+d+1) > \epsilon_L$, unless $d = j-2$.
This completes the induction and the proof of Lemma~\ref{claim:chd-hd}. 
\end{proof}

\medskip

\noindent \textbf{\emph{Step 2:}} The instance $\I_{LB}$ is obtained from $\I_G(H, L, k, j)$ by appropriately setting the four relevant parameters.
In particular, 
\begin{enumerate}
\item Fix $H^{\ast} := 2^m-1$, $L^{\ast} := (\mu+1) \cdot \epsilon$,
$j^{\ast} := \Theta \big( (1/\mu) \cdot \frac{\log(H^{\ast}/\eps)}{\log \log (H^{\ast}/\eps)} \big)$
and $k^{\ast} := \mu \cdot j^{\ast} + 1.$
\item Set $\I_{LB} (\eps, m, \mu) := \I_G (H^{\ast}, L^{\ast}, k^{\ast}, j^{\ast}).$
\item Also, define ${\eps_L}^{\ast} := \eps_L(L^{\ast}, k^{\ast}, j^{\ast})$
and ${\eps'}_{L}^{\ast} := \eps'_L (L^{\ast}, k^{\ast}, j^{\ast})$.
\end{enumerate}
Observe that, under this choice of parameters,
we have ${\eps_L}^{\ast} \ge \mu \cdot \eps$ and ${\eps'_L}^{\ast} < \eps$.
Our main lemma for this step is the following:
\begin{lemma} \label{claim:chd-rd}
The Chord algorithm applied to $\I_{LB}$ and error bound ${\eps_L}^{\ast}$ 
wrt ratio distance selects (a superset of) the points
$\{ q_1, \ldots, q_{j^{\ast}/8} \}$,
while the set $\{a, q_{j^{\ast}+1}, b\}$ forms an ${\eps'_L}^{\ast}$-convex Pareto set.
\end{lemma}
\begin{proof}
The main idea of the proof is that for the particular instance under consideration,
the horizontal distance metric is a very good approximation to the ratio distance.
As a consequence, one can show that the behavior of the Chord algorithm in both metrics is similar.
The reason we ``lose'' a constant factor in the number of points
(i.e., the Chord algorithm selects $j^{\ast}/8$ points under the ratio distance as opposed to $j^{\ast}$ under the horizontal distance)
is due to the error term in the approximation between the metrics.
(The factor of $8$ is not important; any constant factor bigger than $1$ would suffice.)

We now proceed with the details. Consider the set $Q = \{ q_i\}_{i=0}^{j^{\ast}+2}$ defining the instance $\I_{LB}$. We will need 
the following lemma that quantifies the closeness of the two metrics in our setting. 
\begin{lemma} \label{lem:hd-vs-rd}
Let $a = (1, 1+H)$, $b = (1+L, 1)$ and $c = (1,1)$. 
Consider a point $s_1$ in $\triangle (abc)$ and let $\lambda$ be the absolute slope of $s_1b$. If $c'$ is the $x$-projection
of $s_1$ on $bc$, then for any point $s_2$ in $\triangle(s_1c'b)$ we have
\begin{equation} \label{eqn:hd-vs-rd}
\rd(s_2, s_1b) < \hd(s_2, s_1b) \leq \rd(s_2, s_1b)  + L^2 + L/\lambda.
\end{equation}
\end{lemma}
\begin{proof}
The proof, though elementary, requires some careful calculations.
Let $c' = (1+\delta, 1)$ be the $x$-projection of $s_1$ on the segment $bc$, that is $x(s_1) = 1+\delta$. 
Clearly, $0 \leq \delta \leq L.$
If $\lambda$ is the absolute slope of $s_1b$, we
have that $$y(s_1) = 1+\lambda \cdot (L-\delta).$$ Fix a point $s_2 = (1+\delta_x, 1+\delta_y) \in \triangle
(s_1c'b)$. We want to show that $\hd(s_2, s_1b)$, the horizontal distance of $s_2$ from $s_1b$, 
is a good approximation to the corresponding ratio distance $\rd(s_2, s_1b)$ when $\lambda$ is large. 
(See Figure~\ref{chord-fig-hrd} for an illustration.)
\begin{figure}[h!]
\begin{center}
\epsfig{file=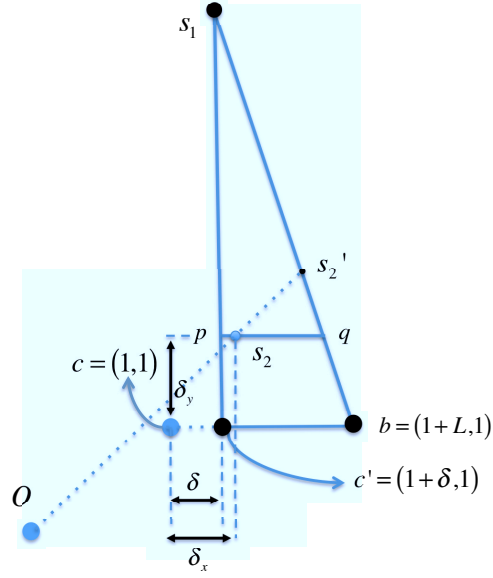,angle=90,width=15cm}
\end{center}
\vspace{-2cm}
\caption{Illustration of the relation between the horizontal and the ratio distance.} \label{chord-fig-hrd}
\end{figure}

We first calculate the horizontal distance $\hd(s_2, s_1b)$. Observe that
\[ \hd(s_2, s_1b) = (s_2q) = (pq) - (ps_2)\]
where $p$ is the $y$-projection of $s_2$ on $s_1c'$, i.e., $x(p) = x(c')$ and $y(p) = y(s_2).$
It is clear that $$(ps_2) = x(s_2) - x(p) = x(s_2) - x(c') =  \delta_x - \delta.$$ From the similarity of the triangles $\triangle (s_1 p q)$ and
$\triangle (s_1 c' b)$ it follows
\[  \frac{(pq)}{(c'b)} = \frac{(s_1p)}{(s_1c')}.\]
Since $(c'b) = L-\delta$, $(s_1 c') =\lambda \cdot (L-\delta) $ and $(s_1 p) = (s_1c') - (pc') = (s_1c') - \delta_y$ we get
\[ (pq) = (L-\delta) \cdot \Big( 1-\frac{\delta_y}{\lambda \cdot (L-\delta)} \Big).\]
Therefore,
\[ \hd(s_2, s_1b) = (L-\delta) \cdot \Big( 1-\frac{\delta_y}{\lambda \cdot (L-\delta)} \Big) - (\delta_x - \delta)
                      = L-\delta_x - \frac{\delta_y}{\lambda} . \]
The ratio distance $\mathbf{r} \eqdef \rd(s_2, s_1b)$ by definition is such that $s'_2 = (1+\mathbf{r}) \cdot s_2 \in s_1b$. We
thus get
\[ (1+\mathbf{r}) \cdot y(s_2) + \lambda (1+\mathbf{r}) \cdot x(s_2)  = y(b) + \lambda x(b) \]
or equivalently
\[  \mathbf{r}  = \frac{\lambda \big( x(b) - x(s_2)  \big)  - \big( y(s_2) - y(b) \big) }{y(s_2)+\lambda x(s_2)}. \]
\noindent Substitution yields that
\[ \mathbf{r} = \frac{\lambda (L-\delta_x) - \delta_y}{1+\delta_y + \lambda (1+\delta_x)}
     = \frac{L-\delta_x - \frac{\delta_y}{\lambda}}{1+\delta_x + \frac{1+\delta_y}{\lambda}} .\]
Observe that the numerator of the above fraction equals $\hd(s_2, s_1b)$ and the denominator is bigger than $1$. Hence, $\rd(s_2,
s_1b) < \hd(s_2, s_1b)$. For the other inequality we can write:
\begin{eqnarray}
\hd(s_2, s_1b) - \rd(s_2, s_1b) &=& \left( L -\delta_x -\frac{\delta_y}{\lambda} \right) \cdot
                                    \left( 1 - \frac{1}{1+\delta_x + \frac{1+\delta_y}{\lambda}} \right) \nonumber\\
     &=& \left( L -\delta_x -\frac{\delta_y}{\lambda} \right) \cdot
     \left( \frac{\delta_x + \frac{1}{\lambda} + \frac{\delta_y}{\lambda}}{1+\delta_x + \frac{1}{\lambda}+ \frac{\delta_y}{\lambda}} \right) \label{eq:prod}\\
     &\le& L \cdot (L + 1/\lambda) \label{eq:ineq}  = L^2 + L/\lambda
\end{eqnarray}
as desired. To obtain~(\ref{eq:ineq}), we bound each term in~(\ref{eq:prod}) separately. The first term is
clearly at most $L$. As for the second term, first note that the denominator is at least $1$. 
To bound the numerator from above, we claim that $\delta_x + \frac{\delta_y}{\lambda} \leq L$. To see this we use our assumption that 
$s_2$ lies in $\triangle(s_1c'b)$. In particular, this implies that $s_2$ lies below (or on) the line segment $s_1b$, i.e.,
$$y(s_2)+\lambda x(s_2) \leq y(b)+\lambda x(b)$$
which gives $$\delta_x + \frac{\delta_y}{\lambda} \leq \delta \leq L$$ as desired. 
This completes the proof of Lemma~\ref{lem:hd-vs-rd}.
\end{proof}

We may now proceed with the proof of Lemma~\ref{claim:chd-rd}.
By Lemma~\ref{claim:chd-hd}, the set $\{a, q_{j^{\ast}+1}, b\}$ attains horizontal distance error at most ${\eps'_L}^{\ast}$. By 
the first inequality of (\ref{eqn:hd-vs-rd}), the ratio distance error of $\{a, q_{j^{\ast}+1}, b\}$ is at most
as big, hence the second statement of Lemma~\ref{claim:chd-rd} follows.

\smallskip

By Lemma~\ref{claim:chd-hd}, the Chord algorithm under the horizontal distance metric selects all the points $\{q_1,
\ldots, q_{j^{\ast}} \}$ in order until it guarantees an $\eps_L^{\ast}$-approximation. In paricular, after the algorithm has selected
the subset $\{q_1, \ldots, q_i\}$, for $i \in [j^{\ast}]$, the horizontal approximation error is 
\begin{equation} \label{eqn:hd-error2}
\hd(q_i^{\ast}, q_ib) = L^{\ast} \cdot (k^{\ast}-1)/(k^{\ast}+i-1).
\end{equation}
We remark that the error term $E(L, \lambda) = L^2+L/\lambda$ in the RHS of~(\ref{eqn:hd-vs-rd}) leads to the ``constant factor loss'', i.e., the fact that the
Chord algorithm under the ratio distance picks $j^{\ast}/8$ (as opposed to $j^{\ast}$) points. (Also note that, since the ratio
distance is a lower bound for the horizontal distance, the Chord algorithm under the former metric will select at most $j^{\ast}$ points.)

Suppose we invoke the Chord algorithm with desired error of $0$, i.e., we want to reconstruct the lower envelope exactly. 
Then the algorithm will select the points $q_i$ in order of increasing $i$. It is also clear that the error of the approximation decreases
monotonically with the number of calls to $\comb$. Hence, to complete the proof, It suffices to show that after the algorithm has selected $\{q_1, \ldots,
q_{j^{\ast}/8} \}$, the ratio distance error will be bigger than $\eps^{\ast}_L$. To do this, we appeal to
Lemma~\ref{lem:hd-vs-rd}.

The ratio distance error of the set $\{q_1, \ldots, q_{j^{\ast}/8} \}$ is $\rd(q_{j^{\ast}/8}^{\ast},
q_{j^{\ast}/8}b)$. An application of (the second inequality of) (\ref{eqn:hd-vs-rd}) for $s_1 = q_{j^{\ast}/8}$, $s_2 =
q_{j^{\ast}/8}^{\ast}$ gives
\[ \rd(q_{j^{\ast}/8}^{\ast}, q_{j^{\ast}/8}b)
 \geq \hd(q_{j^{\ast}/8}^{\ast}, q_{j^{\ast}/8}b)  - E(L^{\ast}, \lambda_{q_{j^{\ast}/8}b}).\]
For the first term of the RHS, it follows from (\ref{eqn:hd-error2}) by substitution that
\[\hd(q_{j^{\ast}/8}^{\ast}, q_{j^{\ast}/8}b) = L^{\ast} \cdot \frac{\mu}{\mu+1/8}\]
and we similarly get $\eps_L^{\ast} =  L^{\ast} \cdot \frac{\mu}{\mu+1}.$
Hence, $$\hd(q_{j^{\ast}/8}^{\ast}, q_{j^{\ast}/8}b)  = \eps_L^{\ast}  + \Gamma,$$ 
where $\Gamma = L^{\ast} \cdot \Omega(1/\mu).$

Consider the error term $E(L^{\ast}, \lambda_{q_{j^{\ast}/8}b}) = (L^{\ast})^2 + L^{\ast} / \lambda_{q_{j^{\ast}/8} b}$. 
We will show that $E(L^{\ast}, \lambda_{q_{j^{\ast}/8}b}) < \Gamma$ which concludes the proof.
The first summand $(L^{\ast})^2 = (\mu+1)^2 \cdot \eps^2 $ is negligible compared to $ L^{\ast}/\mu$, as long as $\eps = o(1/\mu^2).$ 
To bound the second summand from above we need a lower bound on the slope $\lambda_{q_{j^{\ast}/8} b}$.

Recall (Equation~(\ref{eq:xistar}) in Lemma~\ref{lem:hd-induction}) that $x(q_i^{\ast}) = 1 + L^{\ast} \cdot i/(k^{\ast}+i-1)$. It is
also not hard to verify that $$\left| x(q_i)  - x(q_{i-1}^{\ast}) \right |  = O \left( L^{\ast}/(k^{\ast})^i \right).$$ Also recall that 
$$y(q_i) = 1+ \frac{H^{\ast}}{\littleprod_{j=1}^i (k^{\ast}+j-1)} > 1+\frac{H^{\ast}}{(2k^{\ast})^{i}}.$$ Hence, 
$$\lambda_{q_i b} = \frac{y(q_i) - y(b)}{x(b)-x(q_i)} >\left( \frac{H^{\ast}}{L^{\ast}} \right) \cdot (2k^{\ast})^{-i}.$$ 
It is straightforward to check that for the chosen values of
the parameters, $\lambda_{q_{j^{\ast}/8}b} >> \mu$, hence the second summand will also be significantly smaller than $L^{\ast}/\mu$.
In conclusion, $\rd(q_{j^{\ast}/8-1}^{\ast}, q_{j^{\ast}/8-1}b) > \eps_L^{\ast}$ and Lemma~\ref{claim:chd-rd} follows. 
\end{proof}
\noindent This also completes the proof of Theorem~\ref{thm:rd-lb-ws}. 
\end{proof}






\subsubsection{General Lower Bounds} \label{ssec:worst-lower-gen}
We can show an information--theoretic lower bound against any algorithm that uses $\comb$ as a black box
with respect to the ratio distance metric. Even though the bound 
we obtain is exponentially weaker than that attained by the Chord algorithm, 
it rules out the possibility of a constant factor approximation in this model. In particular, we show:
\begin{theorem} \label{thm:rd-lb-ws-gen}
Any algorithm (even randomized) with oracle access to a $\comb$ routine, 
has performance ratio $\Omega \left( \log m + \log \log (1/\eps) \right)$ with respect
to the ratio distance.
\end{theorem}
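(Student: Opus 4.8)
The plan is to prove the bound via Yao's minimax principle, by exhibiting a distribution over instances on which every deterministic $\comb$-based algorithm is forced to make many calls, even though the optimum is a constant. The instances are drawn from the very same construction $\I_G(H,L,k,j)$ used in Theorem~\ref{thm:rd-lb-ws}, but now I \emph{vary the number of levels} $j$ while keeping $H,L,k$ fixed. Concretely, fix $H^\ast,L^\ast,k^\ast$ exactly as in Step~2 of Theorem~\ref{thm:rd-lb-ws} (with $\mu=1$), so that $k^\ast=\Theta\!\big(\frac{m+\log(1/\eps)}{\log m+\log\log(1/\eps)}\big)$, and for an integer $D=\Theta(k^\ast)$ fixed below set $\I_{j'}:=\I_G(H^\ast,L^\ast,k^\ast,j')$ for $j'\in\{1,\dots,D\}$. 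Since the recursive definition of the vertices $q_1,q_2,\dots$ in $\I_G$ depends only on $H^\ast,L^\ast,k^\ast$ and not on $j'$, the instances $\{\I_{j'}\}$ form a \emph{nested} family: $\I_{j'}$ is the convex polygonal chain $\langle a,q_1,\dots,q_{j'+1},b\rangle$, and $\I_{j'}$ is a ``prefix'' of $\I_{j'+1}$. I put the uniform distribution on $\{\I_{j'}\}_{j'=1}^{D}$ and show that any deterministic algorithm needs $\Omega(\log D)=\Omega(\log m+\log\log(1/\eps))$ calls on average.

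The argument rests on two structural properties of this nested family. First, \textbf{comparison feedback}: for every slope $\lambda$, the point $\comb(\lambda)$ takes at most two distinct values as $j'$ ranges over $\{1,\dots,D\}$. Indeed, a query whose slope lands in the ``window'' of a vertex $q_t$ returns $q_t$ whenever $q_t$ is present (i.e.\ $j'\ge t-1$) and returns the rightmost point $b$ otherwise (for $j'\le t-2$ the curve has already flattened to the final edge $q_{j'+1}b$, whose absolute slope exceeds $\lambda$); queries at extreme slopes return $a$ or $b$ for all $j'$. Thus each query reveals only the single bit ``is $j'\ge t-1$?'', the set of instances consistent with any transcript is always an \emph{interval} of $j'$-values, and the algorithm's behaviour is a binary decision tree. (Degenerate slopes equal to an edge slope are handled by the infinitesimal perturbation used in the footnote of Theorem~\ref{thm:rd-lb-ws}, making each $\comb$ answer a unique minimizer.) Second, \textbf{decisiveness with constant optimum}: on one hand $\opt_\eps(\I_{j'})\le 3$, since the cover $\{a,q_{j'+1},b\}$ has ratio-distance error at most its horizontal error $\eps'_L(L^\ast,k^\ast,j')\le\eps$ (by the left-hand inequality of Lemma~\ref{lem:hd-vs-rd} and the computation of $\eps'_L$ in Lemma~\ref{claim:chd-hd}); on the other hand, any set $S$ that an algorithm can \emph{output} on $\I_{j'}$ consists of discovered feasible points and hence lies in $\{a,q_1,\dots,q_{j'+1},b\}$, so $\env(S)$ cannot $\eps$-cover the next vertex $q_{j'+2}$, whose horizontal ---and, by Lemma~\ref{lem:hd-vs-rd}, essentially ratio--- distance from the edge $q_{j'+1}b$ equals $\Delta x_{j'+2}=L^\ast(k^\ast-1)/(k^\ast+j'+1)>\eps$ for $j'\le D$ (cf.\ Claim~\ref{lem:hd-induction} and equation~(\ref{eqn:hd-error2})). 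Consequently a single output that is valid for $\I_{j'}$ is \emph{invalid} for $\I_{j'+1}$.

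Combining the two properties finishes the proof. At each leaf of the binary decision tree the instances reaching it form an interval $[\alpha,\beta]$ of $j'$-values, and correctness forces the leaf's output to be a valid $\eps$-convex Pareto set for every instance in this interval; but decisiveness shows that validity for the smallest index $\alpha$ precludes validity for $\alpha+1$, forcing $\beta=\alpha$, so every leaf is a singleton. A binary tree with $D$ singleton leaves is a prefix code for $D$ equally likely instances, hence its expected depth under the uniform distribution is at least $\log_2 D$. Thus the expected number of $\comb$ calls is $\ge\log_2 D=\Omega(\log m+\log\log(1/\eps))$, and since $\opt_\eps\le 3$ on the whole support, Yao's principle yields an instance on which every randomized algorithm has expected performance ratio $\Omega(\log m+\log\log(1/\eps))$.

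The main obstacle ---and the place where the doubly-logarithmic shape of the bound is forced--- is the simultaneous satisfaction of the two halves of decisiveness in the \emph{ratio} (not horizontal) metric. I must choose $D$ as a small constant fraction of $k^\ast$ (the factor $1/8$ from Theorem~\ref{thm:rd-lb-ws} works) so that the decisive gap $\Delta x_{j'+2}-\eps=\Omega(\eps)$ comfortably dominates the horizontal-versus-ratio correction term $E(L^\ast,\lambda)=(L^\ast)^2+L^\ast/\lambda$ of Lemma~\ref{lem:hd-vs-rd}; this in turn requires the relevant slopes $\lambda_{q_{j'}b}\gtrsim (H^\ast/L^\ast)(2k^\ast)^{-j'}$ to remain $\gg 1$ throughout $j'\le D$, i.e.\ $(2k^\ast)^{D}\ll H^\ast/L^\ast=\Theta(2^m/\eps)$. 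Taking logarithms, this caps $D$ at $\Theta\big(\frac{m+\log(1/\eps)}{\log(m+\log(1/\eps))}\big)=\Theta(k^\ast)$, whence $\log_2 D=\Theta(\log m+\log\log(1/\eps))$. Verifying these inequalities (and the interval/feedback claims at the window boundaries) is the only genuinely technical part, and it reuses verbatim the slope and coordinate estimates already established for Theorem~\ref{thm:rd-lb-ws}.
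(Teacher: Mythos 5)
Your construction and overall strategy coincide with the paper's own proof: the paper also takes the prefix family of the Theorem~\ref{thm:rd-lb-ws} instance (its family $\mathcal{Q}=\{I_\ell\}$, with $I_\ell=\{a,q_1,\ldots,q_\ell,b\}$, is exactly your nested family up to re-indexing), argues that each $\comb$ query preserves the interval structure of the set of consistent instances, and concludes with a decision-tree depth bound of $\Omega(\log|\mathcal{Q}|)$. Your use of Yao's principle is the standard formalization of the paper's closing remark on randomized decision trees, and your ``decisiveness'' step (validity for the shortest consistent prefix forces the output to consist of its points, which then fail to $\eps$-cover the next vertex) is a clean equivalent of the paper's argument that a correct algorithm must pin down the instance exactly. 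So this is essentially the same proof, not a different route.

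There is, however, one concretely false step: the ``comparison feedback'' claim. You assert that for $\lambda$ in the window of $q_t$, every instance with $j'\le t-2$ answers $b$, ``because the final edge $q_{j'+1}b$ has absolute slope exceeding $\lambda$.'' But the slope of the chord $q_{j'+1}b$ is a weighted average of the slopes of the edges of the full chain between $q_{j'+1}$ and $b$, hence $\lambda_{q_{j'+1}b}<\lambda_{q_{j'+1}q_{j'+2}}$. Taking $j'=t-2$, the window of $q_t$ extends up to (just below) $\lambda_{q_{t-1}q_t}$, which strictly exceeds $\lambda_{q_{t-1}b}$; so for $\lambda$ in the upper part of that window, $\comb(\lambda)$ on the instance $\I_{t-2}$ returns the vertex $q_{t-1}$, not $b$. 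Thus a single query can take \emph{three} distinct values across the family ($q_t$, $q_{t-1}$, or $b$), and it can reveal strictly more than the bit ``is $j'\ge t-1$?'' --- it may also separate $j'=t-2$ from $j'<t-2$. This is precisely why the paper's case analysis is ternary: $\comb(\lambda)$ returns $q_i$ if $\ell\ge i$, returns $q_\ell$ if $\ell=i-1$, and returns $b$ if $\ell<i-1$. The flaw is repairable without touching anything else in your argument: the middle (exact-hit) answer identifies the instance and hence corresponds to a leaf, so the consistent sets are still intervals, every internal node has at most two non-leaf children, and a tree with $D$ singleton leaves still has worst-case and expected depth $\Omega(\log D)$ under the uniform distribution. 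With the feedback claim corrected in this way, your proof goes through and matches the paper's bound.
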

\begin{proof}
Let $\mathcal{A}$ be a general algorithm with oracle access to $\comb$.  The algorithm 
is given the desired error $\eps$, and it wants to compute an $\eps$-CP set. 
To do this, it queries the $\comb$ routine on a sequence of (absolute) slopes $\{\lambda_i\}_{i=1}^k$ and
terminates when it has obtained a  ``certificate'' that the set of points $\cup_{i=1}^k \comb(\lambda_i)$ forms an $\eps$-CP set.

The queries to $\comb$ can be adaptive, i.e., the $i$-th query $\lambda_i$ of the algorithm $\mathcal{A}$ 
can depend on the information (about the input instance) the algorithm has obtained from the previous queries $\lambda_1, \ldots, \lambda_{i-1}$.
On the other hand, the adversary also specifies the input instance adaptively, based on the queries made by the algorithm.

We will define a family of instances $\mathcal{Q}$ and an error $\eps>0$ with the following properties:
\begin{enumerate}
\item[(1)] Each instance $I \in \mathcal{Q}$ has $\opt_{\eps}(I) \leq 3.$
\item[(2)] In order for an algorithm $\mathcal{A}$ to have a certificate it found an $\eps$-CP set for an (adversarially chosen) instance $I \in \mathcal{Q}$, 
it needs to make at least $\Omega\left( \log m + \log \log (1/\eps) \right)$ calls to $\comb$.          
\end{enumerate}

Our construction uses the lower bound example for the Chord algorithm from Section~\ref{ssec:worst-lower-chord} essentially as a black box. 
Consider the instance $\I_{LB}(\eps, m, \mu:=1)$ (note that for $\mu=1$ we have that $\eps_L^{\ast} = \eps$ and ${\eps'}_L^{\ast} < \eps$)
and let $Q = \{q_i\}_{i=0}^{j+1}$ be the corresponding set of points, where $q_0 =a$ and $q_{j+1} = b$ and 
$j = j^{\ast}/8 = \Omega \left( \frac{m+\log(1/\epsilon)}{\log m + \log\log(1/\epsilon)} \right)$.
Our family of input instances $\mathcal{Q}$ consists of all ``prefixes'' of $Q$, 
i.e., $\mathcal{Q} = \{ I_i \}_{i=1}^{j}$, where $I_i = \{  a\equiv q_0, q_1, \ldots, q_i,  q_{j+1} \equiv b\}$, $i \in [j]$.

By the properties of $Q$, each $I_i$ defines a convex monotone decreasing polygonal curve, 
i.e., all its points are vertices of the corresponding convex Pareto. Moreover, the set $\{a, q_i, b\}$
is an $\eps_i$-CP set for $I_i$, where $\eps_i = \rd(q_1, aq_i)$. Note that $\eps_i < \eps_{i+1}$, $i \in [j]$, and that 
$$\eps_{j+1} <  \hd (q_1, aq_{j+1}) <  \hd (q_1, aq^{\ast}_j) \leq {\eps'}_L^{\ast} < \eps$$
where the first inequality holds from (\ref{eqn:hd-vs-rd}), and the rest follow by construction.
This proves  property $(1)$ above.

We now proceed to prove $(2)$. We claim that, given an arbitrary (unknown) instance $I \in \mathcal{Q}$, 
in order for an algorithm $\mathcal{A}$ to have a certificate it discovered an $\eps$-CP set,  
$\mathcal{A}$ must uniquely identify the instance $I$. In turn, this task
requires $\Omega (\log |\mathcal{Q}|) = \Omega (\log j)$ $\comb$ calls.

Indeed, consider an unknown instance $I \in \mathcal{Q}$ and let $q_i$ be the rightmost point of $I$ (excluding $b$) the algorithm $\mathcal{A}$ has discovered 
up to the current points of its execution. At this point, the information available to the algorithm is that $I \in \{I_{\ell}\}_{\ell \geq i}$.
Hence, the error the algorithm can guarantee for its current approximation is 
$$\rd(q_{i+1}, q_ib) \geq \hd (q_{i+1}, q_ib) -  E(L^{\ast}, \lambda_{q_{i}b}) >\eps_L^{\ast} \geq \eps.$$
The first inequality above follows from Lemma~\ref{lem:hd-vs-rd}. Also note that the term in the RHS 
is minimized for $i=j$ and (by the same analysis as in Lemma~\ref{claim:chd-rd}) 
the minimum value is bigger than $\eps_L^{\ast}$.

It remains to show that an adversary can force any algorithm to make at least $\Omega (\log |\mathcal{Q}|)$ queries to $\comb$
until it has identified an unknown instance of $\mathcal{Q}$. Clearly, identifying an unknown instance $I \in \mathcal{Q}$, i.e.,
finding the index $\ell$ such that $I = I_{\ell}$, is equivalent to identifying $q_{\ell}$ -- the rightmost point of $I$ to the left of $b$.
First, we can assume the algorithm is given the extreme points $a, b$ beforehand.  
A general algorithm $\mathcal{A}$ is allowed to query the $\comb$ routine for any slope $\lambda \in [0, +\infty)$.
Suppose that $I = I_{\ell}$. Then, for $\lambda \in \Lambda_i := [\lambda_{q_{i-1}q_i}, \lambda_{q_i q_{i+1}})$, the $\comb$ routine returns: (i) $q_i$ if $\ell \geq i$, 
(ii) $q_{\ell}$ if $\ell = i-1$, and (iii) $b$ if $\ell < i-1$.\symbolfootnote[2]{Strictly speaking, if $\ell \geq i$ and $\lambda= \lambda_{q_{i-1}q_i}$, the $\comb$ routine can return any point of the edge $q_{i-1} q_i$. However, 
if $\comb( \lambda_{q_{i-1}q_i})$ returns $q_i$ it can only help the algorithm (for our class of instances). Hence, we can make this assumption 
for the purposes of our lower bound.}. That is, the information obtained from a query $\lambda \in \Lambda_i$ is whether $\ell \geq i$, $\ell = i-1$ or $\ell < i-1$.
So, for our class of instances, a general deterministic algorithm $\mathcal{A}$ is equivalent to a ternary decision tree with the corresponding structure. The tree has $|\mathcal{Q}|$ many leaf nodes and there are at most $2^{d-1}$ leaves at depth $d$. Every internal node of the tree corresponds to a query to the $\comb$ routine, hence the depth measures the worst-case number of queries made by the algorithm. It is straightforward that any such tree has depth $\Omega (\log |\mathcal{Q}|)$ and the theorem follows. (If we allow randomization, the algorithm is a randomized decision tree. The expected depth of any such tree remains $\Omega (\log |\mathcal{Q}|)$ 
which yields the theorem for randomized algorithms as well.)
\end{proof}


We next show that, for the horizontal distance metric (or vertical distance by symmetry), any algorithm in our model
has an unbounded performance ratio, even for instances that lie in the unit square and for desired error $1/2$. 
\begin{theorem} \label{thm:hd-lb-gen-ws}
Any algorithm with oracle access to a $\comb$ routine has an unbounded performance ratio with respect to the horizontal distance, 
even on instances that lie in the unit square and for approximation error $1/2$.
\end{theorem}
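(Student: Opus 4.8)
The plan is to recycle the information-theoretic argument of Theorem~\ref{thm:rd-lb-ws-gen}, but to instantiate it for the horizontal distance, where the key new phenomenon is that the ``hard'' construction can be kept inside the unit square at a \emph{fixed} error $1/2$ while the number of hidden vertices grows without bound. In the ratio-distance case the number $j$ of hidden vertices is pinned to $\Theta\big((m+\log(1/\eps))/(\log m + \log\log(1/\eps))\big)$, so the ratio there saturates at $\Omega(\log m + \log\log(1/\eps))$; for the horizontal distance in the unit square at error $1/2$, by contrast, $j$ becomes a \emph{free} parameter, so driving $j \to \infty$ will make the ratio diverge.

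First I would instantiate the geometric family $\I_G(H, L, k, j)$ of Lemma~\ref{claim:chd-hd} with $H = L = 1$, so that the whole instance lies in $\triangle(abc) \subseteq [1,2]^2$ for \emph{every} $j$, and choose $k = j+1$. With this choice the target horizontal error is $\eps_L = L(k-1)/(k+j-1) = 1/2$ exactly, while $\eps'_L = (j/(j+1)) \cdot (1/2) < 1/2$. This produces vertices $q_0 = a, q_1, \ldots, q_{j+1}, q_{j+2} = b$, all on the convex Pareto set. As in Theorem~\ref{thm:rd-lb-ws-gen}, I take the prefix family $\mathcal{Q} = \{I_\ell\}_\ell$, where $I_\ell = \{a, q_1, \ldots, q_\ell, b\}$ is the convex chain on the first $\ell$ hidden vertices. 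Each $I_\ell$ satisfies $\opt_{1/2}(I_\ell) \le 3$: the three points $\{a, q_\ell, b\}$ achieve horizontal error $\hd(q_1, a q_\ell)$, which (since $x(q_1) = x(a)$ and the chords $aq_\ell$ flatten as $\ell$ grows) is monotone in $\ell$ and bounded above by $\hd(q_1, a q_{j+1}) < \eps'_L < 1/2$, the bound already established in Lemma~\ref{claim:chd-hd}.

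Next I would run the adversary argument essentially verbatim from Theorem~\ref{thm:rd-lb-ws-gen}. A $\comb$ query with slope $\lambda \in \Lambda_i := [\lambda_{q_{i-1} q_i}, \lambda_{q_i q_{i+1}})$ reveals only the ternary comparison of the hidden index $\ell$ against $i$, so a deterministic algorithm is a ternary decision tree whose leaves are the $\Theta(j)$ instances of $\mathcal{Q}$ requiring distinct certificates; such a tree has depth $\Omega(\log j)$, and randomization is handled by the same averaging argument. The certification step is the horizontal analogue of the one in Theorem~\ref{thm:rd-lb-ws-gen}: if $q_i$ is the rightmost interior vertex discovered so far and the instance may still extend past it, the worst point of the hidden portion relative to the chord $q_i b$ is $q_{i+1}$, so the guaranteed error of the current approximation $\{a, \ldots, q_i, b\}$ is $\hd(q_{i+1}, q_i b) = \Delta x_{i+1} = (k-1)/(k+i)$, which by Claim~\ref{lem:hd-induction} exceeds $1/2$ as long as $i < j-1$. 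Hence no certificate is possible before the algorithm has driven its rightmost discovered vertex deep into the chain, forcing $\Omega(\log j)$ calls while $\opt_{1/2} \le 3$.

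The main obstacle is the error bookkeeping at the boundary. Because the target error equals $1/2$ exactly, I must check that the guaranteed error $\Delta x_{i+1}$ is \emph{strictly} above $1/2$ throughout the range where identification is still required, and that the optimum three-point error $\hd(q_1, a q_\ell)$ stays \emph{strictly} below $1/2$; both are horizontal-distance specializations of the similar-triangle identities of Claim~\ref{lem:hd-induction}, but the few boundary indices (where $\Delta x_{i+1} = 1/2$ exactly, near $i = j-1$) must be excised from the family or absorbed by a negligible perturbation, so that the $\Theta(j)$ surviving instances genuinely force $\Omega(\log j)$ queries. Once this is in place, taking $j \to \infty$ while holding the instance in $[1,2]^2$ and the error at $1/2$ gives a performance ratio $\Omega(\log j) \to \infty$, establishing the claim; in fact the same family shows the Chord algorithm alone already suffers the much larger ratio $\Omega(j)$, since by Lemma~\ref{claim:chd-hd} it selects all of $q_1, \ldots, q_j$.
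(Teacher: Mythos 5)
Your proposal is correct, but it proves the theorem by a genuinely different route than the paper, and the two differ sharply in quantitative strength. You fix an \emph{oblivious} family of instances in advance (the prefixes of the Chord worst-case instance $\I_G(1,1,j+1,j)$) and then run the identification/decision-tree argument of Theorem~\ref{thm:rd-lb-ws-gen}; this is sound, and your flagged boundary issue is real but easily dispatched -- e.g., restrict the family to $I_\ell$ with $\ell \le j-2$, so that whenever at least two instances remain consistent the certifiable error is $\hd(q_{i+1}, q_i b) = j/(j+i+1) > 1/2$ strictly (since $i \le j-2$), while $\opt_{1/2}(I_\ell) \le 3$ with error strictly below $1/2$; note also that the ternary answer structure of $\comb$ carries over verbatim because the oracle's behavior depends only on the geometry, not on the metric used for termination. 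The paper instead uses an \emph{adaptive} adversary: the instance is built online, one vertex per query, with $y(q_{i+1})$ chosen as a function of the queried slope $\lambda_{i+1}$ (namely $\min\{\lambda_{i+1},\lambda_{q_i b}\}/(2k)$) so that each query advances the certified horizontal progress $(cq_i^{\ast})$ by at most $1/(2k)$; after $k-1$ queries the certifiable error is still $1/2 + 1/(2k) > 1/2$ while $3$ queries suffice on the realized instance. The payoff is a \emph{linear} lower bound $\Omega(k)$ on the number of calls, versus your $\Omega(\log j)$. This gap is not an artifact of your write-up: for any fixed nested-prefix family, binary search over the ternary answers identifies the instance in $O(\log j)$ calls, so the oblivious-family approach provably cannot exceed a logarithmic bound -- adaptivity of the adversary is what buys linearity. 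Since the theorem only asserts an unbounded ratio, both arguments suffice; yours is shorter because it recycles Theorem~\ref{thm:rd-lb-ws-gen} and the exact identities of Claim~\ref{lem:hd-induction}, while the paper's is self-contained and exponentially stronger.
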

\begin{proof}
Let $\mathcal{A}$ be an algorithm that, given the desired error $\eps$, computes an $\eps$-approximation with respect
to the horizontal distance. Fix $k \in \mathbb{Z}_+$. We show that an adaptive adversary can force the algorithm to make $\Omega(k)$ queries to $\comb$
even when $O(1)$ queries suffice. That is, the performance ratio of the algorithm is $\Omega(k)$; since $k$ can be arbitrary the theorem follows.

Our family of (adversarially constructed) instances all lie in the unit square and are obtained by a modification of the lower bound
instance $\I_{\mathcal{G}} (H:=1, L:=1, 2k, j:=k)$ for Chord under the horizontal distance (see Step 1 in Section~\ref{ssec:worst-lower-chord}). 
Since the horizontal distance is invariant under translation, we can shift our instances so that 
the points $a=(0,1)$ and $b=(1,0)$ are the leftmost and rightmost points of the convex Pareto set. After this shift, the point $c$ is identified with the origin.

Consider the initial triangle $\triangle(acb)$, where $a = (0,1)$, $b = (1,0)$ and $c =(0,0)$, and let $\lambda_1 >0$ be the first query made by the algorithm.
Given the value of $\lambda_1$, the adversary adds a vertex $q_1$ to the instance so that $q_1 = \comb(\lambda_1)$. The strategy
of the adversary is the following: The point $q_1$ belongs to $ac$, i.e., $x(q_1) =0$. 
If $\lambda_1 \geq \lambda_{ab} = 1$, then the point $q_1$ has $y(q_1) = 1/(2k).$ Otherwise, $y(q_1) = \lambda_1/(2k)$.
Let $\ell(q_1)$ be the line with slope $\lambda_1$ through $q_1$ and $q_1^{\ast}$ be its intersection with $cb$. The current information available
to the algorithm is that the point $q_1$ is a feasible point and that there are no points below the line $\ell(q_1)$.
Hence, the horizontal distance error it can certify is $\hd(q_1^{\ast}, q_1b) = (q_1^{\ast}b) = 1- (cq_1^{\ast}).$
On the other hand, if the true convex Pareto set was
$\cp_1 = \{a, q_1, q_1^{\ast}, b\}$, the set $ \{a, q_1^{\ast}, b\}$ would attain error $\hd(q_1, aq_1^{\ast}) < (cq_1^{\ast}).$
Note that $(cq_1^{\ast}) = y(q_1) / \lambda_1 \leq 1/(2k).$

After its first query, the algorithm knows that the error to the left of $q_1$ is $0$, hence it needs to focus on the triangle
$\triangle(q_1q_1^{\ast}b)$. Therefore, it is no loss of generality to assume that $\lambda_2 < \lambda_{q_1q_1^{\ast}} = \lambda_1$. 
Given $\lambda_2$, the adversary adds a vertex $q_2$ to the instance so that $q_2 = \comb(\lambda_2)$. 
Its strategy is to place $q_2$ on $q_1q_1^{\ast}$ and to set $y(q_2) = \lambda_{q_1b} / (2k)$ if $\lambda_2 \geq \lambda_{q_1b}$
and $y(q_2) = \lambda_2/(2k)$ otherwise. Let $\ell(q_2)$ be the line with slope $\lambda_2$ through $q_2$ and $q_2^{\ast}$ be its intersection with $cb$.
The information available to the algorithm after its second query is that the point $q_2$ is a feasible point and that there are no points below the line $\ell(q_2)$.
Hence, the horizontal distance error it can certify is $\hd(q_2^{\ast}, q_2b) = (q_2^{\ast}b) = 1- (cq_2^{\ast}).$
On the other hand, if the true convex Pareto set was
$\cp_2 = \{a, q_1, q_2, q_2^{\ast}, b\}$, the set $ \{a, q_2^{\ast}, b\}$ would attain error $\hd(q_1, aq_2^{\ast}) < (cq_2^{\ast}).$
Similarly, $(cq_2^{\ast}) = (cq_1^{\ast}) + (q_1^{\ast}q_2^{\ast}) \leq 1/(2k) + y(q_2) / \lambda_2 \leq 2/(2k).$
\begin{figure} 
\begin{center}
\epsfig{file=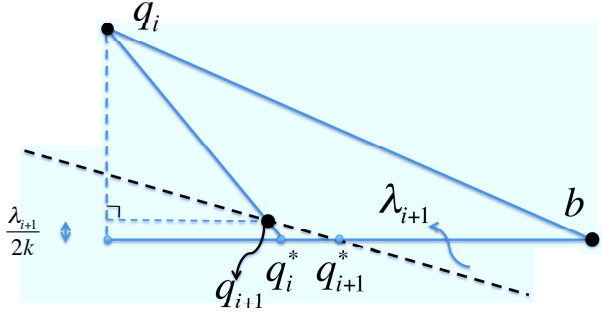, angle=90, width=12cm}
\end{center}
\vspace{-2cm}
\caption{Illustration of general lower bound for horizontal distance.} \label{fig:gen-lb-hd}
\end{figure}
The adversary argument continues by induction on $i$.
The induction hypothesis is the following: After its $i$-th query, the algorithm has computed a set of points $\{q_1, \ldots, q_i\}$, $q_j = \comb(\lambda_j)$, $j \in [i]$,
where $\lambda_j$  is the $j$-th query and $\lambda_j > \lambda_{j+1}$. 
(The $q_j$'s are vertices of the convex Pareto set of the corresponding instance ordered left to right). Let $\ell(q_i)$ be the line 
with slope $\lambda_i$ through $q_i$ and $q_i^{\ast}$ be its intersection with $cb$. Then $(cq_i^{\ast}) = x(q_i^{\ast}) \leq i/(2k).$

We now prove the induction step. We start by noting that the error to the left of $q_i$ is zero,
so the algorithm needs to focus on the triangle $\triangle(q_iq_i^{\ast}b)$; this implies that (without loss of generality) 
$\lambda_{i+1} < \lambda_{q_iq_i^{\ast}} = \lambda_i$.  Given $\lambda_{i+1}$, the adversary adds a vertex $q_{i+1}$ such that $q_{i+1} = \comb(\lambda_{i+1})$. 
Similarly, the strategy of the adversary is to place $q_{i+1}$ on $q_iq_i^{\ast}$ and to set $y(q_{i+1}) = \lambda_{q_ib} / (2k)$ if $\lambda_{i+1} \geq \lambda_{q_ib}$
and $y(q_{i+1}) = \lambda_{i+1}/(2k)$ otherwise. Let $\ell(q_{i+1})$ be the line with slope $\lambda_{i+1}$ through $q_{i+1}$ and $q_i^{\ast}$ be its intersection with $cb$.
The information available to the algorithm after its $(i+1)$-th query is that the point $q_{i+1}$ is a feasible point and that there are no points below the line $\ell(q_{i+1})$.
Hence, the horizontal distance error it can certify is $\hd(q_{i+1}^{\ast}, q_{i+1}b) = (q_{i+1}^{\ast}b) = 1- (cq_{i+1}^{\ast}).$
On the other hand, if the true convex Pareto set was
$\cp_{i+1} = \{a, q_1, q_2, \ldots, q_{i+1}, q_{i+1}^{\ast}, b\}$, the set $ \{a, q_{i+1}^{\ast}, b\}$ would attain error $\hd(q_1, aq_{i+1}^{\ast}) < (cq_{i+1}^{\ast})$.
Now note that $$(cq_{i+1}^{\ast}) = (cq_i^{\ast}) + (q_i^{\ast}q_{i+1}^{\ast}) \leq i/(2k) + y(q_{i+1}) / \lambda_{i+1} \leq (i+1)/(2k)$$
where the second inequality uses the inductive hypothesis and the definition of $q_{i+1}$.
(See Figure~\ref{fig:gen-lb-hd} for an illustration; the figure depicts the case that $\lambda_{i+1} < \lambda_{q_ib}$.)
This completes the induction.

The overall adversary argument is obtained from the above construction for $i=k-1$. That is, the adversarially constructed instance is the set 
$\cp_{k-1} = \{a, q_1, q_2, \ldots, q_{k-1}, q_{k-1}^{\ast}, b\}$, where $q_i = \comb(\lambda_i)$ (recall that $\lambda_i$ is the $i$-th query). 
The set $\{a, q_{k-1}^{\ast}, b\}$ has error $$\hd(q_1, aq_{k-1}^{\ast}) < (cq_{k-1}^{\ast}) \leq 1/2 - 1/(2k) < 1/2,$$
while the algorithm can certify error $$\hd(q_{k-1}^{\ast}, q_{k-1}b) = (q_{k-1}^{\ast}b) = 1- (cq_{k-1}^{\ast}) = 1/2+1/(2k) > 1/2.$$
Thus, after $k-1$ steps, the error of the algorithm remains more than $1/2$, while  $3$ queries suffice to attain error $<1/2$. 
This completes the proof of Theorem~\ref{thm:hd-lb-gen-ws}.
\end{proof}



\subsection{Upper Bound} \label{ssec:worst-upper}
In this section we establish the upper bound statement of Theorem~\ref{thm:worst}, i.e., we show 
that the performance ratio of the Chord algorithm for the ratio distance is $O \big(
\frac{m+\log(1/\epsilon)}{\log m + \log\log(1/\epsilon)} \big)$. For the sake of the exposition, we start by showing the
slightly weaker upper bound of $O(m + \log (1/\eps))$. The proof of the asymptotically tight upper bound is more involved and
builds on the understanding obtained from the simpler argument presented first.

\medskip

\subsubsection{An $O(m + \log (1/\eps))$ Upper Bound} \label{sssec:worst-upper-simple}

\smallskip

Before we proceed with the argument, some comments are in order. Perhaps the most natural approach to prove an upper bound
would be to argue that the error of the approximation constructed by the Chord algorithm decreases substantially (say by a
constant factor) in every iteration (subdivision) or after an appropriately defined ``epoch'' of a few iterations. This, if
true, would yield the desired result -- since the initial error cannot be more than $2^{O(m)}$. Unfortunately, such an approach
badly fails, as implied by the construction of Theorem~\ref{thm:rd-lb-ws}. Recall that, in the simplest setting of that
construction, the initial error is $2\eps$ and decreases by an additive $2\eps/k$ in every iteration, where $k  = \Omega \left( \log
(1/\eps)/ \log \log (1/\eps) \right)$. Hence, the error decreases by a sub-constant factor in every iteration. In fact, we note that
such an argument cannot hold for \emph{any} algorithm in our setting (i.e., given oracle access to $\comb$), as follows from our general lower bound
(Theorem~\ref{thm:rd-lb-ws-gen})\symbolfootnote[2]{In~\cite{RF} the authors -- in essentially the same model as ours -- propose
a variant of the Chord algorithm (that appropriately subdivides the current triangle into three sub-problems). They claim
(Lemma~3 in~\cite{RF}) that the error reduces by a factor of $2$ in every such subdivision. However, their proof is incorrect.
In fact, our counterexample from Section~\ref{ssec:worst-lower} implies the same lower bound for their proposed heuristic as
for the Chord algorithm.}.

Our approach is somewhat indirect. We prove that the \emph{area} between the (implicitly constructed) ``upper'' and ``lower'' approximation
decreases by a constant factor in every iteration of the algorithm (Lemma~\ref{lem:area-half}). This statement can be viewed as a ``potential
function type argument.'' To obtain an upper bound on the performance ratio, one additionally needs to relate the area to the
ratio distance. Indeed, we show that, when the area (between the upper approximation and the lower approximation) has
become ``small enough'' (roughly at most $\eps^2/2^{2m}$), the error of the approximation (ratio distance of the lower
approximation from the upper approximation) is at most $\eps$ (Lemma~\ref{lemma:small-area-means-done}). We combine the above with 
a simple charging argument (Lemma~\ref{lem:opt-lb}) to get the desired performance guarantee.
Formally, we prove:

\begin{theorem}\label{thm:ws-ub-area}
Let $T_1$ be the triangle at the root of the Chord algorithm's recursion tree,
let $S(T_1)$ be its area, and denote $\alpha \eqdef \min \{ x(q), y(q) \mid
q \in T_1 \}$. The algorithm finds an $\eps$-CP set after $O\left(\log \left(S(T_1)/S_0\right)\right) \cdot
\opt_{\eps}$ calls to $\comb$, where $S_0 \eqdef \eps^2 \cdot \alpha^2$.
\end{theorem}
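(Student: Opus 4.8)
The plan is to identify the number of $\comb$ calls with the number of nodes of the Chord recursion tree (each node performs exactly one call, by the convention that a triangle with $\rd(s,lr)\le\eps$ contributes no node), and then to bound this node count by the \emph{product} of the tree's depth and the maximum number of nodes occurring at any single level. The depth will be controlled by an area-based potential argument, and the per-level width by a charging argument against $\opt_\eps$; the crux is that these two quantities multiply, rather than the depth entering exponentially.

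First I would bound the depth $D$. By Lemma~\ref{lem:area-half} the two children produced when a triangle $\triangle(lsr)$ is subdivided have combined area at most half that of $\triangle(lsr)$, so in particular each child has area at most $\tfrac12 S(\triangle(lsr))$; consequently a node at depth $d$ has area at most $S(T_1)/2^d$. On the other hand, Lemma~\ref{lemma:small-area-means-done} shows that as soon as a triangle's area drops to $S_0=\eps^2\alpha^2$ its potential error $\rd(s,lr)$ is already at most $\eps$, so the routine terminates without calling $\comb$ and no node is created. Hence every node has area strictly larger than $S_0$, which forces $S_0<S(T_1)/2^d$ and therefore $d<\log_2(S(T_1)/S_0)$, giving $D=O(\log(S(T_1)/S_0))$.

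Next I would bound the number of nodes at a fixed level. It is cleanest to count the \emph{internal} nodes (those that actually subdivide) and the leaves separately: since every internal node has one or two children while leaves have none, the elementary tree identity shows that the number of leaves is at most the number of internal nodes plus one, so the total node count is at most twice the number of internal nodes plus one. An internal node is a triangle $\triangle(lsr)$ whose computed point $q=\comb(\lambda_{lr})$ satisfies $\rd(q,lr)>\eps$, i.e.\ the chord $lr$ fails to $\eps$-cover the sub-arc $\env(lr)$; by the charging argument of Lemma~\ref{lem:opt-lb} the lower envelope of any optimal $\eps$-convex Pareto set must then have a vertex whose $x$-coordinate lies strictly inside $(x(l),x(r))$. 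Since the intervals attached to the nodes at a common level have pairwise disjoint interiors, distinct internal nodes of that level are charged to distinct optimal vertices, so there are at most $\opt_\eps$ of them. Summing over the $D+1$ levels bounds the internal nodes by $O(D)\cdot\opt_\eps$, hence all nodes, and thus all $\comb$ calls, by $O(D)\cdot\opt_\eps=O(\log(S(T_1)/S_0))\cdot\opt_\eps$; that the output is a genuine $\eps$-convex Pareto set upon termination is Lemma~\ref{lem:chord-finds-eps-convex}.

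The step I expect to be the main obstacle is the charging argument (Lemma~\ref{lem:opt-lb}): one must upgrade the purely local failure ``$lr$ does not $\eps$-cover $\env(lr)$'' into the global conclusion that an optimal chain is forced to bend strictly inside $(x(l),x(r))$, using convexity and monotonicity of the envelope together with the ratio-distance covering condition. The pitfall to avoid is the naive estimate ``node count $\le 2^{D}$,'' which is exponentially too weak; the entire gain comes from the per-level width being governed by $\opt_\eps$ rather than by the branching of the tree. A smaller but necessary piece of bookkeeping is distinguishing subdividing nodes whose two sub-triangles already have potential error $\le\eps$ (and hence spawn no further node) from genuine leaves, which is precisely what licenses reducing the total count to the internal-node count.
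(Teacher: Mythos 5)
Your proposal is correct and follows essentially the same route as the paper: area shrinkage (Lemma~\ref{lem:area-half}) combined with the small-area criterion (Lemma~\ref{lemma:small-area-means-done}) bounds the recursion depth by $O\left(\log\left(S(T_1)/S_0\right)\right)$, correctness is Lemma~\ref{lem:chord-finds-eps-convex}, and a disjointness-based charging of internal nodes against distinct optimal points yields the multiplicative bound $O\left(\log\left(S(T_1)/S_0\right)\right)\cdot \opt_{\eps}$. The only cosmetic difference is in the bookkeeping: you charge all internal nodes level-by-level (each level being an antichain of disjoint triangles), whereas the paper (Lemma~\ref{lem:opt-lb}) charges only the lowest internal nodes and counts every other internal node as an ancestor of one of them; both give the same $O(\mathrm{depth})\cdot\opt_{\eps}$ count.
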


\noindent The claimed upper bound on the performance ratio follows from the previous theorem, since $T_1 \subseteq [2^{-m},
2^m]^2$, which implies $S(T_1) \leq 2^{2m}$ and $\alpha \geq 2^{-m}$.

\smallskip

To prove Theorem~\ref{thm:ws-ub-area} we will need a few lemmas. 
We start by proving correctness, i.e., we show that, upon termination, the algorithm computes an $\eps$-CP set for $\I$. This
statement may be quite intuitive, but it requires a proof. The following claim formally states some basic
properties of the algorithm:

\begin{claim} \label{clm:sandwich}
Let $T_i = \triangle(a_ib_ic_i)$ be the triangle processed by the Chord algorithm at some recursive step. Then the following
conditions are satisfied: (i) $a_i, b_i \in \I \cap \env(\I)$ ; in particular, $a_i = \comb(\lambda_{a_ic_i})$ and $b_i =
\comb(\lambda_{b_ic_i})$, (ii) $x(a_i) \leq x(c_i) \leq x(b_i)$, $y(a_i) \geq y(c_i) \geq y(b_i)$ and $c_i$ lies below the line
$a_ib_i$, and (iii) $\env(a_ib_i) \subseteq T_i$.
\end{claim}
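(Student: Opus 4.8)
The plan is to prove all three conditions simultaneously by induction on the depth of the recursive call in the Chord recursion tree, matching the algorithm's labels $l, r, s$ to $a_i, b_i, c_i$ respectively. For the base case, the root triangle $\triangle(acb)$ has $a = \comb(+\infty)$, $b = \comb(0)$ and $c = (x(a), y(b))$. Here (i) holds because $ac$ is vertical (slope $+\infty$), so $\lambda_{ac} = +\infty$ and $a = \comb(+\infty)$, while $bc$ is horizontal so $\lambda_{bc} = 0$ and $b = \comb(0)$; both are on $\env(\I)$ as the extreme points. Condition (ii) is immediate from $x(c) = x(a) \le x(b)$ and $y(c) = y(b) \le y(a)$, which also places $c$ at the lower-left corner below the chord $ab$. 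Finally (iii) is the standard ``sandwiching'' observation already made in the text: $\env(\I)$ is convex and monotone decreasing, hence lies below its chord $ab$ and within the coordinate box $[x(a),x(b)]\times[y(b),y(a)]$, which is exactly $\triangle(acb)$.

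For the inductive step, I would assume the three conditions for a triangle $T_i = \triangle(lsr)$ that does not terminate, so the algorithm computes $q = \comb(\lambda_{lr})$ and recurses on $\triangle(lqs_l)$ and $\triangle(qrs_r)$ with $s_l = ls \cap \ell(q)$ and $s_r = rs \cap \ell(q)$. Part (i) for the children is the cleanest: $q \in \I \cap \env(\I)$ since $\comb$ with $\delta=0$ returns a point of the lower envelope. For the left child one needs $l = \comb(\lambda_{ls_l})$ and $q = \comb(\lambda_{qs_l})$. Since $s_l$ lies on segment $ls$, the points $l, s_l, s$ are collinear and $\lambda_{ls_l} = \lambda_{ls}$, so the first identity follows from the inductive hypothesis $l = \comb(\lambda_{ls})$; since $s_l \in \ell(q)$ and $\ell(q) \parallel lr$, we have $\lambda_{qs_l} = \lambda_{lr}$, giving $q = \comb(\lambda_{qs_l})$. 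The right child is symmetric.

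For (ii) and (iii) I would first record the key geometric fact driving everything: by the inductive hypothesis (iii), $q \in \env(lr) \subseteq \triangle(lsr)$, and since $q$ lies strictly below $lr$ (the termination test failed), the line $\ell(q)$ is parallel to $lr$ at a level strictly between $lr$ and the apex $s$; hence $\ell(q)$ meets both sides $ls$ and $rs$, so $s_l$ and $s_r$ are well defined and lie strictly between the respective endpoints, and $q$ lies on $\ell(q)$ between $s_l$ and $s_r$. Monotonicity of the coordinates along the edge $ls$ and along $\ell(q)$ then yields $x(l) \le x(s_l) \le x(q)$ and $y(l) \ge y(s_l) \ge y(q)$, which is (ii) for the left child (the right child being symmetric); that $s_l$ lies below $lq$ follows because $\ell(q)$ lies below $lr$ and hence below the chord $lq$. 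For (iii), $\env(lq)$ is the sub-arc of $\env(lr)$ between $l$ and $q$, so it is convex, lies below its chord $lq$, and by the inductive hypothesis sits inside $\triangle(lsr)$, hence on the correct side of the edge $ls \supseteq ls_l$; crucially, the defining property of $\comb(\lambda_{lr})$ --- that no feasible point lies below $\ell(q)$ --- places the whole arc on or above $\ell(q)$, i.e.\ above the edge $qs_l$. These three half-plane constraints are exactly the three edges of $\triangle(lqs_l)$, giving $\env(lq) \subseteq \triangle(lqs_l)$.

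The main obstacle is purely the bookkeeping in this last step: one must argue cleanly that the three half-planes (below $lq$, above $\ell(q)$, to the right of $ls$) intersect in precisely the triangle $\triangle(lqs_l)$ and that the relevant edges of that triangle are sub-segments of these three lines, which requires knowing that $s_l$ is the genuine intersection point lying between $l$ and $s$ and that $q$ sits between $s_l$ and $s_r$ on $\ell(q)$. All of this reduces to the single observation that $\ell(q)$ separates $s$ from $lr$ inside $\triangle(lsr)$, so I would isolate that as the one nontrivial geometric fact and let conditions (ii) and (iii) fall out of it for both children by symmetry.
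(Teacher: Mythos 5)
Your proof is correct and follows essentially the same route as the paper's: induction on the depth of the recursion tree, with the inductive step hinging on the fact that $q = \comb(\lambda_{lr})$ lands on the arc $\env(lr)$ inside the parent triangle $\triangle(lsr)$, from which (i)--(iii) for the two children follow. One small caution: $q \in \env(lr)$ does not follow from inductive hypothesis (iii) alone, as you state --- it also requires convexity/monotonicity of the lower envelope and the fact that $l, r$ lie on it, so that the minimizer of $h_{\lambda_{lr}}$ falls between them --- but the paper's proof invokes exactly these ingredients at the same point, so the two arguments coincide.
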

That is, whenever the Chord routine is called with parameters $\{l,r,s\}$ (see Table~\ref{table:chord}), the points $l$ and $r$
are feasible solutions of the lower envelope and the segment of the lower envelope between them is entirely contained in the
triangle $\triangle(lsr)$.
\begin{proof}
Consider the node (corresponding to) $T_i$ in the recursion tree built by the algorithm. We will prove the claim by induction
on the depth of the node. The base case corresponds to either an empty tree or a single node (root). The Chord routine is
initially called for the triangle $T_1 = \triangle (abc)$. Conditions (i) and (ii) are thus clearly satisfied. It follows from
the definition of the $\comb$ routine that there exist no solution points strictly to the left of $a$ and strictly below $b$.
Hence, by monotonicity and convexity, we have $\env(\I) = \env(ab) \subseteq \triangle(acb)$, i.e., condition (iii) is also
satisfied. This establishes the base case.

For the induction step, suppose the claim holds true for every node up to depth $d \geq 0$. We will prove it for any node of
depth $d+1$. Indeed, let $T$ be a depth $d+1$ node and let $T_i = \triangle(a_ib_ic_i)$ be $T$'s parent node in the recursion
tree. By the induction hypothesis, we have that (i) $a_i, b_i \in \I \cap \env(\I)$; $a_i = \comb(\lambda_{a_ic_i})$ and $b_i =
\comb(\lambda_{b_ic_i})$ (ii) $x(a_i) \leq x(c_i) \leq x(b_i)$, $y(a_i) \geq y(c_i) \geq y(b_i)$ and $c_i$ lies below the line
$a_ib_i$  and (iii) $\env(a_ib_i) \subseteq T_i$. We want to show the analogous properties for $T$.

We can assume without loss of generality that $T$ is $T_i$'s left child (the other case being symmetric). Then, it follows by
construction (see Table~\ref{table:chord}) that $T = \triangle (a_i q_i a'_i)$ where $q_i = \comb(\lambda_{a_ib_i})$. We
claim that $q_i \in T_i$. Indeed, note that $\lambda_{a_ic_i} > \lambda_{a_ib_i} > \lambda_{b_ic_i}$ (as follows from property
(ii) of the induction hypothesis). By monotonicity and convexity of the lower envelope, combined with property (iii) of the
induction hypothesis, the claim follows. Now note that $a'_i \in a_ic_i$ and $a'_iq_i \parallel a_ib_i$. Hence, property (i) of
the inductive step is satisfied. Since $a_ic_i$ has non-positive slope (as follows from (ii) of the inductive hypothesis),
$a_i$ lies to the left and above $a'_i$ ; similarly, since $a_ib_i$ has negative slope, $a'_i$ lies to the left and above
$q_i$. We also have that $\lambda_{a_ic_i} = \lambda_{a_ia'_i} \geq \lambda_{a_iq_i} \geq \lambda_{a'_iq_i} =
\lambda_{a_ib_i}$, where the first inequality follows from the fact that $q_i \in T_i$. Property (ii) follows from the
aforementioned. By definition of the Chord routine, we have $a'_iq_i \parallel a_ib_i$ and there are no solution points below $a'_iq_i$.
By convexity, we get that $\env(a_iq_i)$ lies below $a_iq_i$. Hence, property (iii) also follows. This proves the induction and the claim.
\end{proof}
\noindent By exploiting the above claim, we can prove correctness:
\begin{lemma} \label{lem:chord-finds-eps-convex}
The set of points $Q$ computed by the Chord algorithm is an $\eps$-CP set.
\end{lemma}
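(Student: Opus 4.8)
The plan is to prove the statement by induction on the (finite) recursion tree of the algorithm, establishing the stronger invariant that whenever $\textrm{Chord}(\{l,r,s\})$ is invoked, the set of points it returns forms a convex polygonal chain from $l$ to $r$ whose lower envelope $\eps$-covers the corresponding piece $\env(lr)$ of the lower envelope, i.e. $\rd(\env(lr),\env(\text{chain}))\le\eps$. Instantiating this at the root triangle $\triangle(acb)$, where $\env(ab)=\env(\I)$, immediately gives $\rd(\env(\I),\env(Q))\le\eps$, which by the corollary characterizing $\eps$-CP sets means $Q$ is an $\eps$-CP for $\I$. Throughout I rely on Claim~\ref{clm:sandwich}, which guarantees the sandwiching invariants $l,r\in\I\cap\env(\I)$ and $\env(lr)\subseteq\triangle(lsr)$ at every recursive call; in particular this yields $Q\subseteq\I$, as required.

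The geometric heart of the argument is a lemma bounding the ratio distance of the whole triangle from the chord: $\rd\big(\triangle(lsr),\,lr\big)=\rd(s,lr)$, i.e. the worst point of the triangle relative to $lr$ is its lower vertex $s$. To prove it, write the line through $lr$ as $\{(X,Y):aX+bY=c\}$ with $a,b,c>0$ (valid since $lr$ has negative slope and positive intercepts). For a point $p$ below this line, scaling $p$ along the ray from the origin until it meets the line gives $\rd(p,lr)=c/(a\,x(p)+b\,y(p))-1$, provided the ray meets the segment $lr$ rather than its extension; this proviso holds for every $p\in\triangle(lsr)$ because $s$ lies angularly between $l$ and $r$ as seen from the origin. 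Hence on the triangle $\rd(\cdot,lr)$ is a strictly decreasing function of the linear form $a\,x+b\,y$, which over the convex set $\triangle(lsr)$ is minimized at a vertex; since $l,r$ lie on the line (distance $0$), the minimum is attained at $s$, proving the claim.

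With this lemma the base case is immediate. The routine returns $\{l,r\}$ only when either $\rd(s,lr)\le\eps$, or the computed point $q=\comb(\lambda_{lr})$ satisfies $\rd(q,lr)\le\eps$. In the first case, since $\env(lr)\subseteq\triangle(lsr)$, every point of $\env(lr)$ has ratio distance to $lr$ at most $\rd(s,lr)\le\eps$, so the single edge $lr$ already $\eps$-covers $\env(lr)$. In the second case, $q$ is a point of $\env(lr)$ of maximum ratio distance from $lr$ (it admits a supporting line parallel to $lr$), so $\rd(\env(lr),lr)=\rd(q,lr)\le\eps$, and again the edge $lr$ suffices. For the inductive step, when the routine recurses it sets $q=\comb(\lambda_{lr})\in\env(lr)$, which splits the curve as $\env(lr)=\env(lq)\cup\env(qr)$, and passes the triangles $\triangle(lqs_l)$ and $\triangle(qrs_r)$ to the two recursive calls; Claim~\ref{clm:sandwich} certifies that these satisfy the sandwiching invariants. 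By the induction hypothesis $Q_l$ and $Q_r$ yield chains $\eps$-covering $\env(lq)$ and $\env(qr)$, so their union $Q_l\cup Q_r$ is a chain from $l$ to $r$ through $q$ that $\eps$-covers all of $\env(lr)$, completing the induction.

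It remains to justify that the induction is over a finite tree, i.e. that the algorithm terminates: this follows because the area of the processed triangle shrinks geometrically with depth (Lemma~\ref{lem:area-half}) and the test $\rd(s,lr)\le\eps$ is forced once that area drops below the threshold $S_0$ (Lemma~\ref{lemma:small-area-means-done}), so every root-to-leaf path has bounded length and the binary recursion tree is finite. The step I expect to be the main obstacle is the geometric lemma $\rd(\triangle(lsr),lr)=\rd(s,lr)$: care is needed both to confirm that the origin-ray through an arbitrary point of the triangle meets the chord $lr$ (and not its extension), so that the closed form for $\rd(\cdot,lr)$ is valid, and to reconcile the two distinct termination tests (the conservative test on $s$ versus the exact test on the true worst point $q$) under a single covering guarantee.
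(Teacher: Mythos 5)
Your proof is correct and is essentially the paper's own argument: both reduce the claim to the leaf calls of the recursion tree, invoke Claim~\ref{clm:sandwich} for the sandwiching invariant, and dispose of the two termination tests ($\rd(s,lr)\le\eps$ versus $\rd(q,lr)\le\eps$ for $q=\comb(\lambda_{lr})$) exactly as you do. The only differences are presentational: you package the argument as a structural induction rather than arguing directly on adjacent pairs of the final output, you spell out via the linear-form/ray argument the geometric fact $\rd\big(\triangle(lsr),lr\big)=\rd(s,lr)$ that the paper uses implicitly, and you justify termination, which the paper's proof takes for granted.
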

\begin{proof}
Let $Q = \{ a=:q_0, q_1, q_2, \ldots, q_r, q_{r+1}:=b \}$ be the set of feasible points in $\env(\I)$ output by the algorithm,
where the points of $Q$ are ordered in increasing order of their $x$-coordinate (decreasing order of their $y$-coordinate).
Note that all the $q_i$'s are in convex position. We have that $\rd(\env(\I), \langle q_0, \ldots,
q_{r+1} \rangle) = \max_{i=0}^r \rd(\env(q_iq_{i+1}), q_iq_{i+1})$. So, it suffices to show that, for all $i$,
$\rd(\env(q_iq_{i+1}), q_iq_{i+1}) \leq \eps$.

Since the algorithm terminates with the set $Q$, it follows that, for all $i$, the Chord routine was called by the algorithm
for the adjacent feasible points $\{q_i, q_{i+1}\}$ and returned without adding a new feasible point between them. Let $c_i$ be
the corresponding third point (argument to the Chord routine). Then, by Claim~\ref{clm:sandwich}, we have that $c_i$ is between
$q_i$ and $q_{i+1}$ in both coordinates and below the segment $q_iq_{i+1}$ and moreover $\env(q_iq_{i+1}) \subseteq \triangle
(q_iq_{i+1}c_i)$. Since the Chord routine returns without adding a new point on input $(\{q_i, q_{i+1}, c_i\}, \eps)$, it
follows that either $\rd(c_i, q_iq_{i+1}) \leq \eps$ or the point $q' = \comb(\lambda_{q_iq_{i+1}})$ satisfies $\rd(q',
q_iq_{i+1}) \leq \eps$. In the former case, since $\env(q_iq_{i+1}) \subseteq \triangle (q_iq_{i+1}c_i)$, we obtain
$\rd(\env(q_iq_{i+1}), q_iq_{i+1}) \leq \rd(c_i, q_iq_{i+1}) \leq \eps$ as desired. In the latter case, we have
$\rd(\env(q_iq_{i+1}), q_iq_{i+1}) = \rd(q', q_iq_{i+1}) \leq \eps$. That is, we claim that $q'$ is a point of
$\env(q_iq_{i+1})$ (by Claim~\ref{clm:sandwich}) with maximum ratio distance from $q_{i}q_{i+1}$.
Since the feasible points in $\env(q_{i}q_{i+1})$ lie between $q_iq_{i+1}$ and its parallel line through $q'$, the
claim follows. This completes the proof of Lemma~\ref{lem:chord-finds-eps-convex}.
\end{proof}

\begin{figure}[h!]
\begin{center}
\epsfig{file=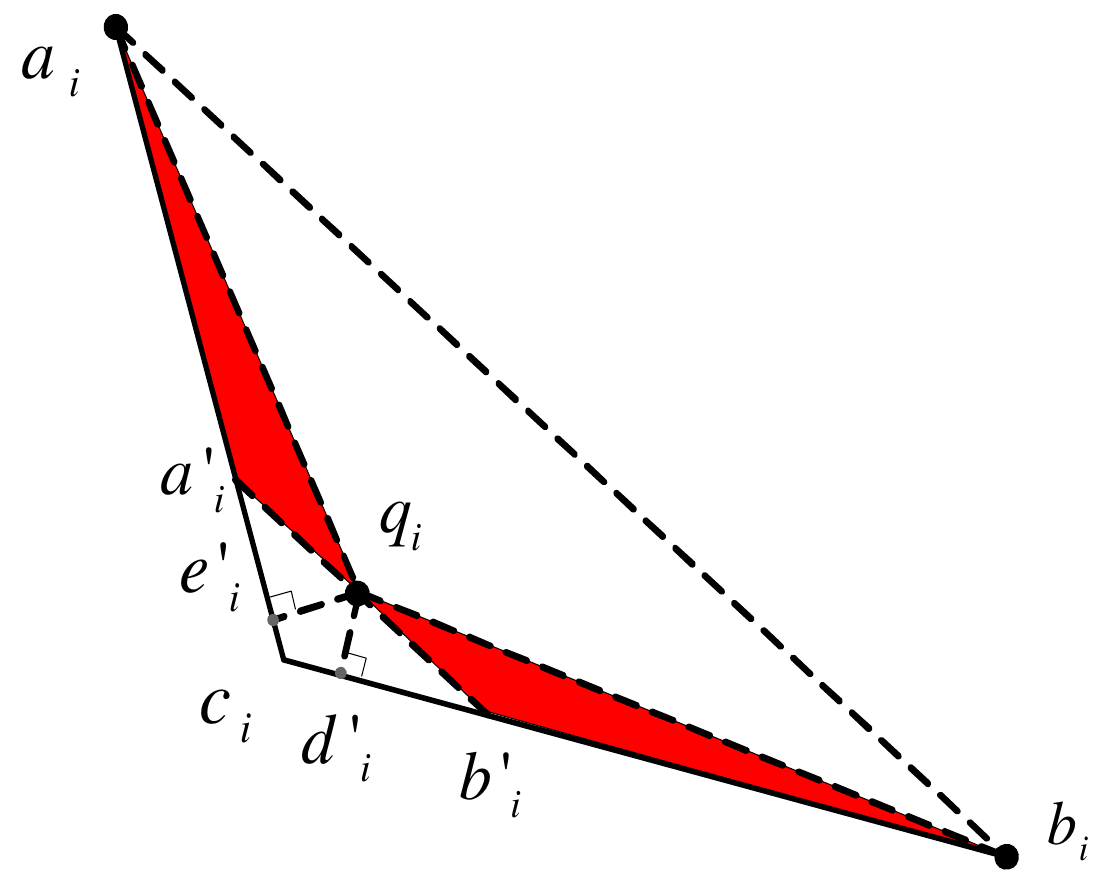,width=9cm}
\end{center}
\caption{Illustration of the area shrinkage property of the Chord algorithm.} \label{fig:chord-area}
\end{figure}

To bound from above the performance ratio we need a few more lemmas. Our first lemma quantifies the area shrinkage property. It
is notable that this is a statement independent of $\epsilon$.
\begin{lemma} \label{lem:area-half}
Let $T_{i} = \triangle (a_ib_ic_i)$ be the triangle processed by the Chord algorithm at some recursive step.
Denote $q_i = \comb (\lambda_{a_i b_i})$. Let $T_{i, l} = \triangle (a_i a'_i q_i)$, $T_{i,r} = \triangle (b_i b'_i q_i)$
be the triangles corresponding to the two new subproblems. Then, we have
\[ S(T_{i,l}) + S(T_{i,r}) \leq S(T_i)/4. \]
\end{lemma}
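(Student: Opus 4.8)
The plan is to reduce the inequality to a one-variable optimization by exploiting that the cutting line $\ell(q_i)$ is parallel to the base $a_ib_i$. First I would fix the direction perpendicular to $a_ib_i$ and let $h$ denote the height from the apex $c_i$ to the line through $a_ib_i$, so that $S(T_i) = \tfrac12 L h$ with $L \eqdef (a_ib_i)$. By Claim~\ref{clm:sandwich} the point $q_i = \comb(\lambda_{a_ib_i})$ lies inside $T_i$, and by construction $a'_i = a_ic_i \cap \ell(q_i)$ and $b'_i = b_ic_i \cap \ell(q_i)$ with $\ell(q_i) \parallel a_ib_i$; hence $q_i$ lies on the segment $a'_ib'_i$. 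Let $d \in [0,h]$ denote the distance from $c_i$ to the line $\ell(q_i)$.

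The first step is a similarity argument: since $\ell(q_i) \parallel a_ib_i$, the triangle $\triangle(c_i a'_i b'_i)$ is similar to $T_i = \triangle(c_i a_i b_i)$ with ratio $d/h$, so $(a'_ib'_i) = (d/h)\,L$. The crucial observation is then that both sub-triangles $T_{i,l} = \triangle(a_i a'_i q_i)$ and $T_{i,r} = \triangle(b_i b'_i q_i)$ have their bases $a'_iq_i$ and $q_ib'_i$ on $\ell(q_i)$ and their remaining vertices $a_i$, $b_i$ on the parallel line $a_ib_i$; therefore each has height exactly $h-d$. Writing $u = (a'_iq_i)$ and $v = (q_ib'_i)$, this gives $S(T_{i,l}) = \tfrac12 u(h-d)$ and $S(T_{i,r}) = \tfrac12 v(h-d)$, so their sum $\tfrac12(u+v)(h-d) = \tfrac12 (d/h)\,L\,(h-d)$ depends only on $u+v = (a'_ib'_i)$ and \emph{not} on where along $a'_ib'_i$ the point $q_i$ sits. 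This split-independence is what makes the bound clean.

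Finally, dividing by $S(T_i) = \tfrac12 Lh$ yields
\[
\frac{S(T_{i,l})+S(T_{i,r})}{S(T_i)} = \frac{d(h-d)}{h^2},
\]
and maximizing the quadratic $d(h-d)$ at $d = h/2$ (equivalently, AM--GM) gives the upper bound $1/4$, as required.

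I do not expect a genuine obstacle here; the only points worth verifying carefully are that $q_i$ indeed lies on the segment $a'_ib'_i$ (which follows from $q_i \in T_i \cap \ell(q_i)$ via Claim~\ref{clm:sandwich}) and that the two sub-triangles share the common height $h-d$. Once these are established, the computation is immediate and, as the lemma emphasizes, completely independent of $\eps$.
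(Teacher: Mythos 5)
Your proof is correct, and it establishes the same exact identity as the paper, namely $S(T_{i,l})+S(T_{i,r}) = y(1-y)\,S(T_i)$ where $y := d/h$ is the similarity ratio of $\triangle(a'_ib'_ic_i)$ to $T_i$, followed by maximizing the quadratic at $1/4$. The route to that identity, however, is genuinely different and more direct than the paper's. The paper expresses $S(T_{i,l})+S(T_{i,r})$ via the projections of $q_i$ onto the two sides $a_ic_i$ and $b_ic_i$, rewrites it as $(1-y)\bigl(S(\triangle(a_ic_iq_i))+S(\triangle(b_ic_iq_i))\bigr)$, and then, using $S(\triangle(a'_ib'_ic_i))=y^2S(T_i)$, solves the \emph{implicit} equation $Z=(1-y)\bigl(Z+y^2S(T_i)\bigr)$ for $Z=S(T_{i,l})+S(T_{i,r})$. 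You avoid the self-referential step entirely: since both sub-triangles have their bases on $\ell(q_i)$ and their opposite vertices $a_i,b_i$ on the parallel line through $a_ib_i$, they share the common height $h-d$, so the sum of their areas is $\tfrac12\,(a'_ib'_i)\,(h-d)=\tfrac12\,(d/h)L\,(h-d)$ irrespective of where $q_i$ sits on $a'_ib'_i$ --- your ``split-independence'' observation, which in the paper emerges only after solving the equation. Both arguments rest on the same two geometric facts (that $q_i$ lies on the segment $a'_ib'_i$, via Claim~\ref{clm:sandwich}, and the similarity of $\triangle(a'_ib'_ic_i)$ to $T_i$), so nothing is lost; your version is simply shorter to verify, while the paper's bookkeeping with the projections $d'_i,e'_i$ matches the quantities it draws in its figure.
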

\begin{proof}
Let $d'_i, e'_i$ be the projections of $q_i$ on $b_ic_i$ and $a_ic_i$ respectively; see Figure~\ref{fig:chord-area} for an
illustration. Let
\begin{equation} \label{eq:y}
y \eqdef (a'_ic_i)/(a_ic_i) = (b'_ic_i)/(b_ic_i) \in [0,1]
\end{equation}
where the equality holds because the triangles $T_i$ and $\triangle (a'_ib'_ic_i)$ are similar (recalling that $a'_ib'_i
\parallel a_ib_i$). Hence, we get
\begin{equation} \label{eq:y2}
S(\triangle(a'_ib'_ic_i)) = y^2 S(T_i).
\end{equation}
We have
\begin{eqnarray}
S(T_{i,l})&+&S(T_{i,r}) = \big( (a_ia'_i) \cdot (q_i e'_i) + (b_ib'_i) \cdot (q_id'_i)  \big)/2 \nonumber\\
                        &=& (1-y) \cdot \big( (a_ic_i) \cdot (q_i e'_i) + (b_ic_i) \cdot (q_id'_i) \big)/2 \label{uses:y}\\
                        &=& (1-y)\cdot \big( S(\triangle(a_ic_iq_i))+ S(\triangle(b_ic_iq_i)) \big) \nonumber \\
                        &=& (1-y)\cdot \big( S(T_{i,l}) + S(T_{i,r}) + S(\triangle(a'_ib'_ic_i)) \big) \nonumber
\end{eqnarray}
where (\ref{uses:y}) follows from (\ref{eq:y}). By using (\ref{eq:y2}) and expanding we obtain
$$S(T_{i,l}) + S(T_{i,r}) = y \cdot (1-y) \cdot S(T_i) \leq S(T_i)/4$$ as desired. 
\end{proof}


Our second lemma gives a convenient lower bound on the value of the optimum.
\begin{lemma} \label{lem:opt-lb}
Consider the recursion tree $\mathcal{T}$ built by the algorithm and let $\mathcal{L}'$ be the set of lowest internal nodes, i.e., the internal nodes whose children are leaves. Then $\opt_{\eps} \geq |\mathcal{L}'|$.
\end{lemma}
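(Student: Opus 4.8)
The plan is to exhibit an injection from $\mathcal{L}'$ into the vertices of the optimal chain. Fix an optimal $\eps$-CP set $S^*$ with $|S^*| = \opt_{\eps}$, and consider its lower envelope $\env(S^*)$, a convex, monotone decreasing polygonal chain whose vertices are exactly the (non-redundant) points of $S^*$. Since $S^* \subseteq \I$, we have $\ch(S^*) \subseteq \ch(\I)$, so $\env(S^*)$ lies weakly above $\env(\I)$; in particular every point of $\env(\I)$ lies on or below $\env(S^*)$. First I would record a purely combinatorial fact about the recursion tree: two lowest internal nodes are never in an ancestor--descendant relation (a strict ancestor of an internal node would itself have an internal descendant, contradicting that its children are leaves). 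Since sibling triangles split their parent's $x$-range $[x(l),x(r)]$ at the abscissa of the newly computed point, any two incomparable nodes own $x$-intervals with pairwise disjoint interiors. Thus it suffices to show that for every $v \in \mathcal{L}'$ the open interval $\big(x(l_v), x(r_v)\big)$ contains the abscissa of a vertex of $\env(S^*)$: disjointness then forces these vertices to be distinct, and since there are only $\opt_{\eps}$ of them in total, $|\mathcal{L}'| \le \opt_{\eps}$.

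The geometric heart is the following claim for a fixed $v \in \mathcal{L}'$. Because $v$ is an internal node, the routine did not terminate after computing $q_v = \comb(\lambda_{l_v r_v})$, so $\rd(q_v, l_v r_v) > \eps$; equivalently, the point $z_v \eqdef (1+\eps)\,q_v$ lies strictly below the line through $l_v$ and $r_v$. On the other hand, $S^*$ is an $\eps$-CP, so $\env(S^*)$ $\eps$-covers $q_v \in \env(\I)$; since $\env(S^*)$ is convex and decreasing, this is equivalent to $z_v$ lying on or above $\env(S^*)$. I would also check that $x(z_v) \in \big(x(l_v), x(r_v)\big)$: indeed $x(z_v) = (1+\eps)x(q_v) > x(l_v)$ because $x(q_v) \ge x(l_v)$, while $x(z_v) \ge x(r_v)$ would (by $z_v$ lying below the decreasing chord) force $y(z_v) < y(r_v)$, contradicting $y(z_v) = (1+\eps)y(q_v) \ge y(q_v) \ge y(r_v)$.

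Now suppose, for contradiction, that $\env(S^*)$ has no vertex with abscissa in $\big(x(l_v), x(r_v)\big)$. Then over this interval $\env(S^*)$ coincides with a single line segment $\sigma$. By Claim~\ref{clm:sandwich} we have $l_v, r_v \in \I \cap \env(\I)$, so both lie on or below $\env(S^*)$; hence $\sigma(x(l_v)) \ge y(l_v)$ and $\sigma(x(r_v)) \ge y(r_v)$, and by linearity $\sigma$ lies pointwise on or above the chord $l_v r_v$ throughout the interval. Writing $h$ for the (affine) height function of that chord, evaluation at $x(z_v)$ gives $y(z_v) \ge \sigma\big(x(z_v)\big) \ge h\big(x(z_v)\big)$, which contradicts the fact that $z_v$ lies strictly below the chord, i.e.\ $y(z_v) < h\big(x(z_v)\big)$. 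Therefore the open interval $\big(x(l_v), x(r_v)\big)$ must contain a vertex of $\env(S^*)$, and the injection is complete.

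I expect the main obstacle to be the careful verification of the two reformulations used above --- namely ``$\rd(q_v, l_v r_v) > \eps \Leftrightarrow z_v$ strictly below the chord'' and ``$\env(S^*)$ $\eps$-covers $q_v \Leftrightarrow z_v$ on or above $\env(S^*)$'' --- together with the attendant domain bookkeeping, i.e.\ that $\env(S^*)$ is genuinely defined over all of $\big(x(l_v),x(r_v)\big)$. The latter is where the extreme points enter: since $S^*$ must $\eps$-cover $a = \comb(+\infty)$ and $b = \comb(0)$, the $x$-range of $\env(S^*)$ essentially spans $[x(a),x(b)]$, so only the (at most two) intervals abutting the ends require a routine separate check. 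Once these equivalences are in place, the decisive step is the short convexity argument: a single segment of $\env(S^*)$ spanning the interval must sit above the chord $l_v r_v$, yet the covering witness $z_v$ sits strictly below it.
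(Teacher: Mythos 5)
Your proposal is correct and takes essentially the same route as the paper's own proof: both arguments build an injection from the lowest internal nodes into the optimal set by combining pairwise disjointness of the nodes' regions (the paper uses the triangles themselves, you use their $x$-intervals) with the observation that $\rd(q_v, l_v r_v) > \eps$ at an internal node forces the optimal $\eps$-CP to have a point/vertex inside that region --- a step the paper asserts in one sentence (``each such triangle must contain a point $q \notin \{a_i,b_i\}$ of an optimal $\eps$-CP set'') and you prove via the convexity argument with $z_v = (1+\eps)q_v$. Your deferred ``domain bookkeeping'' is indeed routine, though slightly mislocated: the right endpoint of $\env(S^*)$ can lie far to the left of $x(b)$ (covering $b$ only constrains the minimum $y$ of $S^*$, not its maximum $x$), but this causes no harm, since if that endpoint falls inside a node's open interval it is itself the required vertex, and it cannot lie at or to the left of $x(l_v)$ for any internal node $v$ (all of $S^*$ would then sit at height at least $y(l_v)$, contradicting that $z_v$ dominates a point of $\ch(S^*)$ while lying strictly below the chord).
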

\begin{proof}
Recall that, by convention, there is no node in the tree if, for a triangle, the Chord routine terminates without calling
$\comb$. Each lowest internal node of the tree corresponds to a triangle $T_i = \triangle (a_i b_i c_i)$ with the property that
the ratio distance of the convex Pareto set from the line segment $a_ib_i$ is strictly greater than $\eps$ (as otherwise, the
node would be a leaf). Each such triangle must contain a point $q \not \in \{a_i, b_i \}$ of an optimal $\eps$-CP set.
Any two  nodes of $\mathcal{L}'$ are not ancestor of each other, and therefore
the corresponding triangles are disjoint (neighboring triangles can only intersect at an endpoint).
Thus, each one of them must contain a distinct point of the
optimal $\eps$-CP set, and hence the lemma follows. 
\end{proof}
\begin{figure}[h!]
\begin{center}
\epsfig{file=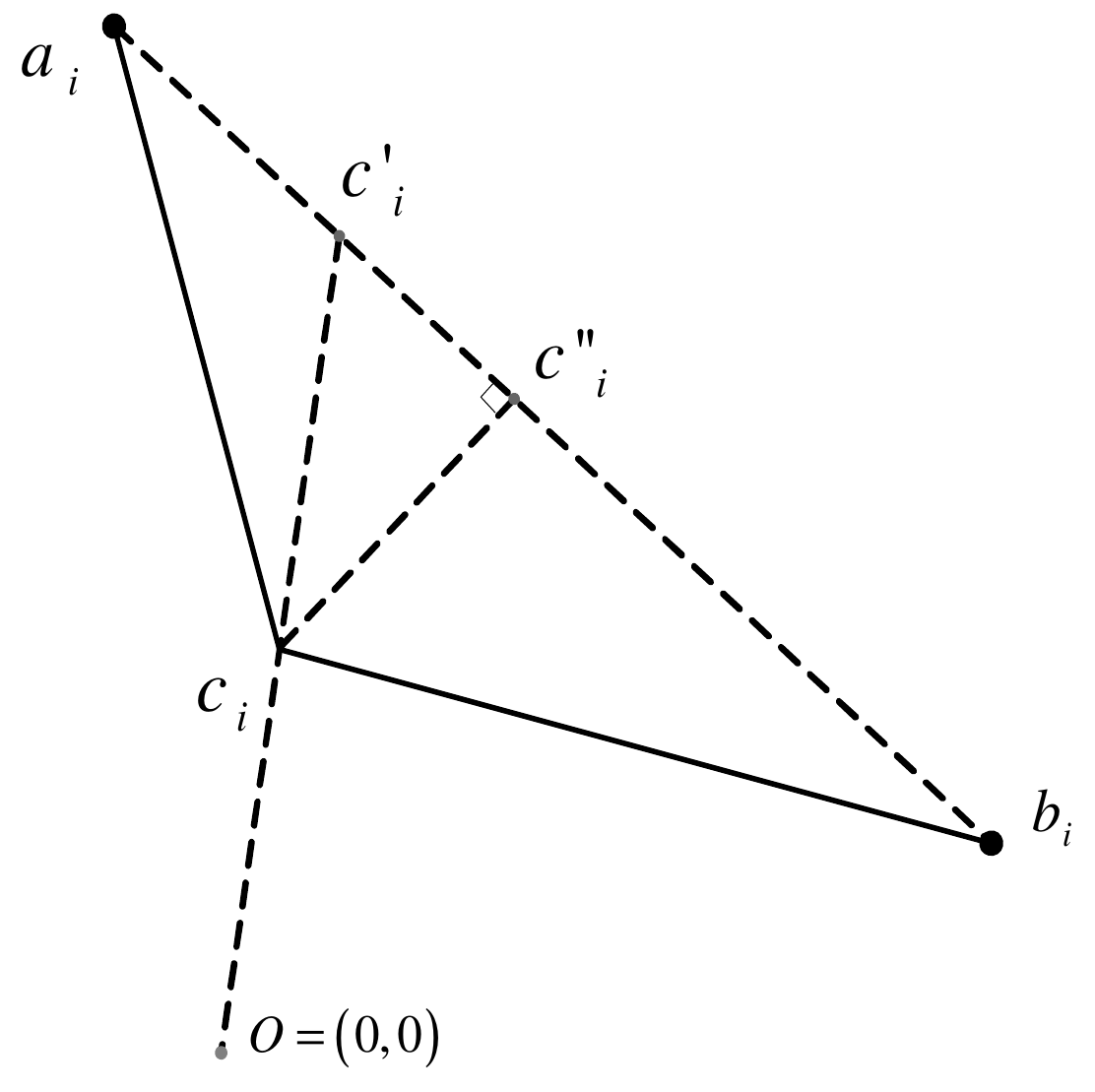,width=8cm}
\end{center}
\caption{Illustration of the proof of Lemma~\ref{lemma:small-area-means-done}.} \label{fig:area-distance}
\end{figure}
Finally, we need a lemma that relates the ratio distance within a triangle to its area. We stress that the lemma applies only
to triangles $T_i = \triangle(a_ib_ic_i)$ considered by the algorithm.
\begin{lemma} \label{lemma:small-area-means-done}
Consider a triangle $T_i = \triangle(a_ib_ic_i)$ considered in some iteration of the Chord algorithm such that $T_i \subseteq
T_1$. Let $\alpha_i \eqdef \min\{x(c_i), y(c_i)\}$. If $S(T_i) \leq \epsilon^2 \cdot \alpha_i^2$, then $\rd(c_i, a_ib_i) \leq
\epsilon$.
\end{lemma}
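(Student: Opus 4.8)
The plan is to show that the ratio distance $\rd(c_i,a_ib_i)$ is controlled by the pair $(S(T_i),\alpha_i)$ through a single application of AM--GM, after rewriting the triangle in ``box coordinates'' centered at $c_i$. Recall from Claim~\ref{clm:sandwich} that $x(a_i)\le x(c_i)\le x(b_i)$, $y(a_i)\ge y(c_i)\ge y(b_i)$, and that $c_i$ lies below the segment $a_ib_i$; moreover $T_i\subseteq T_1\subseteq\R_+^2$ guarantees $x(c_i),y(c_i)>0$, so $\alpha_i>0$. Accordingly I would set $w_1=x(c_i)-x(a_i)$, $w_2=x(b_i)-x(c_i)$, $u_1=y(a_i)-y(c_i)$, $u_2=y(c_i)-y(b_i)$, all nonnegative. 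A shoelace computation then gives $S(T_i)=\tfrac12(u_1w_2-u_2w_1)$, and the fact that $c_i$ lies below the chord forces $u_1w_2-u_2w_1>0$.

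Next I would pin down the ratio distance exactly. Writing $\lambda=\lambda_{a_ib_i}=(u_1+u_2)/(w_1+w_2)$ for the absolute slope of the chord, define
\[
\rho \;\eqdef\; \frac{2\,S(T_i)}{(w_1+w_2)\,y(c_i)+(u_1+u_2)\,x(c_i)},
\]
and check by a one-line computation that the radial point $(1+\rho)\,c_i$ lies on the line through $a_i$ and $b_i$. To conclude $\rd(c_i,a_ib_i)\le\rho$ I would exhibit a point of the \emph{segment} that $\rho$-covers $c_i$: let $q$ be the point of $a_ib_i$ with $x(q)=\min\{(1+\rho)x(c_i),\,x(b_i)\}$. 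When the minimum is the first term, $q=(1+\rho)c_i$ and both coordinates equal $(1+\rho)$ times those of $c_i$; when it is the second, $q=b_i$, and then $y(b_i)\le y(c_i)\le(1+\rho)y(c_i)$. In either case $x(q)\le(1+\rho)x(c_i)$ and $y(q)\le(1+\rho)y(c_i)$, so $\rd(c_i,a_ib_i)\le\rd(c_i,q)\le\rho$. This bookkeeping — routing through the $\min$ so that the witness never slides past $b_i$ off the segment — is the one genuinely delicate point; the rest is a clean estimate.

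Finally I would bound $\rho$ by the area. Dropping the nonnegative terms $w_1\,y(c_i)$ and $u_2\,x(c_i)$ from the denominator gives $\rho\le \frac{2S(T_i)}{w_2\,y(c_i)+u_1\,x(c_i)}$; AM--GM yields $w_2\,y(c_i)+u_1\,x(c_i)\ge 2\sqrt{u_1w_2\,x(c_i)y(c_i)}$, and since $u_1w_2\ge u_1w_2-u_2w_1=2S(T_i)$ the denominator is at least $2\sqrt{2\,S(T_i)\,x(c_i)y(c_i)}$. Substituting collapses everything to
\[
\rho \;\le\; \sqrt{\frac{S(T_i)}{2\,x(c_i)\,y(c_i)}}\;\le\;\sqrt{\frac{S(T_i)}{2\,\alpha_i^2}},
\]
using $x(c_i)\,y(c_i)\ge\alpha_i^2$. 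Under the hypothesis $S(T_i)\le\eps^2\alpha_i^2$ this gives $\rd(c_i,a_ib_i)\le\rho\le\eps/\sqrt2\le\eps$, proving the lemma (indeed with a factor of $\sqrt2$ to spare, which I would not bother to optimize). I expect the main conceptual step to be recognizing that area together with AM--GM directly dominates the ratio distance at the correct quadratic scale; everything else is the coordinate setup and the segment-membership check above.
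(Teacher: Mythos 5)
Your proof is correct, and it reaches the lemma by a genuinely different route than the paper. The paper's argument is geometric: it drops the perpendicular from $c_i$ to the chord, with foot $c''_i$, uses the fact that $\angle(a_ic_ib_i)\ge\pi/2$ (so $c_i$ lies inside the circle with diameter $a_ib_i$) to get $(a_ib_i)\ge 2(c_ic''_i)$ and hence $S(T_i)=\tfrac12(a_ib_i)(c_ic''_i)\ge (c_ic''_i)^2$; it then invokes the radial characterization of the ratio distance --- the minimizer over the chord is the point $(1+r)c_i$ where the ray $Oc_i$ meets it --- to deduce $\max\{x(c''_i)/x(c_i),\,y(c''_i)/y(c_i)\}\ge 1+r$, whence $(c_ic''_i)\ge r\alpha_i$ and $S(T_i)\ge r^2\alpha_i^2$. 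You never introduce perpendicular distance at all: you compute the radial point in closed form, $\rho=2S(T_i)/\big[(w_1+w_2)y(c_i)+(u_1+u_2)x(c_i)\big]$, and lower-bound the denominator by AM--GM together with the observation $u_1w_2\ge u_1w_2-u_2w_1=2S(T_i)$, arriving at $S(T_i)\ge 2\rho^2\,x(c_i)y(c_i)$. Both proofs exploit the ordering conditions of Claim~\ref{clm:sandwich} (the paper as obtuseness, you as nonnegativity of $w_1,w_2,u_1,u_2$) and the radial structure of $\rd$, but the key inequality is obtained by circle geometry in one case and by AM--GM in the other. What your version buys: a slightly sharper constant (a factor $\sqrt2$ to spare, and $x(c_i)y(c_i)$ in place of $\alpha_i^2$), and a more scrupulous treatment of segment-versus-line membership via the $\min\{(1+\rho)x(c_i),x(b_i)\}$ witness, where the paper simply asserts that the radial point is the minimizer over the segment. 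The only caveat, shared equally by the paper, is the degenerate case $w_1+w_2=0$ or $u_1+u_2=0$ (vertical or horizontal chord), where your strict inequality $u_1w_2-u_2w_1>0$ fails; but there $c_i$ lies on the segment, $\rd(c_i,a_ib_i)=0$, and the lemma is trivial, so nothing is lost.
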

\begin{proof}
The basic idea of the argument is that the worst-case for the area-error tradeoff is (essentially) when $c_i = (\alpha_i,
\alpha_i)$, and the triangle $T_i$ is right and isosceles (i.e., $(a_ic_i) = (c_ib_i)$).
We now provide a formal proof. Let $T_i = \triangle(a_ib_ic_i)$ be a triangle considered by the algorithm. 
The points $a_i, b_i \in \env(\I)$ and we have $x(a_i) \leq x(c_i) \leq
x(b_i)$, $y(a_i) \geq x(c_i) \geq x(b_i)$ and the point $c_i$ lies below the line $a_ib_i$. 
Note the latter imply that $\angle(a_ic_ib_i) \geq \pi/2.$ 
(See Figure~\ref{fig:area-distance}.)
We will relate the area $S(T_i)$ to the ratio distance  $r \eqdef \rd(c_i, a_ib_i)$.

Consider the intersection $c'_i$ of the lines $a_ib_i$ and $Oc_i$, where $O$ denotes the origin. Then we have that $c'_i = (1+r)c_i$.
From the definition of the ratio distance it follows that for any point $p \in a_ib_i$ it holds $\rd(c_i,p) \geq r$ (in fact,
$c'_i$ is the unique minimizer). Let $c''_i$ be the projection of $c_i$ on $a_ib_i$. It follows that
\begin{equation}
\max \{ y(c''_i) / y(c_i), x(c''_i) / x(c_i) \} \geq 1+r \label{eq:star}.
\end{equation}
For the area we have that $S(T_i)  = (1/2)(a_ib_i)(c_ic''_i)$. Since $T_i$ is either right or obtuse, the length of its largest
base is at least twice the length of the corresponding height, i.e., $(a_ib_i) \geq 2(c_ic''_i)$. Hence, $S(T_i) \geq
(c_ic''_i)^2$. By expanding and using (\ref{eq:star}), we get $S(T_i) \geq r^2 \cdot \alpha_i^2$ and the lemma follows.
\end{proof}
\noindent At this point we have all the tools we need to complete the proof of Theorem~\ref{thm:ws-ub-area}. 
\begin{proof}[{\bf Proof of Theorem~\ref{thm:ws-ub-area}}]
Lemma~\ref{lem:chord-finds-eps-convex} gives correctness. To bound the performance ratio we proceed as follows: 
First, by Lemma~\ref{lem:area-half}, when a node of the tree is at depth $\lceil \log_4 (S(T_1)/S_0) \rceil$, 
the corresponding triangle will have area at most $S_0$. Hence, by Lemma~\ref{lemma:small-area-means-done} 
(noting that $\min_i \alpha_i  = \alpha$), it follows that the depth of the recursion tree $\mathcal{T}$ is $d = O(\log (S(T_1)/S_0))$. 
Every internal tree node is an ancestor of a node in $\mathcal{L}'$.
The Chord algorithm makes one query for every node of the tree, hence $\chd_{\eps} \le O(d) \cdot |\mathcal{L}'|$.
Lemma~\ref{lem:opt-lb} now implies that $\chd_{\eps} \le O \left( \log \left( S(T_1)/S_0 \right) \right) \cdot \opt_{\eps}$, which concludes the proof. 
\end{proof}


\subsubsection{Tight Upper Bound} \label{sssec:worst-upper-final}

\smallskip

In this subsection, we prove the asymptotically tight upper bound of  
$O \big( \frac{m+\log(1/\epsilon)}{\log m +
\log\log(1/\epsilon)} \big)$ on the worst-case performance ratio of the Chord algorithm.

The analysis is more subtle in this case and builds on the intuition obtained from the simple analysis of the previous subsection. 
The proof bounds in effect the length of paths in the recursion tree that consist of nodes with a single child. It shows that if we consider any $\epsilon$-convex Pareto set $S$, 
the segment of the lower envelope between any
two consecutive elements of $S$  cannot contain too many points of the solution produced by
the Chord algorithm.


\begin{theorem}\label{thm:ws-ub-tight}
The worst-case performance of the Chord algorithm (with respect to the ratio distance)
is $O \big( \frac{m+\log(1/\epsilon)}{\log m +
\log\log(1/\epsilon)} \big)$.
\end{theorem}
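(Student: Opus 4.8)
The plan is to sharpen the depth-based counting of Theorem~\ref{thm:ws-ub-area} into a \emph{per-segment} count. Write $N := m+\log(1/\eps)$ and note that $\log N = \Theta\big(\log m + \log\log(1/\eps)\big)$, so the target bound is $O(N/\log N)$. Fix an optimal $\eps$-convex Pareto set $S=\{s_1,\dots,s_k\}$ with $k=\opt_\eps$, ordered by increasing $x$-coordinate, and recall that the solution points output by the algorithm are exactly the points added at the nodes of the recursion tree that recurse. A standard counting identity for the recursion tree (the number of leaves equals the number of two-child nodes plus one) gives $\chd_\eps \le 2\cdot(\#\text{solution points}) + O(1)$, so it suffices to bound the number of solution points. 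I would assign each solution point to the segment $\env(s_j s_{j+1})$ of $S$ containing it, reducing everything to the \textbf{Main Lemma}: for each $j$, the algorithm places $O(N/\log N)$ solution points strictly inside $\env(s_j s_{j+1})$. Summing over the $k$ segments then yields $\chd_\eps = O\big(\opt_\eps \cdot N/\log N\big)$, as desired.

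The Main Lemma rests on a convexity observation that localizes the relevant nodes. If a chord $lr$ has both endpoints on $\env(\I)$ with $x$-coordinates inside a single segment $[x(s_j),x(s_{j+1})]$, then by convexity $\env(lr)\subseteq\env(s_js_{j+1})$ and $lr$ lies below the optimal chord $s_js_{j+1}$; since the ratio distance to a line closer to the origin is never larger, $\rd(\env(lr),lr)\le \rd(\env(lr),s_js_{j+1})\le\rd(\env(s_js_{j+1}),s_js_{j+1})\le\eps$. Hence the Chord routine never subdivides a chord confined to one segment, and every node contributing a point inside $\env(s_js_{j+1})$ must use a chord straddling $s_j$ or $s_{j+1}$. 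As a consequence, the nodes responsible for the points inside a fixed segment organize into single-child paths along which one endpoint is ``pinned'' beyond a segment boundary while the opposite endpoint marches through the segment --- exactly the single-child paths whose length we must control, and which already governed the lower bound of Theorem~\ref{thm:rd-lb-ws}.

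The quantitative heart is a strengthening of the area-shrinkage Lemma~\ref{lem:area-half} along such a path. Let $T_1\supseteq T_2\supseteq\cdots$ be the confining triangles visited consecutively, so each $T_{i+1}$ is the surviving child of $T_i$ while the discarded sibling triangle has error at most $\eps$. Lemma~\ref{lem:area-half} already gives $S(T_{i+1})\le S(T_i)/4$; I would upgrade this to $S(T_{i+1})\le S(T_i)/\Omega(i)$, i.e.\ the area contracts by a factor that grows with the number of steps taken so far. The intuition --- matching the construction of Theorem~\ref{thm:rd-lb-ws}, where the relevant height shrinks by a factor $k+i-1$ at step $i$ --- is that because the sibling piece is already $\eps$-flat while $T_i$ itself has error $>\eps$, the marching chord keeps flattening and the next $\comb$ point must land a factor $\sim i$ closer to the base. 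Granting this, the product of contraction factors over a path of length $t$ is $\prod_{i=1}^{t}\Omega(i)=t^{\Omega(t)}=2^{\Omega(t\log t)}$. By Lemma~\ref{lemma:small-area-means-done} the path must halt once the area falls to $S_0=\eps^2\alpha^2$, so the total contraction is at most $S(T_1)/S_0\le 2^{O(N)}$. Combining, $t\log t=O(N)$, hence $t=O(N/\log N)$.

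The main obstacle is precisely the strengthened contraction estimate $S(T_{i+1})\le S(T_i)/\Omega(i)$: unlike the $\eps$-free factor of $4$, the growing factor must be extracted from the interplay of three facts --- $T_i$ has error $>\eps$, its discarded sibling has error $\le\eps$, and $\env(\I)$ is convex --- and converting these into a bound that \emph{improves with the step index} requires careful control of how the supporting slope and the base length of the confining triangle co-evolve; this is where the delicate geometry of Section~\ref{sssec:worst-upper-final} lives. A secondary technical point is the bookkeeping that separates genuine branchings (which split work between two different segments, and are globally limited) from the confined single-child runs, so that the per-segment counts can be summed without double counting. Finally, the matching lower bound of Theorem~\ref{thm:rd-lb-ws} shows the $t^{\Theta(t)}$ growth is the truth, so the $\Theta\big(1/(\log m+\log\log(1/\eps))\big)$ improvement over the naive depth bound of Theorem~\ref{thm:ws-ub-area} is both necessary and best possible.
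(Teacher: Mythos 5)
Your reduction to a per-segment count (the counting identity for the recursion tree, the observation that a chord confined to one segment of the optimum is never subdivided, and the resulting charging to paths of nodes straddling an optimal point) matches the strategy of the paper's proof, and your arithmetic correctly identifies the target: a path of length $t$ must satisfy $t\log t = O(N)$ with $N = m+\log(1/\eps)$. The genuine gap is exactly the step you flag as the ``quantitative heart'': the strengthened contraction $S(T_{i+1}) \le S(T_i)/\Omega(i)$ is not just unproven --- it is false as a per-step statement, so no amount of local geometric care will establish it. What the geometry actually forces is only a pair of \emph{amortized} (product) constraints. Writing $P_i$ for the ratio of successive ``heights'' of the active triangle above the line through $b'$ and $q^{\ast}$, the fact that the error stays above $\eps$ while starting at most $2^{2m}$ forces $\prod_i P_i > \eps/2^{2m}$, and the fact that $\{a,q^{\ast},b\}$ is an $\eps$-CP forces $\prod_i (1-P_i) > 1/2$ (this is Claim~\ref{claim:tu3} in the paper). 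These two constraints are perfectly compatible with a schedule in which an early prefix of steps contracts enormously (say $P_i \approx 2^{-\sqrt{N}}$ for the first $\approx\sqrt{N}$ steps) and a \emph{late} step $i_0 \approx \sqrt{N}$ contracts only by a constant factor ($P_{i_0}=0.4$, say): both products stay above their thresholds, yet $S(T_{i_0+1}) \approx S(T_{i_0})/2.5$, violating your claimed $\Omega(i_0)$ factor. Such a schedule can be realized by an actual instance built exactly like the lower bound construction of Theorem~\ref{thm:rd-lb-ws} with non-uniform subdivision ratios, so the per-step lemma cannot be saved; the saving over the depth bound of Theorem~\ref{thm:ws-ub-area} has to be extracted globally along the path, not step by step.

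For comparison, the paper never upgrades the area lemma (Lemma~\ref{lem:area-half}) at all. Along the iterations that add points on one side of $q^{\ast}$ it tracks two excess ratio distances, $H_i$ (of the newly added point $a_i$ from the line $b'q^{\ast}$) and $G_i$ (of $b'$ from the line $a_id_i$), shows both must exceed $\eps$ as long as the algorithm continues, deduces the two product constraints above for $P_i = H_i/H_{i-1}$ and $R_i = 1-P_i$, and then applies the elementary fact that, for a fixed value of $\prod_i P_i$, the product $\prod_i(1-P_i)$ is maximized when all the $P_i$ are equal; this yields $(k-1)^{k-1} < 2^{2m}/\eps$ and hence $k = O(N/\log N)$. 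A secondary inaccuracy in your setup: the ``single-child path whose discarded sibling is already $\eps$-flat'' picture is not quite right, because the algorithm may go back and forth across $q^{\ast}$ and siblings can themselves be subdivided; the paper handles this by indexing only the (not necessarily consecutive) iterations that add points on one side and working with the right endpoint $b'$ of the \emph{last} active triangle. If you replace your per-step contraction claim by its correct amortized form, you essentially arrive at the paper's argument.
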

\begin{proof}
We begin by analyzing the case $\opt_{\eps}=3$ (i.e., the special case that one
intermediate point suffices -- and is required -- for an $\eps$-approximation) and then handle the general case. It turns out
that this special case captures most of the difficulty in the analysis.

Let $a$ (leftmost) and $b$ (rightmost) be the extreme points of the convex Pareto curve as computed by the algorithm. We
consider the case $\opt_{\eps}=3$, i.e., (i) the set $\{a,b\}$ is {\em not} an $\eps$-CP and (ii) there exists a solution point
$q^{\ast}$ such that $\{a,q^{\ast},b\}$ is an $\eps$-CP.

Fix $k \in \mathbb{N}$ with $k = \Theta \big( \frac{m+\log(1/\epsilon)}{\log m + \log\log(1/\epsilon)} \big)$. We will prove
that, for an appropriate choice of the constant in the big-Theta, the Chord algorithm introduces at most $k$ points in either
of the intervals $[a, q^{\ast}]$, $[q^{\ast}, b]$. Suppose, for the sake of contradiction, that the Chord algorithm adds more
points than that in the segment $aq^{\ast}$ (the proof for $q^{\ast}b$ being symmetric.)

\medskip

We say that, in some iteration of the Chord algorithm, a triangle is \emph{active}, if it contains the optimal point
$q^{\ast}$. In each iteration, the Chord algorithm has an active triangle which contains the optimal point $q^{\ast}$. Outside
that triangle, the algorithm has constructed an $\eps$-approximation. We note that the Chord algorithm may in principle go back
and forth between the two sides of $q^{\ast}$; i.e., in some iterations the line parallel to the chord touches the lower
envelope to the left of $q^{\ast}$ and in other iterations to the right.

\smallskip

Let $\triangle(abc)$ be the initial triangle. We focus our attention on the (not necessarily consecutive) iterations of the
Chord algorithm that add points to the left of $q^{\ast}$. We index these iterations in increasing order with $i \in [1,k+1]$.
Consider the ``active'' triangle $\triangle(a_ic_ib_i)$ generated in each such iteration, where $a_i$ is the new point of the
curve added in the iteration. We let $\triangle(a_0b_0c_0)$ be the initial triangle $\triangle(abc)$.
It is clear that (i) $a_{i+1}$ lies to the right of $a_i$, (ii) $b_{i+1}$ lies to the left of (or
is equal to) $b_i$ and (iii) $\triangle(a_{i+1}c_{i+1}b_{i+1}) \subseteq \triangle(a_ic_ib_i)$.  Also, let us denote
$b':=b_{k+1}$, that is $b'$ is the $b$-vertex (i.e., the vertex that lies to the right of $q^{\ast}$) of the active triangle in
the last iteration $k+1$. Note that $b'$ could be equal to the point $b$ (which would happen if the Chord algorithm
introduces points only to the left of $q^{\ast}$, i.e., proceeds towards $q^{\ast}$ monotonically), but in general this will not be the case.

Let $e_i$ be the intersection point of the line $a_ic_i$ with the line $b'q^{\ast}$; see Figure \ref{tight-upper}. 
Note that, to the left of $q^{\ast}$ the
convex Pareto curve has no points below the line $b'q^{\ast}$. All the points of the convex Pareto curve that are left of
$q^{\ast}$ and lie in the active triangle $\triangle(a_ic_ib_i)$ (these are the potentially not $\eps$-covered points) are
actually in the triangle $\triangle(a_ie_iq^{\ast})$. Consider the line that goes through $a_i$ and is parallel to
$a_{i-1}b'$ and let $d_i$ be its intersection with $b'q^{\ast}$. Note that the line $a_ic_i$ is parallel to $a_{i-1}b_i$ (by
construction of the algorithm, this is the line that added the point $a_i$ and formed the active triangle
$\triangle(a_ic_ib_i)$) and $b_i$ lies to the right of (or is equal to) $b'$, so the line $a_id_i$ is to the left of (or is equal to)
$a_ic_i$, hence $d_i$ lies to the left of (or is equal to) $e_i$ . Now, let $f_i$ be the intersection point of $a_{i-1}a_i$ with the line
$b'q^{\ast}$. Clearly, $f_i$ lies to the left of $d_i$. Furthermore, $f_i$ lies to the right of (or is equal to)
$d_{i-1}$. The reason is that, below $a_{i-1}$ the curve has no points (strictly) to the left of the line $a_{i-1}d_{i-1}$, so
$a_i$ is to the right of the line (or on the line).

\begin{figure*}[t]
\centering
\includegraphics[width=18cm]{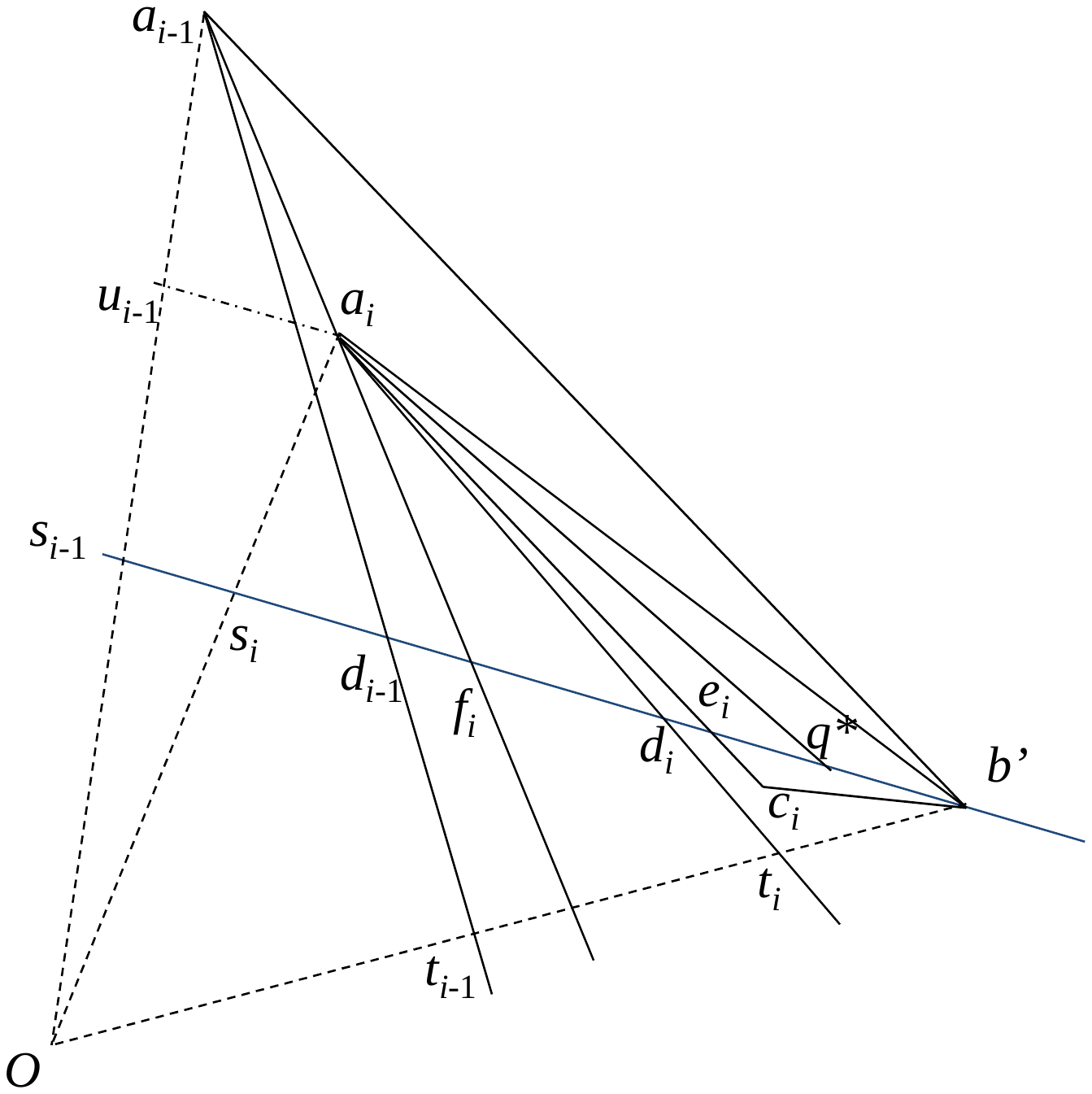}
\vspace*{-2.5cm}
\caption{Illustration of the proof of Theorem \ref{thm:ws-ub-tight}.}
\label{tight-upper}
\end{figure*}

\medskip

If a point $p \in \mathbb{R}^2_{+}$ is above a line $\ell$, in the sense that the segment connecting $p$
to the origin $O$ intersects $\ell$, say at some point $q$, then we define the
{\em excess ratio distance} of $p$ from $\ell$ to be $(pq)/(Oq)$, i.e., the ratio distance
of $q$ from $p$. 
Let $H_i$ be the excess ratio distance of $a_i$ from the line $b'q^{\ast}$ for $i\geq0$. 
Let $G_i$ be the excess ratio distance of $b'$ from the line $a_id_i$ for $i\geq1$. 
Referring to Figure \ref{tight-upper}, $H_i=(a_i s_i)/(Os_i)$
and $G_i = (b't_i)/(Ot_i)$.
Let $P_i = H_i / H_{i-1}$ and let
$R_i = 1-P_i$.

By definition, we have $H_i = P_i \cdot H_{i-1}$ for all $i$. Furthermore, $H_0$ is
the excess ratio distance of $a=a_0$ from the line $b'q*$.
Since the coordinates of the points are in $[2^{-m},2^{m}]$, it follows that $H_0 \leq 2^{2m}$.
Thus, we get
$$ H_k \leq 2^{2m} \cdot \prod_{i=1}^k P_i .$$

\begin{claim}\label{claim:tu1}
$P_i = \frac{(a_if_i)}{(a_{i-1}f_i)} = \frac{(f_id_i)}{(f_ib')}$ and $R_i= \frac{(d_ib')}{(f_ib')}$.
\end{claim}
\begin{proof}
Let $s_i$ be the intersection of the segment $a_iO$ with the line $b'q^{\ast}$
and $s_{i-1}$ the intersection of $a_{i-1}O$ with $b'q^{\ast}$.
By definition, $H_i = (a_is_i)/(Os_i)$, and $H_{i-1}=(a_{i-1}s_{i-1})/(Os_{i-1})$.
Let $u_{i-1}$ be the intersection of $a_{i-1}O$ with the line from $a_i$ parallel to $b'q^{\ast}$.
From the similar triangles  $\triangle(Os_{i-1}s_i)$ and  $\triangle(Ou_{i-1}a_i)$, we have
$H_i = (a_is_i)/(Os_i) = (u_{i-1}s_{i-1})/(Os_{i-1})$.
Therefore, $P_i = H_i/H_{i-1} = (u_{i-1}s_{i-1})/(a_{i-1}s_{i-1})$.
From the similar triangles $\triangle(a_{i-1}s_{i-1}f_i)$ and  $\triangle(a_{i-1}u_{i-1}a_i)$,
the latter ratio is equal to $(a_if_i)/(a_{i-1}f_i)$, yielding the first equality for $P_i$ in the claim.
The second equality follows from the similar triangles
$\triangle(f_ia_id_i)$ and  $\triangle(f_ia_{i-1}b')$, since $a_id_i$ is parallel to $a_{i-1}b'$.
The second equality implies then the expression for $R_i=1-P_i$.
\end{proof}

From the Claim we have, $(d_ib') = R_i \cdot (f_ib')$ 
and since $d_{i-1}$ lies left of $f_i$, we have $(d_ib') \leq R_i \cdot (d_{i-1}b')$.
Therefore, $$(d_kb') \leq  (d_1b') \cdot \prod_{i=2}^k R_i .$$

Let $t_i, t_{i-1}$ be the intersections of $Ob'$ with the lines $a_id_i$ and $a_{i-1}d_{i-1}$ respectively;
see Figure \ref{tight-upper}.
Clearly, $(b't_i) /(b't_{i-1}) \leq (b'd_i)/(b'd_{i-1})$, and hence $(b't_i) \leq R_i \cdot(b't_{i-1})$.
Thus, $G_i = \frac{(b't_i)}{Ot_i} \leq \frac{R_i (b't_{i-1})}{(Ot_{i-1})} = R_i \cdot G_{i-1}$.
Therefore,
$$ G_k \leq G_1 \cdot \prod_{i=2}^k R_i.$$

\begin{claim} \label{claim:tu2}
$H_k > \eps$ and $G_k > \eps$.
\end{claim}
\begin{proof}
Since the last iteration of the Chord algorithm adds a new point $a_{k+1}$, 
the segment $a_kb'$ does not $\eps$-cover all the Pareto points left of $q^{\ast}$ 
in the active triangle. These points are all in the triangle $\triangle(a_kd_kb')$.
The ratio distance of any point in this triangle from $a_kb'$ is upper bounded
by both $(a_ks_k)/(Os_k) = H_k$ and by  $(b't_k)/(Ot_k) = G_k$.
It follows that $H_k > \eps$ and $G_k > \eps$.
\end{proof}

 Thus, we get
\begin{equation}
\prod_{i=1}^k P_i > \eps / 2^{2m}  \label{p-bound}
\end{equation}
and 
\begin{equation}
G_1 \cdot \prod_{i=2}^k R_i  > \eps  \label{g-bound}
\end{equation}

\begin{claim} \label{claim:tu3}
$\prod_{i=2}^k R_i>1/2$.
\end{claim}
\begin{proof}
If $G_1 \leq 2\eps$ then the claim follows from inequality (\ref{g-bound}).
So suppose that $G_1 > 2\eps$.
The point $a_1$ is at ratio distance at most $\eps$ from the line $aq^{\ast}$
since $\{a, q^{\ast},b\}$ is an $\epsilon$-convex Pareto set. 
Therefore, $q^{\ast}$ is
at most excess ratio distance $\eps$ from the line $a_1d_1$, because $a_1d_1$ is parallel to $ab'$ and $b'$ is to the right of
$q^{\ast}$. 
Since $G_1 > 2\eps$, it follows that $(d_1q^{\ast})<(q^{\ast}b')$ and hence $(d_1b')<2(q^{\ast}b')$. Therefore,
$$(d_kb') \leq (d_1b') \cdot \prod_{i=2}^k R_i  < 2(q^{\ast}b') \cdot \prod_{i=2}^k R_i .$$
Since $(d_kb') > (q^{\ast}b')$ (as $d_k$ is left of $q^{\ast}$), we conclude that
$$\prod_{i=2}^k R_i > 1/2.$$ 
\end{proof}

Thus, we have a lower bound on the product of the $P_i$'s from
inequality (\ref{p-bound}) and on the product of the $R_i$'s from Claim \ref{claim:tu3}.
It is easy to see (and is well-known) that for a fixed product of the $P_i$'s, the product of the
$R_i$'s is maximized if all factors are equal. We include a proof for convenience.

\begin{claim}
Let $0 < x_i <1$ for $i=1,\ldots , k$.
The maximum of $\prod_i (1-x_i)$ subject to $\prod_i x_i =c$ is achieved when
all the $x_i$ are equal.
\end{claim}
\begin{proof}
Suppose $k=2$. Then $(1-x_1)(1-x_2) = 1 -(x_1+x_2) + x_1x_2$ is maximized subject to $x_1x_2=c$,
when $x_1+x_2$ is minimized, which happens when $x_1=x_2$ by the arithmetic-geometric mean inequality
($x_1+x_2 \geq 2\sqrt{x_1x_2}$ with equality iff $x_1=x_2$.
For general $k \geq 2$, if the $x_i$'s maximize $\prod_i (1-x_i)$ subject to $\prod_i x_i =c$
then we must have $x_i=x_j$ for all pairs $i \neq j$, because otherwise
replacing $x_i, x_j$ by their geometric mean will
increase $\prod_i (1-x_i)$.
\end{proof}

Thus, for any value of $\prod_{i=2}^k P_i $, the product $\prod_{i=2}^k R_i$
is maximized when  $P_i = 1/t$ for all $i =2,\ldots,k$ and $R_i = 1-1/t$.
Since $\prod_{i=2}^k R_i > 1/2$, we must have $k-1 < t$ because $(1-1/t)^t < 1/e <1/2$.
Therefore, $\epsilon/2^{2m} < \prod_{i=2}^k P_i < 1/(k-1)^{k-1}$,
hence $(k-1)^{k-1} < 2^{2m}/\epsilon$,
which implies that
$k= O \big( \frac{m+\log(1/\epsilon)}{\log m + \log\log(1/\epsilon)} \big)$.

\medskip

We now proceed to analyze the general case, essentially by reducing it to the aforementioned special case.  
Suppose that the optimal
solution has an arbitrary number of points, i.e., has the form $Q^{\ast} = \langle a, q_1, q_2, \ldots, q_r, b \rangle$. Charge
the points computed by the Chord algorithm to the edges of the optimal solution as follows:
if a point belongs to the portion of the lower envelope between the points $q_{i-1}$ and $q_{i}$
(where we let $a=q_0$ and $b=q_{r+1})$, then we charge the point to
the edge $q_{i-1}q_{i}$; if the Chord algorithm generates a point $q_i$ of the
optimal solution then we can charge it to either one of the adjacent edges.

We claim that every edge of $Q^{\ast}$ is charged with at most $2k+1$ points of the
Chord algorithm, where $k= O \big( \frac{m+\log(1/\epsilon)}{\log m +
\log\log(1/\epsilon)} \big)$ is the same number as in the
above analysis for the $\opt_{\eps}=3$ case.
To see this, consider any edge $q_{i-1}q_{i}$ of $Q^{\ast}$.
Let $a_0$ be the first point generated by the Chord algorithm that is charged to
this edge, i.e., $a_0$ is the first point that lies between $q_{i-1}$ and $q_{i}$.
We claim that the Chord algorithm will generate at most $k$ more points 
in each of the two portions $LE(q_{i-1}a_0)$ and $LE(a_0q_i)$ of the
lower envelope. The argument for the two portions is symmetric.

Consider the portion $\env(a_0q_i)$. The proof that the Chord algorithm will introduce
at most $k$ points in this portion is identical to the proof we gave above for the
$\opt_{\eps}=3$ case, with $a_0$ in place of $a$ and $q_i$ in place of $q^{\ast}$.
The only fact about the assumption $\opt_{\eps}=3$ that was used there was that
the edge $aq^{\ast}$ $\epsilon$-covers the portion of the lower envelope 
between $a$ and $q^{\ast}$.
It is certainly true here that the segment $a_0q_i$ $\epsilon$-covers 
the portion $\env(a_0q_i)$, since the edge $q_{i-1}q_i$ $\epsilon$-covers
$\env(q_{i-1}q_i) \supseteq \env(a_0q_i)$.
Hence, by the same arguments, the Chord algorithm will generate at most $k$
points between $a_0$ and $q_i$. Thus, the algorithm will generate no more
than  $(2k+1)(r+1)$ points overall, and hence its performance ratio is
$O \big( \frac{m+\log(1/\epsilon)}{\log m +
\log\log(1/\epsilon)} \big)$.
This completes the proof.
\end{proof}


\begin{Remark}
\emph{We briefly sketch the differences in the algorithm and its analysis for the case of an approximate Comb routine. 
First, in this case, the description of the Chord algorithm (Table~\ref{table:chord}) has to be slightly modified; this is needed 
to guarantee that the set of computed points is indeed an $\eps$-CP. In particular, in the Chord routine, 
we need to check whether $\rd(q, lr) \le \eps'$ for an appropriate $\eps'<\eps$. In particular, we choose $\eps'$
such that $(1+\eps')(1+\delta) \le (1+\eps)$, where $\delta$ is the accuracy of the approximate Comb routine, i.e.,  the routine $\comb_{\delta}$.
Consider the case that the $\comb_{\delta}$ routine always returns feasible points that belong to a $(1+\delta)$ scaled version of 
the lower envelope. The same analysis as in the current section establishes that 
the Chord algorithm performs at most $O \big( \frac{m+\log(1/\eps')}{\log m + \log\log(1/\eps')} \big)  \opt_{\eps'}$ 
calls to $\comb_{\delta}$ in this setting. If $\eps'$ is ``close'' to $\eps$ (say, $\eps' \ge \eps/2$) the first term is clearly 
$O \big( \frac{m+\log(1/\eps)}{\log m + \log\log(1/\eps)} \big)$. 
Hence, to prove the desired upper bound, it suffices to show that $\opt_{\eps'} = O(\opt_{\eps})$. (It is clear that $\opt_{\eps'} \ge \opt_{\eps}$, but in principle it may be
the case that $\opt_{\eps'}$ is arbitrarily larger.) This is provided to us by a planar geometric lemma from~\cite{DY2} (Lemma~5.1) which states that if
$(1+\eps') \ge \sqrt{1+\eps}$ then $\opt_{\eps'} \le 3 \opt_{\eps}$. Selecting $\eps' = \delta = \sqrt{1+\eps} -1 \ge \eps/2$ suffices for the above and completes our sketched description.}
\end{Remark}

\section{Average Case Analysis} \label{sec:average}
In Section~\ref{ssec:average-upper} we present our average case upper bounds and in Section~\ref{ssec:average-lower}
we give the corresponding lower bound.

\subsection{Upper Bounds} \label{ssec:average-upper}
In Section~\ref{ssec:ppp-upper} we start by proving our upper bound for random instances drawn from a Poisson Point Process (PPP). 
The analysis for the case of unconcentrated product distributions is somewhat more involved and is given in Section~\ref{ssec:prod-upper}. 

\smallskip

\noindent \textbf{Overview of the Proofs.}  The analysis for both cases has the same overall structure, however each case
has its difficulties. We start by giving a high-level overview of the arguments. For the sake of simplicity, in the
following intuitive explanation, let $n$ denote: (i) the expected number of points in the instance for a PPP and (ii) the
actual number of points for a product distribution. 

Similarly to the simple proof of Section~\ref{sssec:worst-upper-simple} for worst-case instances, to analyze our distributional instances 
we resort to an indirect measure of progress, namely the area of the triangles maintained in the algorithm's recursion tree. 
We think that this feature of our analysis is quite interesting and indicates that this measure is quite robust.

In a little more detail, we first show (see Lemma~\ref{lem:area recursion} for the case of PPP) that every subdivision performed by the algorithm
decreases the area between the upper and lower approximations by a significant amount (roughly at an exponential rate) with
high probability. It then follows that at depth $\log \log n$ of the recursion tree, each ``surviving triangle''  contains an
expected number of at most $\log \log n$ points with high probability. We use this fact, together with a charging argument in
the same spirit as in the worst-case, to argue that the expected performance ratio is $\log \log n$ in this case.

To analyze the expected performance ratio in the complementary event,
we break it into a ``good'' event,
under which the ratio is $\log n$ with high probability,
and a ``bad'' event, where it is potentially unbounded (in the Poisson case)
or at most $n$ (for the case of product distributions).
The potential unboundedness of the performance ratio in the Poisson case creates
complications in bounding the expected ratio of the algorithm over the entire space.
We overcome this difficulty by bounding the upper tail of the Poisson distribution (see Claim~\ref{claim:ppp-tail}).

In the case of product distributions, the worst case bound of $n$ on the competitive ratio
is sufficient to conclude the proof, but the technical challenges present themselves
in a different form. Here, the ``contents'' of a triangle being processed by the algorithm depend
on the information coming from the previous recursive calls making the analysis more involved.
We overcome this by understanding the nature of the information provided from the conditioning.

\smallskip

\noindent \textbf{On the choice of parameters.} A simple but crucial observation concerns the interesting range for the
parameters of the distributions. Suppose that we run the Chord algorithm with desired error $\eps>0$ on some 
random instance that lies entirely in the set  $[2^{-m}, 2^{m}]^2$. Then, it is no loss of generality to assume
that the number of random points in the instance (expected number for the PPP case) is upper bounded by some fixed
polynomial in $2^m$ and $1/\eps$. If this is not the case, it is easy to show that the Chord algorithm  makes at most
a constant number of $\comb$ calls in expectation.

\subsubsection{Poisson Point Process} \label{ssec:ppp-upper}
For the analysis of the PPP case we will make crucial use of the following technical claim:
\begin{claim} \label{claim:ppp-tail}
Let $X$ be a ${\rm Poisson}(\nu)$ random variable, with $\nu \ge 1$, and let $\cal E$ be some event. Then
$$\mathbb{E}[X~|~{\cal E}] \Pr[{\cal E}] \le  \max\left\{{1 \over \nu}, O(\nu^3) \Pr[{\cal E}] \right\}.$$
\end{claim}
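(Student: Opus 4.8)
The plan is to reduce the whole statement to a single tail estimate for the Poisson distribution, using a rearrangement/concavity principle to absorb the arbitrary event $\mathcal{E}$. First I would rewrite the left-hand side as $\mathbb{E}[X \mid \mathcal{E}]\Pr[\mathcal{E}] = \mathbb{E}[X \mathbf{1}_{\mathcal{E}}]$, and observe that, since $X \ge 0$, among all events of a fixed probability $p := \Pr[\mathcal{E}]$ the quantity $\mathbb{E}[X \mathbf{1}_{\mathcal{E}}]$ is maximized by an upper level set of $X$ (the ``top $p$-mass'' of the distribution). Define $g(p) := \sup\{\mathbb{E}[X\mathbf{1}_A] : \Pr[A]=p\}$, so the left-hand side is at most $g(p)$ and it suffices to bound $g$. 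Writing $g$ through the quantile function $Q$ of $X$ as $g(p) = \int_{1-p}^{1} Q(u)\,du$, one sees that $g$ is increasing and concave with $g(0)=0$; equivalently, $g(p)/p$ is the average of $Q$ over the top $p$-mass, which is non-increasing in $p$ because enlarging $p$ pulls in smaller values of $Q$.

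The concavity is exactly what lets me play the two arguments of the $\max$ against each other, and it shows that the entire claim follows from the single estimate $g(c/\nu^4) \le 1/\nu$ for an absolute constant $c>0$. Indeed, for $p \le c/\nu^4$ monotonicity of $g$ gives $g(p) \le g(c/\nu^4) \le 1/\nu$, matching the first term of the $\max$; and for $p > c/\nu^4$ the fact that $g(p)/p$ is non-increasing gives
\[ g(p) \;\le\; \frac{g(c/\nu^4)}{c/\nu^4}\,p \;\le\; (\nu^3/c)\,p \;=\; O(\nu^3)\,p , \]
matching the second term. A useful point here is that I am free to take $c$ as small as convenient, since any such constant is swallowed by the $O(\nu^3)$; this freedom is what removes any need for a separate small-$\nu$ analysis.

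It remains to establish $g(c/\nu^4)\le 1/\nu$, which is the technical heart. For any integer threshold $t$ I would split an optimal event at $\{X<t\}$ versus $\{X\ge t\}$ to get $g(p) \le t p + \mathbb{E}[X\mathbf{1}_{X\ge t}]$, and use the elementary Poisson identity $\mathbb{E}[X\mathbf{1}_{X\ge t}] = \nu\,\Pr[X\ge t-1]$. Taking $t = \lceil e^2\nu\rceil + 1$ and $p = c/\nu^4$, the first term is at most $(e^2+2)c/\nu^3 \le 1/(2\nu)$ once $c \le 1/\big(2(e^2+2)\big)$ and $\nu \ge 1$. For the second term I would apply the standard Chernoff bound $\Pr[X\ge s] \le e^{-\nu}(e\nu/s)^s$ (valid for $s \ge \nu$) at $s = t-1 \ge e^2\nu$, giving $\Pr[X \ge t-1] \le e^{-(1+e^2)\nu}$, so that $\nu\,\Pr[X\ge t-1] \le \nu\,e^{-(1+e^2)\nu} \le 1/(2\nu)$ for all $\nu \ge 1$ (the exponential dominates the polynomial). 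Adding the two pieces yields $g(c/\nu^4) \le 1/\nu$, and the claim follows. The main obstacle is precisely this last estimate: the threshold $t$ must be kept at a constant multiple of $\nu$ (not $\nu^3$) so that the $tp$ term stays below $1/(2\nu)$, yet large enough that the Poisson upper tail past $t$ is super-polynomially small — and the Chernoff bound is exactly what makes these two requirements simultaneously satisfiable.
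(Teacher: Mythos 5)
Your proof is correct, and it takes a genuinely different route from the paper's. The paper also begins (implicitly) with the rearrangement observation that the worst event is a top level set: it picks the quantile $k^*$ with $\Pr[X \ge k^*+1] < \Pr[\mathcal{E}] \le \Pr[X \ge k^*]$, bounds the left-hand side by $\sum_{i \ge k^*} i\Pr[X=i] = \nu \Pr[X \ge k^*-1]$ (the same shift identity you use), and then does a case analysis on where $k^*$ sits: if $k^*-1 \ge 2\nu^2$, Chebyshev's inequality gives the $1/\nu$ branch; if $k^*-1 \le 2\nu^2$, the consecutive-probability ratios $\Pr[X=k] = \frac{k+1}{\nu}\Pr[X=k+1]$ pull $\Pr[X \ge k^*-1]$ back to $O(\nu^2)\Pr[\mathcal{E}]$, giving the $O(\nu^3)\Pr[\mathcal{E}]$ branch. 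You instead package the rearrangement into the concave function $g(p)$ and exploit that $g(p)/p$ is non-increasing, which collapses the entire two-branch statement into the single pointwise estimate $g(c/\nu^4) \le 1/\nu$, proved by truncation at $t = \Theta(\nu)$ plus the Chernoff bound $\Pr[X \ge s] \le e^{-\nu}(e\nu/s)^s$. The paper's argument is more elementary (nothing beyond Chebyshev and pmf ratios, no moment generating function) and is tightly adapted to the Poisson pmf; yours is more modular and general -- the concavity reduction works for any nonnegative random variable, the only distribution-specific input is a single upper-tail bound, so the same proof extends verbatim to any variable with exponential-type tails, and it produces explicit constants (e.g., $2(e^2+2)$ in the $O(\nu^3)$). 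One small point of rigor: since the Poisson distribution is atomic, an event of probability exactly $p$ realizing the supremum in your definition of $g$ need not exist, so $g$ should be taken to be the quantile integral $\int_{1-p}^{1} Q(u)\,du$, with the bound $\mathbb{E}[X\mathbf{1}_{\mathcal{E}}] \le g(p)$ supplied by the Hardy--Littlewood inequality; as you note, all your subsequent steps (monotonicity, concavity, and the truncation bound $g(p) \le tp + \mathbb{E}[X\mathbf{1}_{X \ge t}]$) hold verbatim for this definition.
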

\begin{proof}
Let $k^*$ be such that $\Pr[X \ge k^*+1] < \Pr[{\cal E}] \le \Pr[X \ge k^*]$. Clearly,
\begin{equation*}
\mathbb{E}[X~|~{\cal E}] \Pr[{\cal E}] \leq \littlesum_{i=k^*}^{+ \infty} i \cdot \Pr[X=i] 
=  \littlesum_{i=k^*}^{+ \infty} i \cdot {e^{- \nu} \nu^i \over i!} 
= \nu \littlesum_{i=k^*-1}^{+ \infty} {e^{- \nu} \nu^{i} \over i!} = \nu \Pr[X \ge k^*-1].
\end{equation*}
We distinguish two cases. If $k^* -1 \ge 2 \nu^2$, then Chebyshev's inequality yields
$\Pr[X \ge k^*-1] \le {1 \over \nu^2}$ which gives
$$\mathbb{E}[X~|~{\cal E}] \Pr[{\cal E}] \le {1\over \nu}.$$
If $k^*-1 \le 2 \nu^2$, then
\[\Pr[X = k^*] \le {k^*+1 \over \nu} \Pr[X=k^*+1] \le O(\nu) \Pr[{\cal E}]\]
and
\[ \Pr[X = k^*-1] \le {(k^*+1)^2 \over \nu^2} \Pr[X=k^*+1] \le O(\nu^2) \Pr[{\cal E}].\]
Hence,
\begin{eqnarray*}
\mathbb{E}[X~|~{\cal E}] \Pr[{\cal E}] &\le&  \nu  \cdot  \Pr[X \ge k^*-1]\\
&\le& \nu  \cdot  \big(\Pr[{\cal E}]+ \Pr[X = k^*-1] + \Pr[X = k^*]\big)\\
&\le& O(\nu^3) \cdot \Pr[{\cal E}].
\end{eqnarray*}
This concludes the proof of the claim. 
\end{proof}

We start by pointing out that if the intensity of the PPP is very large, the Chord algorithm will terminate after a constant number of 
calls in expectation:
\begin{proposition} \label{fact:dens-ppp}
Let $T_1 = \triangle (abc)$ be at the root of the Chord algorithm's recursion tree and let 
$\nu$ denote the intensity of the PPP. 
Let $\alpha \eqdef \min \{ x(c), y(c) \}$ and $S^{\ast} \eqdef  (\eps^2 \alpha^2/2) \cdot \min \{\lambda_{ab}, 1/\lambda_{ab}\}$. 
If $\nu \ge \nu_0 \eqdef 10 S(T_1) / (S^{\ast})^2$, then $\E [\chd_{\eps}(T_1)]  = O(1)$.
\end{proposition}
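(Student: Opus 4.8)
The plan is to exhibit a single, highly likely event under which the algorithm halts after just three $\comb$ calls, and then to show that its complement is so improbable that even the crude worst-case count contributes only $O(1)$ in expectation. Write $H,L$ for the two legs of the right triangle $T_1$, so that $\lambda_{ab}=H/L$ and $S(T_1)=HL/2$. I may first dispose of the degenerate case $\rd(c,ab)\le\eps$: there the opening test of the Chord routine immediately returns $\{a,b\}$, so $\chd=2$. Hence assume $\rd(c,ab)>\eps$; a short calculation using $x(c),y(c)\ge\alpha$ then gives $\max\{H,L\}>\eps\alpha$. The favorable event is $\mathcal G=\{\,N(G)\ge 1\,\}$, where $G:=\{p\in T_1: h_{\lambda_{ab}}(p)\le h_{\lambda_{ab}}(c)+t^{\ast}\}$ is the corner triangle cut off at $c$ by a line of slope $-\lambda_{ab}$ at level $t^{\ast}:=\eps\,\alpha\,\min\{\lambda_{ab},1\}$ above $c$; a direct computation shows $G$ is similar to $T_1$ and has area exactly $S^{\ast}$.

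First I would prove the geometric core: on $\mathcal G$ the set $\{a,q,b\}$, with $q=\comb(\lambda_{ab})$, is already an $\eps$-CP, so $\chd=3$. Since $h_{\lambda_{ab}}$ is minimized over $T_1$ at $c$ and the chord $ab$ is itself a level set of $h_{\lambda_{ab}}$, no feasible point of smaller $h$-value can fall outside $T_1$; thus the global minimizer $q$ lies in $G$ exactly when $G$ is nonempty. With $q$ at level $\le t^{\ast}$, the lower vertices $a'=ac\cap\ell(q)$ and $b'=bc\cap\ell(q)$ of the two child triangles satisfy $\hd(a',aq)\le t^{\ast}/\lambda_{ab}$ and a vertical gap of $b'$ from $qb$ of at most $t^{\ast}$. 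Dividing each axis-parallel gap by the corresponding coordinate (each at least $\alpha$) converts it to ratio distance, and the choice of $t^{\ast}$ yields $\rd(a',aq)\le\eps$ and $\rd(b',qb)\le\eps$; both recursive calls then return at their first test. I expect this to be the main obstacle: invoking the crude area-distance bound (Lemma~\ref{lemma:small-area-means-done}) would force $G$ to be far smaller, whereas the direct axis-parallel-to-ratio conversion is what allows $S(G)$ to be as large as $S^{\ast}$ and is precisely the origin of the factor $\min\{\lambda_{ab},1/\lambda_{ab}\}$.

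Next I would organize the expectation. Conditioning on $a,b$ being extreme only forbids points in the regions $\{x<x(a)\}$ and $\{y<y(b)\}$, which are disjoint from $G$; by independence of the process on disjoint sets, $N(G)\sim\mathrm{Poisson}(\nu S^{\ast})$ and $\Pr[\overline{\mathcal G}]=e^{-\nu S^{\ast}}$. On $\overline{\mathcal G}$ I use the trivial bound $\chd\le 2N_1+1$, where $N_1:=N(T_1)\sim\mathrm{Poisson}(\nu S(T_1))$, since every point the algorithm selects is undominated and hence lies in $T_1$. Then
\[ \E[\chd]\le 3+2\,\E[N_1\mathbf{1}_{\overline{\mathcal G}}]+\Pr[\overline{\mathcal G}], \]
and Claim~\ref{claim:ppp-tail}, applied with $X=N_1$ (parameter $\nu S(T_1)\ge 1$) and $\mathcal E=\overline{\mathcal G}$, reduces the problem to bounding $\E[N_1\mid\overline{\mathcal G}]\Pr[\overline{\mathcal G}]\le\max\{(\nu S(T_1))^{-1},\,O((\nu S(T_1))^3)\,e^{-\nu S^{\ast}}\}$.

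Finally I would check that both terms are $O(1)$ for every $\nu\ge\nu_0$. The first is at most $(\nu_0 S(T_1))^{-1}=O(1)$. For the second it suffices that $\nu S^{\ast}\ge 3\ln(\nu S(T_1))+O(1)$. Setting $W:=S(T_1)/S^{\ast}=\max\{H,L\}^2/(\eps^2\alpha^2)\ge 1$, the definition $\nu_0=10\,S(T_1)/(S^{\ast})^2$ gives $\nu_0 S^{\ast}=10W$ and $\nu_0 S(T_1)=10W^2$, so at $\nu=\nu_0$ the required inequality is $10W\ge 3\ln(10W^2)+O(1)$, which holds for all $W\ge 1$ (enlarging the absolute constant in $\nu_0$ if needed); and since $\nu S^{\ast}-3\ln(\nu S(T_1))$ has positive derivative $S^{\ast}-3/\nu$ on $[\nu_0,\infty)$, it persists for all $\nu\ge\nu_0$. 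This is exactly the role of the squared $S^{\ast}$ in $\nu_0$: it makes $\nu_0 S^{\ast}$ grow linearly in $W$ while $\ln(\nu_0 S(T_1))$ grows only logarithmically, so the exponentially small $\Pr[\overline{\mathcal G}]$ dominates the cubic point-count factor. Adding the three contributions gives $\E[\chd_\eps(T_1)]=O(1)$.
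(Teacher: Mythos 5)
Your proof is correct and follows essentially the same route as the paper's: both exhibit a corner triangle at $c$ of area at least $S^{\ast}$ (yours is the sub-level set of $h_{\lambda_{ab}}$ at height $t^{\ast}$, the paper's is cut from $\triangle(cp_1p_2)$ with $p_1,p_2$ the $\eps$-scaled points on the legs) whose non-emptiness forces termination after three $\comb$ calls, and both control the complementary event via the $2\times$(number of points) bound, Claim~\ref{claim:ppp-tail}, and the observation that $\nu_0 S^{\ast}=10\,S(T_1)/S^{\ast}$ grows linearly in $\beta=S(T_1)/S^{\ast}\ge 1$ while the cubic factor grows only polynomially. Your write-up is in fact slightly more explicit than the paper's on two points the paper asserts without detail: that both child calls return at their first test (so the algorithm actually certifies the $\eps$-CP after exactly three calls), and that conditioning on $a,b$ being extreme does not perturb the process inside $T_1$.
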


\begin{proof}
First note we can clearly assume that $y(a) > (1+\eps) \cdot y(b)$ and $x(b) > (1+\eps) \cdot x(a)$. 
Let $p_1 = (x(a), (1+\eps) \cdot y(b)) \in ac$ and $p_2 = ((1+\eps) \cdot x(a), y(b)) \in bc$. Let $T^{\ast} \subseteq
\triangle (cp_1p_2)$ be the right triangle of maximum area whose hypotenuse is parallel to $ab$. (This is the shaded
triangle in Figure~\ref{fig:ub-ac}.)
We claim that $S(T^{\ast}) \ge S^{\ast}.$
Indeed, it is clear that $c$ is a vertex of $T^{\ast}$ and that either $p_1$ or $p_2$ (or both) are vertices. 
Hence, one of the edges of $T^{\ast}$ has length at least $\eps \cdot \alpha$. Since
$\lambda_{ab}$ is the slope of the hypotenuse, the other edge has length at least $\min \{ \lambda_{ab}, 1/\lambda_{ab} \} \cdot (\eps 
\alpha)$. 

If there is a feasible point in $T^{\ast}$, the Chord algorithm will find it by calling $\comb(\lambda_{ab})$
and terminate (since such a point forms an $\eps$-CP set).
Let $X^{\ast}$ be the number of random points that land in the triangle $T^{\ast}$. Note that $X^{\ast}$ is a $\mathrm{Poisson} (\mu)$ random variable,
with $\mu = \nu \cdot S(T^{\ast})$. We can write
$$\E [\chd_{\eps}(T_1)]  = \E [\chd_{\eps}(T_1) \mid X^{\ast} =0] \Pr[X^{\ast}=0] +  \E [\chd_{\eps}(T_1) \mid X^{\ast} \ge 1] \Pr[X^{\ast} \ge 1].$$
Observe that the second term is bounded from above by a constant, hence it suffices to bound the first term.
Recall that the number of calls performed by the Chord algorithm is at most twice the number $Y_1$ of feasible points in the root triangle $T_1$.
Therefore, we can write
\[ \E [\chd_{\eps}(T_1) \mid X^{\ast} =0] \Pr[X^{\ast}=0] \le 2 \E [Y_1 \mid X^{\ast} =0] \Pr[X^{\ast}=0]. \]
Recall that $Y_1$ is a $\mathrm{Poisson} \left( \nu_1 \right)$ random variable, where $\nu_1 = \nu \cdot S(T_1)$ and note that we can assume wlog that
$\nu_1 \ge 1$ (since otherwise the expected number of feasible points in $T_1$ is at most $1$ and we are done). 
Hence, by Claim~\ref{claim:ppp-tail} the RHS above is bounded by
\[ 2 \max \left\{ \frac{1}{\nu_1}, O(\nu_1^3) \Pr[X^{\ast}=0] \right\}. \]
Thus, to complete the proof it suffices to show that $\nu_1^3 \Pr[X^{\ast}=0] = O(1)$. First note that this quantity equals 
$\nu^3 S^3(T_1) \exp\left(-\nu S(T^{\ast}) \right)$ which is at most $\nu_0^3 S^3(T_1) \exp\left(-\nu_0 S^{\ast} \right)$ by monotonicity
(which holds for our choice of $\nu_0$). The latter expression is at most $O(\beta^6 \exp(-10\beta))$, where $\beta = S(T_1)/S^{\ast} >1$, 
which is easily seen to be absolutely bounded.
\end{proof}

The same proposition also applies 
for the case that we have an approximate $\comb_{\delta}$ routine (in this case, we
replace $\eps$ by an appropriate $\eps' <\eps$ so that $(1+\delta)(1+\eps') \leq (1+\eps)$). 
We remark that a similar proposition can be shown for the Hausdorff distance; this does not hold for the case of the 
horizontal/vertical distance, which is why the average case bounds of this section do not apply for the latter metrics.

\smallskip

\noindent Note that by assumption $T_1 \subseteq [2^{-m}, 2^{m}]^2$ which implies that $S(T_1) \leq 2^{2m-1}$ and 
$\alpha \geq 2^{-m}$. We also have that $ \eps/2^{2m}
\leq \lambda_{ab} \leq 2^{2m}/\eps$ (since otherwise the set $\{a, b \}$ is an $\eps$-CP) which gives $S^{\ast} \ge \eps^3 2^{-4m-1}.$
Therefore, $\nu_0 = \poly(2^m/\eps).$




\begin{figure}[h!]
\vspace{-.1in}
\begin{center}
\epsfig{file=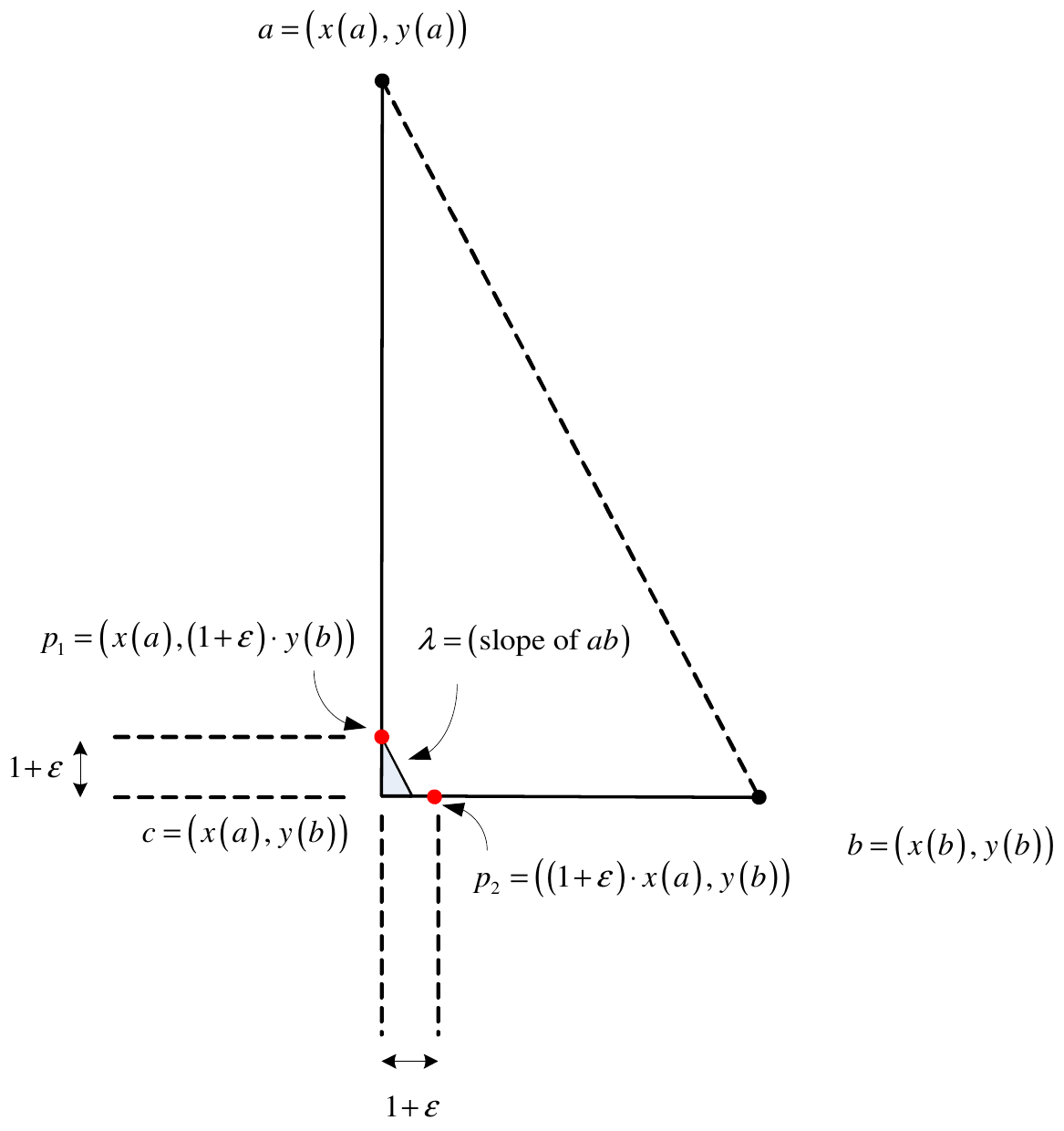,width=8cm}
\end{center}
\caption{Illustration of Proposition~\ref{fact:dens-ppp}.} \label{fig:ub-ac} \vspace{-.1in}
\end{figure}

The main result of this section is 
the following theorem, which combined with Proposition~\ref{fact:dens-ppp}, yields the desired upper bound of $O(\log m + \log \log (1/\eps))$ 
on the expected performance ratio.

\begin{theorem}\label{th: competitive ratio PPP}
Let $T_1$ be the triangle at the root of the Chord algorithm's recursion tree, and suppose that points are inserted into $T_1$
according to a Poisson Point Process with intensity $\nu$. The
expected performance ratio of the Chord algorithm on this instance is $O\left( \log \log \left( \nu S(T_1) \right) \right)$.
\end{theorem}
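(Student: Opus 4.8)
Set $n := \nu S(T_1)$, the expected number of points. The plan is first to dispose of the degenerate regimes: by the parameter discussion preceding this theorem together with Proposition~\ref{fact:dens-ppp}, I may assume that $n$ lies between an absolute constant and $\poly(2^m/\eps)$, since for larger intensities the expected number of calls is already $O(1)$. The overall target is to prove $\chd_\eps \le O(\log\log n)\cdot\opt_\eps$ on a ``good'' event of probability $1-1/\poly(n)$, and then to show that the complementary event, on which the ratio may be arbitrarily large, contributes only $O(1)$ to the expectation. The deterministic fact that makes the second step possible is the envelope used in Proposition~\ref{fact:dens-ppp}: the number of $\comb$ calls is at most $2Y_1$, where $Y_1\sim\mathrm{Poisson}(n)$ counts the feasible points in $T_1$; since $\opt_\eps\ge 1$, the performance ratio is always at most $2Y_1$.

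First I would reduce the upper bound to a statement about chain lengths, reusing the charging scheme of Theorem~\ref{thm:ws-ub-tight}. Classify each node of the recursion tree as a leaf, a one-child (chain) node, or a two-child (branching) node. The tree is binary, so the number of leaves is one more than the number of branching nodes, and each branching node isolates a distinct point of an optimal $\eps$-CP (cf.\ Lemma~\ref{lem:opt-lb}), whence there are $O(\opt_\eps)$ branching nodes and $O(\opt_\eps)$ maximal chains. Consequently $\chd_\eps$ is dominated by the total length of the chains, and it suffices to show that a single chain has expected length $O(\log\log n)$. A chain is precisely the configuration analyzed in the $\opt_\eps=3$ case of Theorem~\ref{thm:ws-ub-tight}: a nested family of active triangles $T^{(0)}\supseteq T^{(1)}\supseteq\cdots$, each contributing one new point on one side of an essentially flat region, so the worst-case bound already caps any chain at $O(\log n)$ deterministically.

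The crux is to show that, for a Poisson instance, chains are exponentially shorter than in the worst case. Here I would invoke the probabilistic area recursion (Lemma~\ref{lem:area recursion}): the random extreme point returned by $\comb$ typically sits near the tip of the active triangle and cuts off most of its mass, so each subdivision shrinks the active area not by the fixed constant factor of Lemma~\ref{lem:area-half} but by a factor that sharpens as the point count grows, except with probability $1/\poly$. Writing $N_i$ for the Poisson count of points in $T^{(i)}$, this produces a recursion of the form $N_{i+1}\lesssim N_i^{1-\delta}$ holding with high probability, so that $\log N_i$ contracts geometrically and $N_i=O(1)$ after only $O(\log\log N_0)=O(\log\log n)$ steps; once the active triangle carries $O(1)$ points its area has fallen below the threshold $S_0=\eps^2\alpha^2$ of Lemma~\ref{lemma:small-area-means-done}, and the chain terminates within $O(1)$ further subdivisions. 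Union-bounding the per-step failure probabilities over the $O(\log n)$ levels and $O(\log n)$ chains yields a good event $\mathcal{G}$ with $\Pr[\overline{\mathcal{G}}]\le 1/\poly(n)$ on which every chain has length $O(\log\log n)$, and hence $\chd_\eps/\opt_\eps=O(\log\log n)$.

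Finally I would control the bad event. On $\overline{\mathcal{G}}$ the ratio can be unbounded, but it never exceeds $2Y_1$, so its contribution to the expectation is $2\,\E[Y_1\mid\overline{\mathcal{G}}]\Pr[\overline{\mathcal{G}}]$. Applying Claim~\ref{claim:ppp-tail} with $X=Y_1$ (Poisson mean $n\ge 1$) and $\mathcal{E}=\overline{\mathcal{G}}$ bounds this by $2\max\{1/n, O(n^3)\Pr[\overline{\mathcal{G}}]\}$, which is $O(1)$ once $\Pr[\overline{\mathcal{G}}]=O(n^{-4})$; I can always force this by choosing the polynomial in the area-recursion failure estimate to have large enough degree. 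Combining the two events gives $\E[\chd_\eps/\opt_\eps]=O(\log\log n)+O(1)=O(\log\log(\nu S(T_1)))$. I expect the genuine difficulty to lie entirely in the area recursion of the third paragraph: one must establish power-law (not merely constant-factor) shrinkage, and, more delicately, handle the conditioning, since the point process inside an active triangle is not a fresh PPP but is conditioned on every previous $\comb$ answer that carved the triangle out; the argument must verify that enough independence survives for the high-probability shrinkage bound to be applied level by level.
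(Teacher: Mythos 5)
Your proposal correctly identifies the probabilistic engine (the area-shrinkage Lemma~\ref{lem:area recursion} and the Poisson-tail Claim~\ref{claim:ppp-tail}), but the combinatorial reduction it rests on is wrong. You claim that each branching (two-child) node "isolates a distinct point of an optimal $\eps$-CP (cf.\ Lemma~\ref{lem:opt-lb})", hence that there are $O(\opt_\eps)$ branching nodes and $O(\opt_\eps)$ maximal chains. Lemma~\ref{lem:opt-lb} does not give this: its disjointness argument applies only to the \emph{lowest} internal nodes, which form an antichain and therefore have pairwise disjoint triangles. Branching nodes are in general nested (one may be an ancestor of another), and nested triangles can all contain the \emph{same} optimal point. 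Concretely, when $\opt_\eps=3$ every internal node's triangle must contain the single intermediate optimal point $q^{\ast}$, so the internal nodes form one root-to-leaf path; yet each node on that path may also have a second child that is a leaf node (a call whose $\comb$ answer certifies error at most $\eps$), so the number of branching nodes can be of the order of the path length, i.e.\ $\Theta\bigl(\frac{m+\log(1/\eps)}{\log m+\log\log(1/\eps)}\bigr)$, while $\opt_\eps=3$. Thus "number of chains $=O(\opt_\eps)$" is false, and bounding every chain by $O(\log\log n)$ would not bound the ratio. The correct charging, and the one the paper uses, is $|\mathcal{V}|\le O(d)\cdot|\mathcal{L}'|$ (every internal node is an ancestor of a lowest internal node): the paper prunes the recursion tree at depth $d^{\ast}=O(\log\log(\nu S(T_1)))$, charges the at most $2d^{\ast}|\mathcal{L}'_{d^{\ast}}|$ nodes above the cut to $\opt_\eps\ge|\mathcal{L}'_{d^{\ast}}|$, and pays for everything below the cut by twice the (whp $O(\log\log(\nu S_1))$ in expectation) number of Poisson points inside each depth-$d^{\ast}$ triangle (Lemma~\ref{lem: competitive ratio in the good event}).

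The bad-event accounting also fails as written. You need $\Pr[\overline{\mathcal{G}}]=O(n^{-4})$ to make Claim~\ref{claim:ppp-tail} yield an $O(1)$ contribution, and you assert this can be forced "by choosing the polynomial in the area-recursion failure estimate to have large enough degree." But the failure probability of the $\log\log$-scale shrinkage step is the probability that a region carrying an expected $c\ln\ln(\nu S_1)$ points is empty, namely $(\ln(\nu S_1))^{-c}$: this is inverse \emph{polylogarithmic} for every constant $c$ and can never be made inverse polynomial without inflating the retained area to the $\ln(\nu S_1)/\nu$ scale --- which then leaves $\Theta(\log(\nu S_1))$ expected points per surviving triangle and only an $O(\log(\nu S_1))$ ratio. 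This tension is exactly why the paper uses a \emph{three}-event decomposition rather than your two: the $\log\log$-scale event $\mathcal{A}$ (probability $1-1/\mathrm{polylog}$, conditional ratio $O(\log\log)$), the intermediate $\log$-scale event $\mathcal{C}=\mathcal{B}\setminus\mathcal{A}$ (probability at most $1/\mathrm{polylog}$, conditional ratio $O(\log)$, so the product is $O(1)$; Lemma~\ref{lem: pessimistic competitive ratio}), and only then the truly rare event $\overline{\mathcal{A}\cup\mathcal{C}}\subseteq\overline{\mathcal{B}}$, whose probability \emph{is} inverse polynomial because the $\log$-scale shrinkage fails with probability $(\nu S_1)^{-c'}$ per step, and to which Claim~\ref{claim:ppp-tail} is applied. (The conditioning issue you flag at the end is real but benign --- the paper handles it by noting that the conditioning consists of known point locations plus excluded half-planes, so the process on the unexplored region remains Poisson; it is the two gaps above that actually break your argument.)
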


The proof of Theorem~\ref{th: competitive ratio PPP} will require a sequence of lemmas.
Throughout this section, we will denote $S_1 \eqdef S(T_1)$. Recall that the number of queries performed by the Chord algorithm 
is bounded from above by twice the total number of points in the triangle $T_1$. Since the expected number of points
in $T_1$ is $\nu \cdot S_1$, we have:

\begin{proposition} \label{lem: super pessimistic competitive ratio}
The expected performance ratio of the Chord algorithm is $O(\nu S_1)$.
\end{proposition}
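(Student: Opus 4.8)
The plan is to control the performance ratio entirely through its numerator and then take expectations, exploiting the Poisson law of the point count. First I would observe that $\opt_\eps(\I) \ge 1$ for every instance, since any correct algorithm must issue at least one call to $\comb$ merely to produce a point of an $\eps$-CP; consequently the (random) performance ratio obeys the pointwise bound $\chd_\eps(\I)/\opt_\eps(\I) \le \chd_\eps(\I)$. Hence it suffices to bound $\E[\chd_\eps]$.

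Next I would invoke the structural estimate on the number of $\comb$ calls recalled just above the statement, namely $\chd_\eps \le 2 Y_1$, where $Y_1$ denotes the number of feasible points of the instance lying in the root triangle $T_1$. This follows from the binary branching of the recursion tree together with the convention that a node exists only when $\comb$ is actually called: each internal node issues one $\comb$ call that introduces a distinct feasible point of $\env(\I)$ inside $T_1$ and then spawns exactly two children, whereas each leaf issues a single terminating call. Since in a binary tree the number of leaves exceeds the number of internal nodes by one, the total number of nodes — and therefore the total number of $\comb$ calls — is at most twice the number of introduced feasible points, which is in turn at most $Y_1$.

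Finally, by Definition~\ref{def:PPP}, $Y_1 = N(T_1)$ is a Poisson random variable with parameter $\nu \cdot S(T_1) = \nu S_1$, so $\E[Y_1] = \nu S_1$. Combining the three facts and using monotonicity of expectation yields $\E\left[\chd_\eps/\opt_\eps\right] \le \E[\chd_\eps] \le 2\,\E[Y_1] = 2\nu S_1 = O(\nu S_1)$, as claimed. I do not expect any genuine obstacle here: this is the deliberately crude ``pessimistic'' fallback bound, and the only points needing (minor) care are the justification that $\opt_\eps \ge 1$, so that the ratio is dominated by the Chord call count, and the elementary counting argument that converts a bound on the number of feasible points into a bound on the number of oracle calls via the binary structure of the recursion tree.
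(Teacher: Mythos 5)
Your proposal is correct and follows essentially the same route as the paper, which likewise bounds the performance ratio by $\chd_\eps$ itself, invokes the fact that the number of $\comb$ calls is at most twice the number of feasible points in $T_1$, and concludes via $\E[N(T_1)] = \nu S_1$ from the Poisson law. The only difference is that you spell out the tree-counting justification of $\chd_\eps \le 2Y_1$ and the bound $\opt_\eps \ge 1$, which the paper treats as recalled facts.
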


Hence, we will henceforth assume that $\nu S_1$ is bounded from below by a sufficiently large positive constant. (If this is not the case,
the expected total number of points inside $T_1$ is $O(1)$ and the desired bound clearly holds.)

Our first main lemma in this section is an average case analogue of our Lemma~\ref{lem:area-half}:
the lemma says that the area of the triangles maintained 
by the algorithm decreases {\em geometrically} (as opposed to linearly, as is the case for arbitrary inputs) at every recursive step (with high probability). 
Intuitively, this geometric decrease is what causes the performance ratio to drop by an exponential in expectation.

\begin{lemma} \label{lem:area recursion}
Let $T_i = \triangle(a_ib_ic_i)$ be the triangle processed by the Chord algorithm at some recursive step. Denote $q_i =
\comb(\lambda_{a_ib_i})$. Let $T_{i,l} = \triangle (a_i a_i' q_i)$ and $T_{i,r}= \triangle (b_ib'_iq_i)$. For all $c >0$, with
probability at least $1-{1 \over \left( {\ln (\nu S_1)} \right)^c}$ conditioning on the information available to the algorithm,
\[ S(T_{i,l}), S(T_{i,r}) \le \sqrt{S(T_i)} \cdot \sqrt{{c \cdot \ln \ln {\nu} S_1 \over \nu}}.\]
\end{lemma}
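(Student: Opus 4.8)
The plan is to reduce the event that the two child triangles are large to the event that a correspondingly large region at the bottom of $T_i$ is free of feasible points, and then to control the latter via the emptiness probability of the Poisson process. I would start from the parametrization used in Lemma~\ref{lem:area-half}: writing $y \eqdef (a_i'c_i)/(a_ic_i) \in [0,1]$, we have $S(T_{i,l})+S(T_{i,r}) = y(1-y)\,S(T_i)$, whereas the triangle $\triangle(a_i'b_i'c_i)$ lying below $\ell(q_i)$ (the line through $q_i$ parallel to $a_ib_i$) has area $y^2\,S(T_i)$. Since $q_i=\comb(\lambda_{a_ib_i})$, the line $\ell(q_i)$ supports the objective space from below, so $\triangle(a_i'b_i'c_i)$ contains no feasible point; this is the one deterministic fact the whole argument hinges on.

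Set $\tau \eqdef \sqrt{S(T_i)}\cdot\sqrt{c\,\ln\ln(\nu S_1)/\nu}$ and $\beta\eqdef \tau^2/S(T_i)=c\,\ln\ln(\nu S_1)/\nu$, and bound each of $S(T_{i,l}),S(T_{i,r})$ by their sum. Because $y(1-y)\le 1/4$ always, if $\tau\ge S(T_i)/4$ the claimed inequality holds with probability $1$, so I may assume $\tau<S(T_i)/4$. The heart of the reduction is the implication
\[
y(1-y)\,S(T_i) > \tau \;\Longrightarrow\; y^2\,S(T_i) > \beta,
\]
which I would verify by cases: if $y\le 1/2$ then $y\ge y(1-y)>\tau/S(T_i)$, so $y^2 S(T_i)>\tau^2/S(T_i)=\beta$; if $y>1/2$ then $y^2 S(T_i)>S(T_i)/4>\beta$ (using $\tau<S(T_i)/4$). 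Thus the bad event for the areas is contained in the event that the empty bottom triangle has area exceeding $\beta$.

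It remains to bound $\Pr[\,y^2 S(T_i)>\beta \mid \text{info}\,]$. The triangles obtained by cutting $T_i$ with a line parallel to $a_ib_i$ and keeping the apex $c_i$ form a nested family, so $\triangle(a_i'b_i'c_i)$ has area larger than $\beta$ exactly when the \emph{fixed} bottom triangle $R_\beta\subseteq T_i$ of area $\beta$ (similar to $T_i$, with apex $c_i$) is free of feasible points. Conditioned on the information available when $T_i$ is processed, I claim the restriction of the process to the interior of $T_i$ is still a PPP of intensity $\nu$; granting this, Definition~\ref{def:PPP} gives
\[
\Pr[R_\beta \text{ empty} \mid \text{info}] = e^{-\nu\,S(R_\beta)} = e^{-\nu\beta} = \big(\ln(\nu S_1)\big)^{-c},
\]
which is precisely the failure probability in the statement. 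Chaining the two reductions yields $\Pr[S(T_{i,l})+S(T_{i,r})>\tau\mid\text{info}]\le (\ln(\nu S_1))^{-c}$, and since each child area is at most the sum, the lemma follows.

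The step I expect to require the most care is the conditioning claim just invoked. I would argue that the information the algorithm possesses upon reaching $T_i$ consists only of (i) the emptiness of the half-planes below the two supporting lines $a_ic_i$ and $b_ic_i$, which border $T_i$ and whose relevant parts are disjoint from its interior, and (ii) the (measure-zero) fact that $a_i,b_i$ lie on $\env(\I)$. By the independence of the Poisson process over disjoint regions (Definition~\ref{def:PPP}(ii)), none of this disturbs the distribution of points strictly inside $T_i$, so $R_\beta$ is a genuinely fresh region carrying an undisturbed PPP of intensity $\nu$. Making this ``freshness'' precise --- i.e.\ formalizing exactly what the online execution has and has not revealed about the interior of the current triangle --- is the only non-routine point; the geometric reduction and the emptiness estimate are then immediate.
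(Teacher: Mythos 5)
Your proof is correct and is essentially the paper's own argument run in contrapositive: your fixed bottom triangle $R_\beta$ of area $c\ln\ln(\nu S_1)/\nu$ is exactly the paper's triangle $T^{\ast} = \triangle(c_i\zeta_i\eta_i)$, and your implication ``child areas large $\Rightarrow R_\beta$ empty'' is the contrapositive of the paper's ``$T^{\ast}$ contains a point $\Rightarrow q_i \in T^{\ast} \Rightarrow$ each child area is at most $\sqrt{S(T_i)\,S^{\ast}}$.'' The only cosmetic differences are that you obtain the area comparison from the algebraic identity of Lemma~\ref{lem:area-half} plus a case analysis where the paper computes it directly via similar triangles, and that your justification of the conditioning step (all revealed emptiness information concerns regions disjoint from the interior of $T_i$) is, if anything, more explicit than the paper's.
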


\begin{proof}
It follows from the properties of the Chord algorithm (see e.g., Claim~\ref{clm:sandwich}) that, 
before the $\comb$ routine on input $T_i$ is invoked, the following
information is known to the algorithm, conditioning on the history (see Figure~\ref{fig:chord-avg}):
\begin{itemize}
\item There exist solution points at the locations $q_j$, for all $j \in [i-1]$.
\item There is no point below the line $a_j c_j$, for all $j \in [i]$.
\item There is no point below the line $c_j b_j$, for all $j \in [i]$.
\end{itemize}
\begin{figure}[h!]
\begin{center}
\epsfig{file=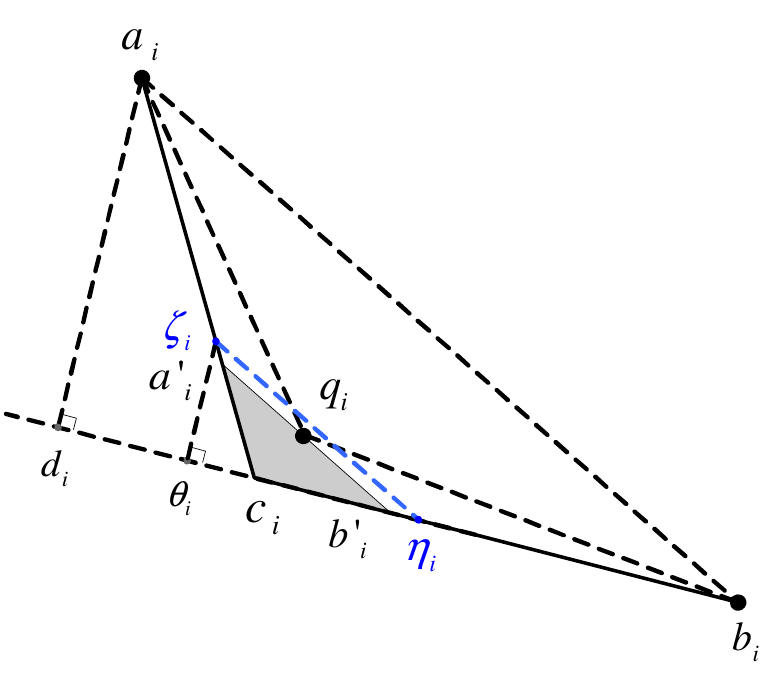,width=9cm}
\end{center}
\caption{Illustration of the average case area shrinkage property of the algorithm.} \label{fig:chord-avg}
\end{figure}
By the definition of the Poisson Point Process,
conditioning on the above information, the number of points in a region of area $S$ inside $T_i$ follows a Poisson
distribution with parameter $\nu \cdot S$. Hence, letting $\zeta_i \eta_i$ being parallel to $a_ib_i$ so that the triangle
$T^{\ast} \eqdef \triangle (c_i \zeta_i \eta_i)$ has area $S^{\ast} \eqdef {c \cdot \ln \ln (\nu S_1) \over \nu}$, it follows that,
with probability at least $1-{1 \over ({\ln (\nu S_1)})^c}$, $T^{\ast} \neq \emptyset$ (i.e., the triangle contains a feasible point). 
Hence, with probability at least $1-{1 \over ({\ln (\nu S_1)})^c}$ the point $q_i$ is contained in $T^{\ast}$. 

We bound from above the area of $T_{i,l}$ by the area of $T_{i,l}'= \triangle (a_i c_i \eta_i)$ (clearly, $T_{i, l} \subseteq T_{i,l}'$) and similarly the area of
$T_{i,r}$ by the area of $T_{i,r}'= \triangle(\zeta_i c_i b_i)$.
From the similarity of the triangles $T_i$ and $T^{\ast}$ we get
\begin{align*}
{(\zeta_i \eta_i) \over (a_ib_i)} = {(c_i \eta_i) \over (b_ic_i)} = {(\zeta_i \theta_i) \over (a_id_i)}.
\end{align*}
Hence,
$${S^{\ast} \over S(T_i)} = { {1 \over 2} (c_i\eta_i) (\zeta_i\theta_i) \over {1 \over 2} (b_ic_i)(a_id_i)} = \left({(c_i\eta_i) \over (b_ic_i)}\right)^2,$$
which gives $(c_i\eta_i) = (b_ic_i) \sqrt{{S^* \over S(T_i)}}$. Therefore,
\begin{eqnarray*}
S(T_{i,l}')&=& {1 \over 2} (a_id_i) (c_i\eta_i) = {1 \over 2} (a_id_i)(b_ic_i) \sqrt{{S^{\ast} \over S(T_i)}} \\
           &=& \sqrt{S(T_i) \cdot S^{\ast}} = \sqrt{S(T_i)} \cdot \sqrt{{c \cdot \ln \ln (\nu S_1) \over \nu}}.
\end{eqnarray*}
Finally,
$$S(T_{i,r}') = {1 \over 2} (\zeta_i\theta_i)(b_ic_i) = {1 \over 2} {(c_i\eta_i) \over (b_ic_i) } (a_id_i)(b_ic_i) =
S(T_{i,l}').$$ This concludes the proof of Lemma~\ref{lem:area recursion}.
\end{proof}

Let us choose $c \in \left({1 \over \ln \ln (\nu S_1)}, {\nu S_1  \over \ln \ln (\nu S_1)}\right)$, and let $T$ be
a triangle maintained by the algorithm at depth $d$ of the recursion. 
It follows from Lemma~\ref{lem:area recursion} that, with probability at least $1-{d \over ({\ln (\nu S_1)})^c}$,
$$S(T) \le S_1^{2^{-d}} \cdot \left( {c \cdot \ln \ln (\nu S_1) \over \nu} \right)^{1- 2^{-d}},$$
where to bound the probability of the above event we have taken a union bound over the events on the path of the recursion
tree connecting $T$ to the root of the recursion. 

Now consider the top $d^{\ast}:=\left\lceil \log_2 \ln (\nu S_1) \right\rceil$
levels of the recursion tree of the algorithm. 
Using Lemma~\ref{lem:area recursion} and a union bound it follows that, with overall probability at least $1-
{2 \cdot \ln (\nu S_1)  \over ({\ln (\nu S_1)})^c}$, the area of \emph{every} triangle at depth (exactly) $d^{\ast}$ of the recursion tree
is bounded from above by $$S^{\ast \ast} := \left( e \cdot c \cdot \ln \ln (\nu S_1) \right)/\nu,$$ where we used our assumption on the range of $c$.

Let $\cal A$ be the event that all the nodes (triangles) maintained by the algorithm at depth $d^{\ast}$ of the recursion tree (if
any) have area at most $S^{\ast \ast}$. In the aforementioned, we argued that the probability of the event $\cal A$ is at least 
$1- {2  \over \left( {\ln (\nu S_1)} \right)^{c-1}}$. We now show the following:

\begin{lemma} \label{lem: competitive ratio in the good event}
Conditioning on $\cal A$, the expected performance ratio of the Chord algorithm is $O\left(\log \log (\nu S_1) \right)$.
\end{lemma}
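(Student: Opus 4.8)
The plan is to split the recursion tree at depth $d^{\ast} = \lceil \log_2 \ln(\nu S_1) \rceil$ and to bound separately the number of $\comb$ calls made in the top $d^{\ast}$ levels and the number made in the subtrees hanging below depth $d^{\ast}$. Since every $\comb$ call corresponds to exactly one node of the recursion tree, $\chd_\eps$ equals the total number of nodes, so it suffices to control these two quantities. Throughout I would condition on the event $\cal A$, which guarantees that every triangle surviving to depth $d^{\ast}$ has area at most $S^{\ast \ast} = (e \cdot c \cdot \ln \ln (\nu S_1))/\nu$, and hence contains few points in expectation.

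For the top part, the key observation is a per-level refinement of Lemma~\ref{lem:opt-lb}. Fix a depth $d$ and consider the internal nodes at that depth. These form an antichain in the recursion tree (none is an ancestor of another), so by the nesting property of the algorithm's triangles they are pairwise interior-disjoint (neighbouring triangles meet only at a shared endpoint). Each internal node has error strictly larger than $\eps$, hence, exactly as in the proof of Lemma~\ref{lem:opt-lb}, its triangle contains a point of the optimal $\eps$-CP set other than its two envelope endpoints; disjointness makes these points distinct, so there are at most $\opt_\eps$ internal nodes at each depth and therefore at most $2\opt_\eps$ nodes at each depth. Summing over the top $d^{\ast}$ levels yields at most $O(\opt_\eps \cdot d^{\ast}) = O(\opt_\eps \cdot \log \log (\nu S_1))$ calls in the top part.

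For the bottom part, I would invoke the elementary fact (already used in Proposition~\ref{fact:dens-ppp}) that the number of $\comb$ calls inside the subtree rooted at a triangle $T$ is at most twice the number of feasible points lying in $T$. There are at most $2\opt_\eps$ nodes at depth $d^{\ast}$, and conditioned on $\cal A$ together with the information revealed by the algorithm up to depth $d^{\ast}$ (its history $\mathcal{H}$), each such triangle $T$ has area at most $S^{\ast \ast}$; by the defining properties of the Poisson Point Process the number of points $Y_T$ in the live region of $T$ is then Poisson with parameter at most $\nu S^{\ast \ast} = O(\log \log (\nu S_1))$. Conditioning on $\mathcal{H}$ fixes both $\opt_\eps$ and the collection of depth-$d^{\ast}$ triangles, after which the counts $Y_T$ are independent Poisson variables, so the conditional expectation of the bottom calls divided by $\opt_\eps$ is at most $(2 \cdot 2\opt_\eps \cdot \nu S^{\ast \ast})/\opt_\eps = O(\log \log (\nu S_1))$, uniformly over $\mathcal{H}$.

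Combining the two bounds gives $\chd_\eps / \opt_\eps \le 2 d^{\ast} + (\littlesum_{T} 2 Y_T)/\opt_\eps$, where the first summand is $O(\log \log (\nu S_1))$ deterministically and the second has conditional expectation $O(\log \log (\nu S_1))$ by the previous paragraph; taking expectations over $\mathcal{H}$ conditioned on $\cal A$ yields $\E[\chd_\eps / \opt_\eps \mid \cal A] = O(\log \log (\nu S_1))$, as claimed. The step I expect to be the main obstacle is this last one: because $\opt_\eps$ and the random point counts are both functions of the same random instance and are correlated, one cannot naively replace $\E[\chd_\eps/\opt_\eps]$ by $\E[\chd_\eps]/\opt_\eps$; the resolution is precisely to condition on the revealed history first, which simultaneously pins down $\opt_\eps$ and the depth-$d^{\ast}$ triangles while leaving the $Y_T$ as fresh, independent Poisson variables whose expectations are controlled by $\cal A$.
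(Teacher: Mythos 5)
Your decomposition is the same as the paper's: prune the recursion tree at depth $d^{\ast}=\lceil \log_2 \ln (\nu S_1)\rceil$, charge the calls in the pruned (top) part to $\opt_{\eps}$ via a disjoint-triangles argument, and bound the calls hanging below depth $d^{\ast}$ by the Poisson point counts in the small-area triangles guaranteed by ${\cal A}$. Your per-level antichain refinement of Lemma~\ref{lem:opt-lb} is a harmless variant of the paper's counting, which instead uses the set ${\cal L}'_{d^{\ast}}$ of lowest internal nodes of the pruned tree together with $|{\cal V}_{d^{\ast}}| \le 2 d^{\ast} |{\cal L}'_{d^{\ast}}|$ and $\opt_{\eps} \ge |{\cal L}'_{d^{\ast}}|$; both routes give (top calls) $\le O(d^{\ast}) \cdot \opt_{\eps}$ pathwise.

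The genuine flaw is in your last step, exactly where you yourself locate the main obstacle. The claim that conditioning on the revealed history ${\cal H}$ ``fixes $\opt_{\eps}$'' is false: $\opt_{\eps}$ is a function of the entire random point set, and the points inside the unresolved depth-$d^{\ast}$ triangles --- the very Poisson counts $Y_T$ you are about to average --- are still random given ${\cal H}$ and do change the size of the smallest $\eps$-CP set (e.g., whether a surviving triangle forces one or several additional optimal points depends on its yet-unseen contents). Consequently you cannot pull $\opt_{\eps}$ out of $\E\big[ \big(\littlesum_{T} 2 Y_T\big)/\opt_{\eps} \;\big|\; {\cal H}\big]$ as a constant, and the cancellation $(2\cdot 2\opt_{\eps}\cdot \nu S^{\ast\ast})/\opt_{\eps}$ is not justified. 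The repair --- which is precisely what the paper does --- is never to divide by $\opt_{\eps}$ itself but by an ${\cal H}$-measurable quantity that lower-bounds it pathwise: the paper uses $|{\cal L}'_{d^{\ast}}|$, and in your formulation the number $I$ of internal nodes at depth $d^{\ast}-1$ works equally well, since your own antichain argument gives $\opt_{\eps} \ge I$ deterministically, $I$ is determined by ${\cal H}$, and the number of depth-$d^{\ast}$ triangles is at most $2I$. Then $\E\big[ \big(\littlesum_{T} 2 Y_T\big)/\opt_{\eps} \;\big|\; {\cal H}\big] \le \big(2I \cdot 2\nu S^{\ast\ast}\big)/I = 4\nu S^{\ast\ast} = O(\log\log(\nu S_1))$ (when $I=0$ there are no bottom calls and the bound is trivial), and averaging over histories consistent with ${\cal A}$ completes the proof. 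With this one substitution your argument coincides with the paper's.
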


\begin{proof}
Let $\cal T$ be the recursion tree of the algorithm and ${\cal T}_{d^{\ast}}$ be obtained from $\cal T$ by pruning it at level $d^{\ast}$.
Let ${\cal V}_{d^{\ast}}$ be the set of nodes of ${\cal T}_{d^{\ast}}$
and let ${\cal L}_{d^{\ast}}$ be the subset of nodes in ${\cal V}_{d^{\ast}}$ that lie at depth $d^{\ast}$ from the root. 
Clearly ${\cal L}_{d^*}$ is a subset of the leaves of ${\cal T}_{d^{\ast}}$. 

For a triangle (node) $T$ maintained by the algorithm at depth $d^{\ast}$ of the recursion, that is $T \in {\cal L}_{d^{\ast}}$, we let the
random variable $X_T$ denote the number of points inside $T$. Also, denote by ${\cal L'}_{d^{\ast}}$ the set of lowest internal 
nodes of the tree ${\cal T}_{d^{\ast}}$. By (a straightforward analogue of) Lemma~\ref{lem:opt-lb}, we have $\opt_{\eps} \geq |{\cal L'}_{d^{\ast}}|$. 
Also, since ${\cal T}_{d^{\ast}}$ is a depth $d^{\ast}$ binary tree, it holds $|{\cal V}_{d^{\ast}}| \le 2d^{\ast} \cdot |{\cal L'}_{d^{\ast}}|$.

We condition on the information ${\cal F}$ available to the algorithm in the first $d^{\ast}$ levels of its recursion-tree (without
the information obtained from processing -- i.e., calling $\comb$ for -- any triangle at depth $d^{\ast}$). By assumption, ${\cal F}$
satisfies the event ${\cal A}$. Conditioning on the information ${\cal F}$, for all $T \in {\cal L}_{d^{\ast}}$, $X_T$
follows a Poisson distribution with parameter $\nu \cdot S(T)$. So, given that the event ${\cal A}$ holds, we have
$\mathbb{E}[X_T~\vline~{\cal F}] \le \nu \cdot S^{\ast \ast}$.

Note that the Chord algorithm makes a query to $\comb$ for every node in the tree ${\cal T}_{d^{\ast}}$.
Also recall that the number of queries performed by the algorithm on a triangle $T$ containing a total number of $X_T$
points is at most $2 X_T$. Hence, the expected total number of queries made can be bounded as follows:
\begin{eqnarray*}
\mathbb{E}[\chd_{\eps}~|~{\cal F}] &\le& |{\cal V}_{d^{\ast}}| + 2 \cdot \littlesum_{T \in {\cal L}_{d^{\ast}}} \mathbb{E}[X_T~|~{\cal F}]\\
&\le& |{\cal V}_{d^{\ast}}| + 2 \left|{\cal L}_{d^{\ast}}\right| \cdot \nu \cdot S^{\ast \ast}\\
&\le& |{\cal V}_{d^{\ast}}| +  4 |{\cal L}'_{d^{\ast}}| \cdot \nu \cdot S^{\ast \ast}\\
&\le& |{\cal L}'_{d^{\ast}}|  \cdot (2d^{\ast}+ 4  \nu \cdot S^{\ast \ast}),
\end{eqnarray*}
where the third inequality uses the fact that $\left|{\cal L}_{d^{\ast}}\right| \le 2 \left|{\cal L}'_{d^{\ast}}\right|$. So, conditioning on the
information ${\cal F}$, the expected performance ratio of the algorithm is
\begin{eqnarray*}
\mathbb{E}\left[ {\chd_{\eps} \over \opt_{\eps}}~\vline~{\cal F}\right] &\le& \mathbb{E}\left[ {\chd_{\eps} \over |{\cal
L}'_{d^{\ast}}|}~\vline~{\cal F}\right] \\
&\le& (2d^{\ast}+ 4  \nu \cdot S^{\ast \ast}) \\
&=& O \left( \log \log (\nu S_1) \right).
\end{eqnarray*}
Integrating over all possible ${\cal F}$ in ${\cal A}$ concludes the proof of Lemma~\ref{lem: competitive ratio in the good event}. 
\end{proof}

From Lemma~\ref{lem: competitive ratio in the good event} it follows that
$$\mathbb{E}\left[ {\chd_{\eps} \over \opt_{\eps}}~\vline~{\cal A}\right]=O\left( \log \log (\nu S_1) \right),$$
and from the preceding discussion we have that $\Pr\left[\bar{{\cal A}}\right] \le {2 \over ({\ln (\nu S_1)})^{c-1}}.$ Hence, we
have established the following.

\begin{lemma} \label{lem: good case competitive ratio}
For $c \in \left({1 \over \ln \ln (\nu S_1)}, {\nu S_1 \over \ln \ln (\nu S_1)}\right)$, there exists an event $\cal
A$, with $\Pr[{\cal A}] \ge 1-{2 \over ({\ln (\nu S_1)})^{c-1}}$, such that the expected performance ratio of the algorithm
conditioning on $\cal A$ is $O\left( \log \log (\nu S_1) \right)$.
\end{lemma}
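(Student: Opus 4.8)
The plan is to assemble the statement directly from the two ingredients already developed above, since this lemma merely packages them. I would first fix the depth $d^{\ast} := \lceil \log_2 \ln(\nu S_1) \rceil$ and take $\cal A$ to be the event that every triangle maintained by the algorithm at depth exactly $d^{\ast}$ of the recursion tree has area at most $S^{\ast\ast} := (e \cdot c \cdot \ln\ln(\nu S_1))/\nu$. Two facts then need to be in hand: (i) $\Pr[{\cal A}] \ge 1 - 2/(\ln(\nu S_1))^{c-1}$, and (ii) $\mathbb{E}[\chd_{\eps}/\opt_{\eps} \mid {\cal A}] = O(\log\log(\nu S_1))$.

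For (i), I would invoke Lemma~\ref{lem:area recursion}: conditioned on the history, each individual subdivision shrinks the active triangle enough that both children have area at most $\sqrt{S(T_i) \cdot S^{\ast}}$, with $S^{\ast} = (c \cdot \ln\ln(\nu S_1))/\nu$, except with probability at most $1/(\ln(\nu S_1))^c$. The event $\bar{\cal A}$ occurs only if some subdivision in the top $d^{\ast}$ levels fails; since the recursion tree is binary, the number of such subdivision events is at most $2^{d^{\ast}} \le 2\ln(\nu S_1)$, so a union bound gives $\Pr[\bar{\cal A}] \le 2\ln(\nu S_1)/(\ln(\nu S_1))^c = 2/(\ln(\nu S_1))^{c-1}$. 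Telescoping the per-step bound along any root-to-depth-$d^{\ast}$ path (as computed above) shows that, on $\cal A$, each depth-$d^{\ast}$ triangle indeed has area at most $S^{\ast\ast}$; here the admissible range $c \in (1/\ln\ln(\nu S_1),\ \nu S_1/\ln\ln(\nu S_1))$ is exactly what controls the product exponent $1 - 2^{-d^{\ast}}$ and keeps $\nu S^{\ast\ast} = O(\log\log(\nu S_1))$.

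For (ii), I would simply cite Lemma~\ref{lem: competitive ratio in the good event}, which establishes that conditioned on $\cal A$ the expected performance ratio is $O(\log\log(\nu S_1))$; its proof charges the queries against $\opt_{\eps}$ via the (PPP analogue of the) lowest-internal-node bound of Lemma~\ref{lem:opt-lb} and uses that each surviving depth-$d^{\ast}$ triangle contains $O(\nu S^{\ast\ast}) = O(\log\log(\nu S_1))$ points in expectation. Combining (i) and (ii) yields the lemma. I do not anticipate a genuine obstacle, as the analytic content resides in Lemmas~\ref{lem:area recursion} and~\ref{lem: competitive ratio in the good event}; the one point requiring care is the exponent bookkeeping in the union bound — counting subdivision events (at most $2\ln(\nu S_1)$ of them) rather than root-to-leaf paths, so that the $\ln(\nu S_1)$ factor cleanly drops the exponent from $c$ to $c-1$ without leaving a stray $d^{\ast}$ behind. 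I would track these exponents explicitly rather than absorb them into asymptotic notation.
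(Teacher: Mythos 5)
Your proposal is correct and takes essentially the same route as the paper: the paper also defines $\cal A$ as the event that every triangle at depth $d^{\ast} = \lceil \log_2 \ln (\nu S_1) \rceil$ has area at most $S^{\ast\ast} = (e \cdot c \cdot \ln\ln(\nu S_1))/\nu$, obtains $\Pr[{\cal A}] \ge 1 - 2/(\ln(\nu S_1))^{c-1}$ by telescoping Lemma~\ref{lem:area recursion} and union-bounding over the at most $2\ln(\nu S_1)$ subdivision events in the top levels, and then invokes Lemma~\ref{lem: competitive ratio in the good event} for the conditional expectation. Your bookkeeping (counting nodes rather than paths in the union bound, and noting where the admissible range of $c$ enters the telescoped exponent) is exactly the paper's computation.
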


Let ${\cal B}$ be the event that all the triangles at the level $\lceil \log_2 (\nu S_1) \rceil$ of the recursion tree of the
algorithm (if any) have area at most $(e \cdot c' \cdot \ln \nu S_1) / \nu$. With the same technique, but using
different parameters in the argument, we can also establish the following:

\begin{lemma}\label{lem: pessimistic competitive ratio}
For $c' \in \left({1 \over \ln (\nu S_1)}, {\nu S_1 \over \ln (\nu S_1)}\right)$, there exists an event $\cal B$, with
$\Pr[{\cal B}] \ge 1-{2 \over ({\nu S_1})^{c'-1}}$, such that the expected performance ratio of the algorithm conditioning
on $\cal B$  is $O\left( \log (\nu S_1) \right)$.
\end{lemma}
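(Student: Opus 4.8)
The plan is to re-run the argument that established Lemma~\ref{lem: good case competitive ratio} essentially verbatim, changing only the single per-subdivision target area so that the failure probability is driven down to a negative power of $\nu S_1$ (instead of of $\ln(\nu S_1)$), and carrying the recursion roughly $\log_2(\nu S_1)$ levels deep. Concretely, I would instantiate the area-shrinkage mechanism of Lemma~\ref{lem:area recursion} with target triangle area $S^\ast := c'\ln(\nu S_1)/\nu$ in place of $c\ln\ln(\nu S_1)/\nu$. Since the number of points in the small triangle $T^\ast=\triangle(c_i\zeta_i\eta_i)$ of area $S^\ast$ is Poisson with parameter $\nu S^\ast = c'\ln(\nu S_1)$, the probability that $T^\ast$ is empty is exactly $e^{-c'\ln(\nu S_1)} = (\nu S_1)^{-c'}$; equivalently, this is Lemma~\ref{lem:area recursion} applied with its parameter set to $c'\ln(\nu S_1)/\ln\ln(\nu S_1)$. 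Thus, conditioning on the history, each subdivision satisfies $S(T_{i,l}), S(T_{i,r}) \le \sqrt{S(T_i)\cdot c'\ln(\nu S_1)/\nu}$ with probability at least $1 - (\nu S_1)^{-c'}$.

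Next I would iterate this bound down the tree exactly as before: at depth $d$, a union bound over the $d$ ancestors of a node gives, with probability $\ge 1 - d\,(\nu S_1)^{-c'}$, that $S(T) \le S_1^{2^{-d}}\cdot (c'\ln(\nu S_1)/\nu)^{1-2^{-d}}$. I would then set $d^{\ast\ast} := \lceil \log_2(\nu S_1)\rceil$, so that $2^{-d^{\ast\ast}} \le 1/(\nu S_1)$. Rewriting the right-hand side as $(c'\ln(\nu S_1)/\nu)\cdot\big(\nu S_1/(c'\ln(\nu S_1))\big)^{2^{-d^{\ast\ast}}}$ and invoking the lower endpoint $c' > 1/\ln(\nu S_1)$ of the allowed range (which guarantees $c'\ln(\nu S_1)\ge 1$, so the base is at most $\nu S_1$), the correction factor is at most $(\nu S_1)^{1/(\nu S_1)}\le e$. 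Hence every surviving triangle at depth $d^{\ast\ast}$ has area at most $S^{\ast\ast\ast} := e\,c'\ln(\nu S_1)/\nu$; let $\mathcal B$ be this event. Taking the union bound over the at most $2^{d^{\ast\ast}}\le 2\nu S_1$ internal-node invocations up to depth $d^{\ast\ast}$, each failing with probability $(\nu S_1)^{-c'}$, yields $\Pr[\mathcal B]\ge 1 - 2\nu S_1\cdot(\nu S_1)^{-c'} = 1 - 2/(\nu S_1)^{c'-1}$, as required.

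Finally, conditioning on $\mathcal B$ I would repeat the charging argument of Lemma~\ref{lem: competitive ratio in the good event} with $d^{\ast\ast}$ in place of $d^\ast$. Pruning the recursion tree at level $d^{\ast\ast}$, by (the analogue of) Lemma~\ref{lem:opt-lb} we have $\opt_\eps \ge |\mathcal L'_{d^{\ast\ast}}|$ and $|\mathcal V_{d^{\ast\ast}}| \le 2 d^{\ast\ast}|\mathcal L'_{d^{\ast\ast}}|$; conditioning on the information $\mathcal F$ collected in the top $d^{\ast\ast}$ levels, the points of each leaf triangle $T$ at depth $d^{\ast\ast}$ are still distributed as a Poisson with mean $\nu S(T) \le \nu S^{\ast\ast\ast} = e\,c'\ln(\nu S_1)$. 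Bounding the number of queries by $|\mathcal V_{d^{\ast\ast}}| + 2\sum_{T}\mathbb E[X_T\mid\mathcal F]$ and dividing by $|\mathcal L'_{d^{\ast\ast}}|$ gives an expected ratio of $O(d^{\ast\ast} + \nu S^{\ast\ast\ast}) = O(\log(\nu S_1))$, since both $d^{\ast\ast}$ and $\nu S^{\ast\ast\ast}$ are $O(\log(\nu S_1))$.

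The step I expect to require the most care is checking that the range hypothesis on $c'$ is exactly what controls the correction factor $\big(\nu S_1/(c'\ln(\nu S_1))\big)^{2^{-d^{\ast\ast}}}$: the lower bound on $c'$ keeps this factor bounded by $e$ (so that $S^{\ast\ast\ast} = O(c'\ln(\nu S_1)/\nu)$), while the upper bound $c' < \nu S_1/\ln(\nu S_1)$ ensures the target area $S^\ast$ stays below the total area $S_1$, keeping the recursion meaningful. Everything else is a mechanical transcription of the $\log\log$ argument under the substitutions $\ln\ln(\nu S_1)\mapsto\ln(\nu S_1)$ and $d^\ast\mapsto d^{\ast\ast}$.
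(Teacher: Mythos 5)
Your proposal is correct and follows exactly the route the paper intends: the paper proves this lemma by re-running the argument for the event $\cal A$ (the area-shrinkage Lemma~\ref{lem:area recursion} plus the charging argument of Lemma~\ref{lem: competitive ratio in the good event}) with the substitutions $\ln\ln(\nu S_1)\mapsto\ln(\nu S_1)$ and pruning depth $\lceil\log_2(\nu S_1)\rceil$, which is precisely what you carry out. Your reparametrization check (applying Lemma~\ref{lem:area recursion} with parameter $c'\ln(\nu S_1)/\ln\ln(\nu S_1)$), the union bound over the at most $2\nu S_1$ internal nodes, and the role of the range hypothesis on $c'$ all match the paper's intended argument.
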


We want to use Lemmas~\ref{lem: good case competitive ratio} and~\ref{lem: pessimistic competitive ratio} together with 
Proposition~\ref{lem: super pessimistic competitive ratio} to deduce that the expected performance ratio of the algorithm 
is $O\left( \log \log (\nu S_1) \right)$. This may seem intuitive, but it is in fact not immediate. For technical purposes let us define the event ${\cal C} =
{\cal B} \setminus {\cal A}$, where ${\cal A}$ is the event defined in the proof of Lemma~\ref{lem: good case competitive
ratio}. It is easy to see that conditioning on ${\cal C}$ the expected performance ratio of the algorithm can still be bounded
by $O(\log (\nu S_1))$, since this expectation is affected only by whatever happens at level 
$\lceil \log_2 (\nu S_1)\rceil$ of the recursion tree and below. On the other hand, using the fact that 
$\Pr[{\cal A}] \ge 1-{2 \over ({\ln (\nu S_1)})^{c-1}}$, it follows that
$$\Pr[{\cal C}] \le {2 \over ({\ln (\nu S_1)})^{c-1}}.$$
We bound the expectation of the performance ratio using the law of total probability as follows:
\begin{eqnarray} \label{eq:long expression competitive ratio}
\mathbb{E}\left[ {\chd_{\eps} \over \opt_{\eps}}\right] 
&\le& \mathbb{E}\left[ {\chd_{\eps} \over \opt_{\eps}}~\vline~{\cal A}\right] \cdot \Pr[{\cal A}] 
+  \mathbb{E}\left[ {\chd_{\eps} \over \opt_{\eps}}~\vline~{\cal C}\right] \cdot \Pr[{\cal C}] \nonumber \\
&+& \mathbb{E}\left[ {\chd_{\eps} \over \opt_{\eps}}~\vline~\overline{{\cal A}\cup {\cal C}}\right] \cdot \Pr\left[\overline{{\cal A}\cup {\cal C}}\right] \nonumber \\
&\le& O\left( \log \log (\nu S_1) \right) \cdot \left( 1-{2 \over ({\ln (\nu S_1)})^{c-1}}\right) 
+ O\left( \log(\nu S_1) \right) \cdot {2 \over ({\ln (\nu S_1)})^{c-1}} \nonumber \\ 
&+& \mathbb{E}\left[ {\chd_{\eps} \over \opt_{\eps}}~\vline~\overline{{\cal A}\cup {\cal C}}\right] \cdot \Pr\left[\overline{{\cal A}\cup {\cal C}}\right].  \label{eqn:ena} 
\end{eqnarray}
where (\ref{eqn:ena}) follows from Lemmas~\ref{lem: good case competitive ratio} and~\ref{lem: pessimistic competitive ratio}.
 To conclude, we need to bound the last summand in the above expression. Note first that ${\cal B} \subseteq {\cal A}\cup {\cal
C}$. Hence,
$$\Pr\left[\overline{{\cal A}\cup {\cal C}}\right] \le {2 \over ({\nu S_1})^{c'-1}}.$$
We again use the fact that the number of queries made by the Chord algorithm (hence, also the performance ratio) 
is bounded by twice the total number of points in the triangle at the root of the
recursion tree. This number $X$ follows a Poisson distribution with parameter $\nu \cdot S_1$. Hence, we have
$$\mathbb{E}\left[ {\chd_{\eps} \over \opt_{\eps}}~\vline~\overline{{\cal A}\cup {\cal C}}\right] \cdot \Pr\left[\overline{{\cal A}\cup {\cal C}}\right] \le 2 \cdot \mathbb{E}\left[ X~\vline~\overline{{\cal A}\cup {\cal C}}\right] \cdot \Pr\left[\overline{{\cal A}\cup {\cal C}}\right].$$
To bound the right hand side of the above inequality we use Claim~\ref{claim:ppp-tail} and obtain:
\begin{eqnarray*}
\mathbb{E}\left[ {\chd_{\eps} \over \opt_{\eps}}~\vline~\overline{{\cal A}\cup {\cal C}}\right] \cdot \Pr \left[ \overline{{\cal A} \cup {\cal C}} \right] 
&\le& \max\left\{{1 \over \nu S_1}, O((\nu S_1)^3) \cdot \Pr\left[\overline{{\cal A}\cup {\cal C}}\right] \right\} \\
&\le& \max \left\{ {1\over \nu S_1}, O((\nu S_1)^3) {2 \over ({\nu S_1})^{c'-1}}\right\}.
\end{eqnarray*}
Choosing $c'=4$, the above RHS becomes $O(1)$. Plugging this into \eqref{eq:long expression competitive ratio} with $c=2$ gives
\begin{align*}
\mathbb{E}\left[ {\chd_{\eps} \over \opt_{\eps}}\right] = O\left( \log \log (\nu S_1) \right).
\end{align*}
This concludes the proof of Theorem~\ref{th: competitive ratio PPP}. 


\subsubsection{Product Distributions} \label{ssec:prod-upper}

We start by proving the analogue of Proposition~\ref{fact:dens-ppp}.
\begin{proposition} \label{fact:dens-unif}
Let $T_1 = \triangle (abc)$ be at the root of the Chord algorithm's recursion tree and suppose that $n$ points 
are inserted into $T_1$ independently from a $\gamma$-balanced distribution, where $\gamma \in [0, 1)$. 
Let $\alpha \eqdef \min \{ x(c), y(c) \}$, $S^{\ast} \eqdef  (\eps^2 \alpha^2/2) \cdot \min \{\lambda_{ab}, 1/\lambda_{ab}\}$
and $\beta \eqdef S(T_1)/S^{\ast}$. 
If $n \ge n_0 \eqdef 10\beta^2 / (1-\gamma)^2$, then $\E [\chd_{\eps}(T_1)]  = O(1)$.
\end{proposition}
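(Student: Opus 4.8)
The plan is to follow the skeleton of the proof of Proposition~\ref{fact:dens-ppp} and to replace its Poisson tail estimate by a direct calculation, which is in fact simpler here because the number of inserted points is a fixed integer $n$ rather than a random variable. First I would reduce to the nontrivial case $y(a) > (1+\eps) y(b)$ and $x(b) > (1+\eps) x(a)$, since otherwise $\{a,b\}$ is already an $\eps$-CP and the algorithm halts after two calls. Then I would reuse verbatim the geometric construction from Proposition~\ref{fact:dens-ppp}: set $p_1 = (x(a), (1+\eps) y(b))$, $p_2 = ((1+\eps) x(a), y(b))$, and let $T^{\ast}$ be the largest right triangle inside $\triangle(cp_1p_2)$ whose hypotenuse is parallel to $ab$. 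The identical area computation gives $S(T^{\ast}) \ge S^{\ast}$, and the key structural fact carries over unchanged: if any feasible point lands in $T^{\ast}$, then the single call $\comb(\lambda_{ab})$ returns a point forming an $\eps$-CP set, so the Chord algorithm terminates after at most three calls in total.

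The only genuinely new ingredient is translating ``$T^{\ast}$ contains a feasible point'' from the Poisson setting into the $\gamma$-balanced one. Here I would invoke Definition~\ref{def:delta-balanced distn}: for a single point drawn from a $\gamma$-balanced distribution $\D$ over $T_1$,
\[ \D(T^{\ast}) \ge (1-\gamma)\,\U(T^{\ast}) = (1-\gamma)\,\frac{S(T^{\ast})}{S(T_1)} \ge \frac{1-\gamma}{\beta}. \]
Since the $n$ points are drawn independently, the probability that \emph{all} of them miss $T^{\ast}$ is at most $\big(1 - (1-\gamma)/\beta\big)^n \le \exp\!\big(-n(1-\gamma)/\beta\big)$. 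Using the deterministic bound that the Chord algorithm makes at most $2n$ calls, I would split the expectation according to whether $T^{\ast}$ is empty to obtain
\[ \E[\chd_{\eps}(T_1)] \le 3 + 2n \cdot \exp\!\big(-n(1-\gamma)/\beta\big). \]
Unlike the Poisson proof, there is no conditioning subtlety and Claim~\ref{claim:ppp-tail} is not needed, precisely because $n$ is fixed and the trivial upper bound on the number of calls is therefore deterministic.

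It remains to show the tail term is $O(1)$ for \emph{every} $n \ge n_0 = 10\beta^2/(1-\gamma)^2$, and this routine calculation is the only point to check carefully. Writing $u := \beta/(1-\gamma) \ge 1$, so that the exponent is $-n/u$, the map $n \mapsto 2n\,e^{-n/u}$ is decreasing for $n \ge u$, and $n_0 = 10u^2 \ge u$; hence the term is maximized at the threshold $n = n_0$, where it equals $20u^2 e^{-10u}$. Since $u^2 e^{-10u}$ is decreasing for $u \ge 1$, this is at most $20\,e^{-10} = O(1)$, which completes the proof. The ``main obstacle,'' such as it is, lies in this final step: since $n$ may be arbitrarily larger than $n_0$, one cannot simply substitute $n = n_0$, but must verify monotonicity of $n \mapsto 2n\,e^{-n/u}$ on $[n_0,\infty)$ to localize the worst case at the threshold. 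The quadratic-in-$\beta$ choice of $n_0$ is exactly what makes the exponent overwhelm the linear prefactor $2n$ uniformly in $\beta$ and $\gamma$.
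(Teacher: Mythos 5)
Your proposal is correct and follows essentially the same route as the paper's proof: the same triangle $T^{\ast}$ with $S(T^{\ast}) \ge S^{\ast}$, the same bound $\D(T^{\ast}) \ge (1-\gamma)/\beta$ from the $\gamma$-balanced property, the same split $\E[\chd_{\eps}(T_1)] \le 3 + 2n(1-t^{\ast})^n$ using the deterministic $2n$ bound on calls, and the same monotonicity argument localizing the worst case at $n = n_0$, giving $O(r^2 e^{-10r}) = O(1)$ with $r = \beta/(1-\gamma) \ge 1$. The only difference is that you spell out the threshold-case reduction and the monotonicity verification, which the paper leaves implicit.
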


\begin{proof}
As in Proposition~\ref{fact:dens-ppp} we have that 
$S(T^{\ast}) \ge S^{\ast}$ (see Figure~\ref{fig:ub-ac}).
The probability that a random point falls into $T^{\ast}$ is at least $t^{\ast} = (1-\gamma)(1/\beta)$, hence the probability than none of the $n$
points falls into $T^{\ast}$ is at most $(1-t^{\ast})^n$. As before, if there is a feasible point in $T^{\ast}$, the Chord algorithm 
will find it and terminate. In all cases, the algorithm terminates 
after at most $2n$ calls to $\comb$. Hence, $\E [\chd_{\eps}(T)] \leq 3 + 2n \cdot (1-t^{\ast})^n$. 
The last summand is bounded by $2n \cdot e^{-t^{\ast} \cdot n}$ which is at most $2n_0 \cdot e^{-t^{\ast} \cdot n_0}$ for all $n \ge n_0$ by monotonicity. 
The latter quantity equals $O(r^2 e^{-10r})$, where $r = \beta / (1-\gamma) >1$, which is easily seen to be $O(1)$.
\end{proof}

\noindent Recalling that $S(T_1) \leq 2^{2m-1}$ and $S^{\ast} \ge \eps^3 2^{-4m-1}$ we deduce that
$n_0 = \poly(2^m/\eps) / (1-\gamma)^2.$

\noindent The main result of this section is devoted to the proof of the following theorem:

\begin{theorem}\label{th: competitive ratio product delta-balanced distribution}
Let $T_1$ be the triangle at the root of the Chord algorithm's recursion tree, and suppose that
$n$ points are inserted into $T_1$ independently from a $\gamma$-balanced distribution, where $\gamma \in [0,1)$.
The expected performance ratio of the Chord algorithm on this instance is $O_{\gamma}(\log \log n)$.
\end{theorem}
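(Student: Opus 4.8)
The plan is to mirror the proof of Theorem~\ref{th: competitive ratio PPP} for the Poisson case, replacing the Poisson independence by a careful conditioning argument tailored to a fixed number $n$ of i.i.d.\ points, and replacing the (potentially unbounded) Poisson worst-case bound by the deterministic estimate $\chd_\eps \le 2n$. By Proposition~\ref{fact:dens-unif} we may assume $n < n_0 = \poly(2^m/\eps)/(1-\gamma)^2$, so that $\log\log n = O(\log m + \log\log(1/\eps))$ and the claimed bound will indeed yield the product-distribution part of Theorem~\ref{thm:average}. The three structural ingredients I would reproduce are: (i) a geometric area-shrinkage lemma (the analogue of Lemma~\ref{lem:area recursion}); (ii) a ``good-event'' analysis showing that once all triangles at depth $d^\ast := \lceil \log_2 \ln n \rceil$ have small area, the expected ratio is $O_\gamma(\log\log n)$ (the analogue of Lemmas~\ref{lem: competitive ratio in the good event} and~\ref{lem: good case competitive ratio}); and (iii) a law-of-total-probability combination using an intermediate event at depth $\approx \log_2 n$ to drive the failure probability below $1/n^{2}$, so that the worst-case contribution $2n \cdot \Pr[\text{bad}]$ is negligible.

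The heart of the argument, and the place where the product case genuinely differs from the Poisson case, is the area-shrinkage lemma. In the Poisson case the number of points falling in a sub-region of the active triangle $T_i$ is, conditioned on the history, an independent $\mathrm{Poisson}(\nu S)$ variable, so the emptiness probability of the bottom corner triangle depends only on its area. For a fixed number of i.i.d.\ points this independence fails: the contents of $T_i$ are correlated with the information revealed by previous recursive calls. I would resolve this by fixing a filtration in which, at the moment $\comb$ is invoked on $T_i$, we condition on (a) the exact locations of the already discovered vertices $q_1,\dots,q_{i-1}$, (b) the emptiness of the region strictly below the current lower chain, and (c) the number $N_i$ of the $n$ sample points lying in $T_i$. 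The key structural fact is that, given this information, those $N_i$ points are i.i.d.\ draws from $\mathcal{D}$ restricted to $T_i$, since the revealed information constrains only points outside the open triangle (interior points lie above the supporting lines and are consistent with the observed max-distance vertices). Using $\gamma$-balancedness, a single such point lands in a bottom corner triangle $T^\ast \subseteq T_i$ of area $S^\ast$ with probability at least $(1-\gamma)^2\, S^\ast/S(T_i)$, so the corner is empty with probability at most $\exp\!\big(-N_i (1-\gamma)^2 S^\ast/S(T_i)\big)$. Choosing $S^\ast = \Theta\!\big(S_1 \ln\ln n / n\big)$ and using a lower-tail bound showing $N_i \gtrsim (1-\gamma)\, n\, S(T_i)/S_1$ with high probability, this becomes $1/(\ln n)^{\Omega_\gamma(c)}$; when the corner is nonempty, $q_i \in T^\ast$ forces $S(T_{i,l}),S(T_{i,r}) \le \sqrt{S(T_i)\cdot S^\ast}$ exactly as in Lemma~\ref{lem:area recursion}. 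Iterating down the recursion tree with a union bound shows that, at depth $d^\ast = \lceil \log_2 \ln n\rceil$, every surviving triangle has area $O_\gamma(S_1 \ln\ln n / n)$, hence expected point-count $O_\gamma(\log\log n)$.

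With the area-shrinkage lemma in hand, the remainder follows the Poisson template almost verbatim. Let $\mathcal{A}$ be the event that all depth-$d^\ast$ triangles have area $O_\gamma(S_1 \ln\ln n/n)$; conditioning on $\mathcal{A}$, the charging argument of Lemma~\ref{lem:opt-lb} (which transfers unchanged, being purely combinatorial) together with the $O_\gamma(\log\log n)$ bound on the conditional expected contents of each depth-$d^\ast$ leaf yields $\E[\chd_\eps/\opt_\eps \mid \mathcal{A}] = O_\gamma(\log\log n)$, with $\Pr[\bar{\mathcal A}] \le 1/(\ln n)^{\Omega_\gamma(c)}$. Defining a deeper event $\mathcal{B}$ at depth $\lceil \log_2 n \rceil$ gives, by the same lemma with different parameters (the analogue of Lemma~\ref{lem: pessimistic competitive ratio}), $\E[\chd_\eps/\opt_\eps \mid \mathcal{B}] = O_\gamma(\log n)$ and $\Pr[\bar{\mathcal B}] \le 2/n^{c'-1}$. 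Splitting $\E[\chd_\eps/\opt_\eps]$ over $\mathcal{A}$, $\mathcal{C} := \mathcal{B}\setminus\mathcal{A}$ and $\overline{\mathcal{A}\cup\mathcal{C}} \subseteq \bar{\mathcal B}$, and bounding the ratio on the last piece by the \emph{deterministic} estimate $\chd_\eps \le 2n$ (this is what replaces the Poisson tail Claim~\ref{claim:ppp-tail}), the three contributions are $O_\gamma(\log\log n)$, $O_\gamma(\log n)\cdot(\ln n)^{-\Omega_\gamma(c)}=o(1)$, and $2n\cdot 2/n^{c'-1}=o(1)$ for $c'\ge 3$. Summing gives $\E[\chd_\eps/\opt_\eps] = O_\gamma(\log\log n)$, as claimed. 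I expect the main obstacle to be precisely step (ii) of the second paragraph: rigorously setting up the filtration so that the interior points of the active triangle are i.i.d.\ from the restricted distribution, and proving the lower-tail bound on $N_i$, despite the adaptive and possibly non-monotone (back-and-forth) order in which the Chord algorithm processes the two sides of the curve.
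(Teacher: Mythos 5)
Your overall plan coincides with the paper's: reduce to $n < n_0$ via Proposition~\ref{fact:dens-unif}, prove an area-shrinkage lemma, analyze a good event at depth $\lceil\log_2\ln n\rceil$ and a backup event at depth $\lceil\log_2 n\rceil$, and replace the Poisson tail estimate (Claim~\ref{claim:ppp-tail}) by the deterministic bound $\chd_\eps \le 2n$ in the final law-of-total-probability split -- this last point is exactly what the paper does. Your treatment of the area-shrinkage step differs from the paper's but is workable: you condition on the number $N_i$ of sample points inside the active triangle and invoke a Chernoff lower-tail bound for $N_i$, whereas the paper (Lemma~\ref{lem:area recursion product distn}) avoids any concentration argument by noting that each of the $n_i \ge n-2\ln n$ unrevealed points lands in a sub-triangle $T \subseteq T_i$ with probability at least $(1-\gamma)^2 S(T)/S_1$ -- normalizing by the \emph{root} area $S_1$ rather than by $S(T_i)$ -- so that the emptiness probability is at most $\bigl(1-(1-\gamma)^2 S(T)/S_1\bigr)^{n_i}$ directly. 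This is why the paper's bound reads $\sqrt{S(T_i)S_1}\cdot\sqrt{c\ln\ln n/n}$, which agrees with your $\sqrt{S(T_i)\cdot S^\ast}$ for your choice of $S^\ast$.

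The genuine gap is in your ingredient (ii), the good-event analysis, which you claim follows the Poisson template ``almost verbatim.'' In the Poisson case an area bound on a leaf immediately gives an expected-point-count bound, since counts are independent of the conditioning with mean $\nu$ times the area. In the product case this implication fails: conditioned on the history, an unrevealed point lands in a depth-$d^\ast$ leaf $T$ with probability $\mathcal{D}[T]/\mathcal{D}[T^\ast]$, where $T^\ast$ is the region of $T_1$ still consistent with the revealed information, so the bound $S(T)=O_\gamma(S_1\ln\ln n/n)$ yields $\E[X_T]=O_\gamma(\ln\ln n)$ only after one proves a lower bound on $S(T^\ast)$. This is precisely why the paper's Lemma~\ref{lem:area recursion product distn} carries a \emph{second} conclusion, $S(T_i')\le S_1 c\ln\ln n/n$, bounding the area removed at every processed node; why the event $\mathcal{A}_c$ includes that condition; and why Lemma~\ref{lem: competitive ratio in the good event - product} imposes $c \le n/(4\ln n\ln\ln n)$ so that $S(T^\ast) \ge S_1 - 2\ln n\cdot S_1 c\ln\ln n/n \ge S_1/2$. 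Your filtration (conditioning on $N_i$) could in principle substitute for this bookkeeping, but then you would need \emph{upper}-tail concentration of the $N_i$ along every root-to-leaf path, not just the lower-tail bound you mention; neither ingredient appears in your sketch. The same omission resurfaces in your claim that the backup event $\mathcal{B}$ gives conditional ratio $O_\gamma(\log n)$: at depth $\lceil\log_2 n\rceil$ there can be $\Theta(n)$ processed nodes, the removed area can no longer be argued to be a small fraction of $S_1$, and the paper (Lemma~\ref{lem: pessimistic competitive ratio - product}) resorts to a case analysis on $|\mathcal{L}'_{d^\ast}|$ and settles for $O_{c',\gamma}\bigl((\log n)^3\bigr)$ -- still sufficient for the final combination, but only because the failure probability of $\mathcal{A}_c$ can be driven below $(\ln n)^{-3}$.
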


\noindent Combined with Proposition~\ref{fact:dens-unif}, the theorem yields the desired upper bound of $O_{\gamma}(\log m + \log \log (1/\eps))$.
The proof has the same overall structure as the proof of Theorem~\ref{th: competitive ratio PPP}, but the details are more
elaborate. We emphasize below the required modifications to the argument. 

Since the performance ratio of the Chord algorithm is at most $2n$ on any instance with $n$ points, 
we will assume that $n$ is lower bounded by a sufficiently large absolute constant
($n \ge 12$ suffices for our analysis).
We start by giving an area shrinkage lemma, similar to Lemma~\ref{lem:area recursion}.
(See Figure~\ref{fig:chord-avg} for an illustration.)

\begin{lemma} \label{lem:area recursion product distn}
Let $T_i = \triangle(a_ib_ic_i)$ be a triangle processed by the Chord algorithm at
recursion depth at most $\lceil \log_2 \ln n \rceil-1$. Denote $q_i =\comb(\lambda_{a_ib_i})$. 
Let $T_{i,l} = \triangle (a_i a_i' q_i)$, $T_{i,r}= \triangle (b_ib'_iq_i)$ and $T_i' = \triangle (a_i'c_ib_i')$. 
For all $c >0$, with probability at least $1- \ln n^{-{{c (1-\gamma)^2} \over 2}}$ conditioning on the information available to the algorithm,
\begin{equation*}
S(T_{i,l}), S(T_{i,r}) \le \sqrt{S(T_i) S_1} \cdot \sqrt{{c \cdot \ln \ln n \over n}}; ~~\text{and}~~ S(T_i') \le  S_1 {c \cdot \ln \ln n \over n}.
\end{equation*}
\end{lemma}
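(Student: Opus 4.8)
The plan is to follow the proof of Lemma~\ref{lem:area recursion} almost verbatim on the geometric side, replacing the Poisson tail estimate by an estimate for $\gamma$-balanced i.i.d.\ sampling. As in the Poisson case, fix the segment $\zeta_i\eta_i \parallel a_ib_i$ inside $T_i$ so that the sub-triangle $T^{\ast} \eqdef \triangle(c_i\zeta_i\eta_i)$ has area $S(T^{\ast}) = S_1 \cdot (c\ln\ln n)/n$; the depth restriction $d \le \lceil \log_2 \ln n\rceil - 1$ guarantees, via the area bounds already in force at shallower levels, that $S(T^{\ast}) \le S(T_i)$, so $T^{\ast}$ genuinely sits inside $T_i$. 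The key deterministic observation is that $q_i=\comb(\lambda_{a_ib_i})$ minimizes $y+\lambda_{a_ib_i}x$ over all feasible points, so if even one instance point lands in $T^{\ast}$ (which lies below the line $\zeta_i\eta_i$), then $q_i$ lies on or below $\zeta_i\eta_i$; since $q_i\in T_i$ by Claim~\ref{clm:sandwich}, this forces $q_i\in T^{\ast}$. Once $q_i\in T^{\ast}$, the line through $q_i$ parallel to $a_ib_i$ lies below $\zeta_i\eta_i$, whence $T_i'\subseteq T^{\ast}$ gives $S(T_i')\le S(T^{\ast}) = S_1(c\ln\ln n)/n$, and the identical similar-triangles computation as in Lemma~\ref{lem:area recursion} yields $S(T_{i,l}),S(T_{i,r})\le\sqrt{S(T_i)\,S(T^{\ast})}=\sqrt{S(T_i)S_1}\cdot\sqrt{(c\ln\ln n)/n}$. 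Thus all three area inequalities follow from the single event that some instance point lands in $T^{\ast}$, and it remains to bound the complementary probability, conditioned on the information $\mathcal{F}_i$ available to the algorithm.

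The probabilistic core is where the product case departs from the Poisson case. The information $\mathcal{F}_i$ consists of (i) the exact locations of the vertices discovered so far and (ii) the assertion that no instance point lies below any of the revealed supporting lines $a_jc_j,c_jb_j$. Each such assertion is a product-form condition on the i.i.d.\ points (each point independently lies on the feasible side of each line), so conditioning on $\mathcal{F}_i$ leaves the non-revealed (``free'') points mutually independent, each distributed as $\D$ restricted to the feasible region $\mathcal{G}\eqdef T_1\cap\{\text{points above all revealed supporting lines}\}$. Since $T_i$ is bounded below by its own latest supporting lines, $T^{\ast}\subseteq T_i\subseteq\mathcal{G}$. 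Using Definition~\ref{def:delta-balanced distn} to lower bound $\D(T^{\ast})\ge(1-\gamma)\,\U(T^{\ast})=(1-\gamma)S(T^{\ast})/S_1$ and to upper bound $\D(\mathcal{G})\le \U(\mathcal{G})/(1-\gamma)\le 1/(1-\gamma)$, each free point lands in $T^{\ast}$ with conditional probability at least $(1-\gamma)^2\,S(T^{\ast})/S_1 = (1-\gamma)^2\,(c\ln\ln n)/n$.

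It then remains to count the free points. I would work within the truncated recursion tree consisting of the top $\lceil\log_2\ln n\rceil$ levels: the number of vertices revealed while processing its nodes is at most its number of nodes, which is $O(\ln n)$, and for $n\ge 12$ this is at most $n/2$, so at least $n/2$ free points remain. By conditional independence, $\Pr[\text{no free point in }T^{\ast}\mid\mathcal{F}_i]\le\bigl(1-(1-\gamma)^2(c\ln\ln n)/n\bigr)^{n/2}\le\exp\!\bigl(-(1-\gamma)^2(c\ln\ln n)/2\bigr)=(\ln n)^{-c(1-\gamma)^2/2}$. Combined with the reduction of the first paragraph, this bounds the probability that any of the three area inequalities fails by $(\ln n)^{-c(1-\gamma)^2/2}$, which is exactly the claim.

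The main obstacle is the conditioning, which is genuinely harder than in the Poisson case: because the points are dependent, revealing vertices and empty half-planes does alter the law of the remaining points, so one must check three things carefully. First, that $\mathcal{F}_i$ is product-form in the i.i.d.\ points, so that conditional independence of the free points survives. Second, that the feasible region still contains $T^{\ast}$, i.e.\ $T^{\ast}\subseteq\mathcal{G}$, which is where the monotone nesting $T_i\subseteq\cdots\subseteq T_1$ from Claim~\ref{clm:sandwich} is used. Third---and most delicately---that enough points remain unrevealed; this is precisely the role of the depth restriction, which keeps the number of consumed points at $O(\ln n)\ll n$. One should also confirm that conditioning on information gathered in already-explored sibling subtrees does not consume too many points, for which it suffices to restrict attention to the top $\lceil\log_2\ln n\rceil$ levels, which contain only $O(\ln n)$ nodes in total.
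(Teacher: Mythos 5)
Your proposal is correct and follows essentially the same route as the paper's proof: the geometric reduction to the sub-triangle $T^{\ast}=\triangle(c_i\zeta_i\eta_i)$ of area $S_1(c\ln\ln n)/n$ below a chord parallel to $a_ib_i$ (carried over from Lemma~\ref{lem:area recursion}), the $(1-\gamma)^2\,S(T^{\ast})/S_1$ lower bound on the conditional probability that an unrevealed point lands in $T^{\ast}$, the count of at least $n/2$ unrevealed points when processing nodes in the top $\lceil\log_2\ln n\rceil$ levels, and the estimate $\bigl(1-(1-\gamma)^2(c\ln\ln n)/n\bigr)^{n/2}\le(\ln n)^{-c(1-\gamma)^2/2}$. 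The one slip is your claim that the depth restriction forces $S(T^{\ast})\le S(T_i)$ --- it does not, since the area bounds at shallower levels are only high-probability statements and give upper rather than lower bounds on $S(T_i)$ --- but this is harmless: if $S(T^{\ast})>S(T_i)$, all three inequalities hold deterministically because $T_{i,l},T_{i,r},T_i'\subseteq T_i$.
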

\begin{proof}
We follow the proof of Lemma~\ref{lem:area recursion} with the appropriate modifications. Let $T_i$ be the triangle maintained
by the algorithm at some node of the recursion tree, and suppose that $T_i$ is at recursion depth at most $\lceil \log_2 \ln n
\rceil-1$. The information available to the algorithm when it processes $T_i$ (before it makes the query $\comb(\lambda_{a_ib_i})$) is:
\begin{itemize}
\item The location of all points $q_j$, $j \in [i-1]$ is known;
by our assumption on the depth it follows that $ i \leq 2 \ln n$.
\item  There is no point below the line (defined by)
$a_j c_j$, or below the line $c_j b_j$, for all $j \in [i]$.
\end{itemize}
Given this information, the probability that, among the remaining $n_i \ge n-2 \ln n$ points (whose location is unknown), none
falls inside a triangle $T$ of area $S(T)$ inside $T_i$, is at most
$$\left(1-  { (1-\gamma)^2} {S(T) \over S_1}\right)^{n_i}.$$
Indeed, let ${T}^*_i \subseteq T_1$ be the subset of the root triangle which is available for the location of
$q_i$; this is the convex set below the line $a_1b_1$, to the right of all lines $a_jc_j$, for $j \in [i]$, and above
all lines $c_jb_j$, for $j \in [i]$. The probability that a point whose location is unknown falls inside $T \subseteq
T^{\ast}_i$ is
$${{\cal D}\left[ T \right] \over {\cal D}[{T}^{\ast}_i]} \ge {(1-\gamma) {\cal U}\left[ T \right] \over {\cal U}[{T}^{\ast}_i] / (1-\gamma)} \ge  {(1-\gamma) {\cal U}\left[ T \right] \over {\cal U}[{ T_1}]/ (1-\gamma) } = {(1-\gamma)^2} {S(T) \over S_1}.$$
Choosing $S(T) \eqdef S_1 {c \cdot \ln \ln n \over n}$, the probability becomes
$${(1-\gamma)^2} \cdot {c \cdot \ln \ln n \over n}.$$
Hence, the probability that $T$ is empty is at most
\begin{align*}
\left(1-  { c (1-\gamma)^2} {\ln \ln n \over n}\right)^{n_i} \le e^{- {c (1-\gamma)^2} {\ln \ln n \over n} n_i} \le \left({1
\over \ln n}\right)^{{{c (1-\gamma)^2} \over 2}}.
\end{align*}
The proof of the lemma is concluded by identical arguments as in the proof of Lemma~\ref{lem:area recursion}.
\end{proof}
Using Lemma~\ref{lem:area recursion product distn} and the union bound, we can show that, with probability at least
$$1-{2  \over \left( \ln n \right)^{{{c (1-\gamma)^2} \over 2}-1}},$$
the following are satisfied:
\begin{itemize}
\item All triangles maintained by the algorithm at depth $\lceil \log_2 \ln n \rceil$ of its recursion tree have area at most
$$S_1 \cdot {e \cdot c \cdot \ln \ln n \over n}.$$
\item For every node (triangle) $i$ in the first $\lceil \log_2 \ln n \rceil -1$ levels of the recursion tree
\begin{align*}
S(T'_i) \le S_1{c \cdot \ln \ln n \over n},
\end{align*}
where $T'_i$ is defined as in the statement of Lemma~\ref{lem:area recursion product distn}.
\end{itemize}
The proof of the second assertion above follows immediately from Lemma~\ref{lem:area recursion product distn} and the union
bound. The first assertion is shown similarly to the analogous assertion of Theorem~\ref{th: competitive ratio PPP}. For the
above we assumed that $c \in \left({1 \over \ln \ln n}, {n  \over \ln \ln n}\right)$.

Now let us call ${\cal A}_c$ the event that the above assertions are satisfied. We can show the following.

\begin{lemma} \label{lem: competitive ratio in the good event - product}
Suppose $c \le {n \over 4 \cdot \ln n \cdot \ln \ln n}$. Conditioning on the event ${\cal A}_c$, the expected performance ratio
of the Chord algorithm is $O_{c,\gamma}\left(\log \log n\right)$.
\end{lemma}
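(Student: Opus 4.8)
The plan is to reproduce the architecture of the Poisson argument (Lemma~\ref{lem: competitive ratio in the good event}), replacing the one step that genuinely used spatial independence by a careful reading of the conditioning. Set $d^{\ast}=\lceil \log_2 \ln n\rceil$, let ${\cal T}_{d^{\ast}}$ be the recursion tree pruned at depth $d^{\ast}$, and let ${\cal V}_{d^{\ast}}$, ${\cal L}_{d^{\ast}}$, ${\cal L}'_{d^{\ast}}$ be respectively its node set, its nodes at depth exactly $d^{\ast}$, and its lowest internal nodes. By the analogue of Lemma~\ref{lem:opt-lb}, $\opt_{\eps}\ge |{\cal L}'_{d^{\ast}}|$, and since ${\cal T}_{d^{\ast}}$ is binary of depth $d^{\ast}$ we have $|{\cal V}_{d^{\ast}}|\le 2d^{\ast}|{\cal L}'_{d^{\ast}}|$ and $|{\cal L}_{d^{\ast}}|\le 2|{\cal L}'_{d^{\ast}}|$. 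Writing $X_T$ for the number of instance points inside a triangle $T$, and using that the subtree rooted at $T\in{\cal L}_{d^{\ast}}$ issues at most $2X_T$ queries, we get $\chd_{\eps}\le |{\cal V}_{d^{\ast}}|+2\sum_{T\in{\cal L}_{d^{\ast}}}X_T$. The first term already contributes $O(\opt_{\eps}\cdot d^{\ast})=O(\opt_{\eps}\log\log n)$, so everything reduces to bounding $\mathbb{E}\big[\sum_{T\in{\cal L}_{d^{\ast}}}X_T \mid {\cal F}\big]$, where ${\cal F}$ is the information collected in the top $d^{\ast}$ levels, conditioned to lie in ${\cal A}_c$.

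The step that cannot be copied from the Poisson case is its identity $\mathbb{E}[X_T\mid{\cal F}]=\nu S(T)$, which there follows instantly from independence over disjoint regions; for a fixed-size sample this is false, and the real work is to describe exactly what ${\cal F}$ reveals about the $n-N_{\mathrm{rev}}$ undiscovered points (here $N_{\mathrm{rev}}\le|{\cal V}_{d^{\ast}}|=O(\ln n)$). Since each $\comb$ call returns the true minimizer, conditioning on ${\cal F}$ fixes the discovered points at their locations and forces every remaining point to lie above all the supporting lines generated so far; I would argue that this is the \emph{only} constraint it imposes, so the undiscovered points are i.i.d.\ draws from ${\cal D}$ restricted to the region ${\cal R}^{+}\eqdef T_1\setminus\bigcup_i T'_i$ obtained by deleting the sub-triangles $T'_i=\triangle(a'_ic_ib'_i)$ cut off below those lines (which are point-free, as they lie below the lower envelope). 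Hence for each active $T\in{\cal L}_{d^{\ast}}$ one has $\mathbb{E}[X_T\mid{\cal F}]=(n-N_{\mathrm{rev}})\,{\cal D}[T]/{\cal D}[{\cal R}^{+}]$.

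The crux is then a constant lower bound on ${\cal D}[{\cal R}^{+}]$, i.e.\ showing the available region keeps a constant fraction of the mass, and this is exactly where both clauses of ${\cal A}_c$ and the hypothesis $c\le n/(4\ln n\ln\ln n)$ are used: each deleted triangle obeys $S(T'_i)\le S_1(c\ln\ln n)/n$, there are $O(\ln n)$ of them, so the total deleted area is at most $O(\ln n)\cdot S_1(c\ln\ln n)/n\le S_1/2$, giving $S({\cal R}^{+})\ge S_1/2$ and therefore ${\cal D}[{\cal R}^{+}]\ge(1-\gamma){\cal U}[{\cal R}^{+}]\ge(1-\gamma)/2$ by $\gamma$-balancedness. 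Combining this with $S(T)\le S^{\ast\ast}=S_1(ec\ln\ln n)/n$ (the first clause of ${\cal A}_c$) and ${\cal D}[T]\le{\cal U}[T]/(1-\gamma)=S(T)/((1-\gamma)S_1)$ yields $\mathbb{E}[X_T\mid{\cal F}]\le 2nS^{\ast\ast}/((1-\gamma)^2 S_1)=O_{c,\gamma}(\log\log n)$ per triangle; summing over the $|{\cal L}_{d^{\ast}}|\le 2|{\cal L}'_{d^{\ast}}|$ active triangles, recombining with $|{\cal V}_{d^{\ast}}|$, and dividing by $\opt_{\eps}\ge|{\cal L}'_{d^{\ast}}|$ gives $\mathbb{E}[\chd_{\eps}/\opt_{\eps}\mid{\cal F}]=O_{c,\gamma}(\log\log n)$; integrating over all ${\cal F}$ consistent with ${\cal A}_c$ finishes the proof. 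I expect the middle paragraph to be the main obstacle: one must verify rigorously that ${\cal F}$ constrains the surviving points only through membership in ${\cal R}^{+}$ — that the discovered minimizers are unaffected by the surviving points and that the cut-off regions are genuinely empty — since this is precisely the independence property that the Poisson model supplied for free.
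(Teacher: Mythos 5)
Your proposal is correct and follows essentially the same route as the paper: the pruning at depth $d^{\ast}=\lceil \log_2 \ln n\rceil$, the charging argument via $|{\cal V}_{d^{\ast}}|\le 2d^{\ast}|{\cal L}'_{d^{\ast}}|$ and $|{\cal L}_{d^{\ast}}|\le 2|{\cal L}'_{d^{\ast}}|$, and the key step of bounding the conditional probability that an undiscovered point lands in a leaf triangle by ${\cal D}[T]/{\cal D}[{\cal R}^{+}]$ (your ${\cal R}^{+}$ is exactly the paper's $T^{\ast}_i$), using the second clause of ${\cal A}_c$ together with $c\le n/(4\ln n\ln\ln n)$ to get $S({\cal R}^{+})\ge S_1/2$ and the first clause to get $S(T)\le S_1\,e\,c\ln\ln n/n$, are all identical to the paper's argument. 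The conditioning subtlety you flag (that ${\cal F}$ constrains the surviving points only through membership in ${\cal R}^{+}$) is treated by the paper in the same implicit way, inherited from the proof of Lemma~\ref{lem:area recursion product distn}, so your more explicit statement of it is a feature, not a deviation.
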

\begin{proof}
The proof is in the same spirit to the proof of Lemma~\ref{lem: competitive ratio in the good event}, but more care is needed.
We need to argue that, under ${\cal A}_c$, the expected number of points falling inside a triangle at depth $\lceil \log_2 \ln
n \rceil$ of the recursion tree is $O_{c,\gamma}(\log \log n)$. Using rationale similar to that used in the proof of
Lemma~\ref{lem:area recursion product distn} above, we have the following: Let $T_i$ be the triangle maintained by the
algorithm at a node $i$ at depth $\lceil \log_2 \ln n \rceil$ of the recursion tree. Let also $p$ be a point whose location is
unknown to the algorithm (conditioning on the information known to the algorithm after processing the first $\lceil \log_2 \ln
n \rceil-1$ levels of the recursion tree). The probability that the point $p$ falls inside $T_i$ is
$${ {\cal D}\left[ T_i \right] \over {\cal D}[{T}^{\ast}_i] }  \le {{\cal U}\left[ T_i \right] / (1-\gamma) \over (1-\gamma) {\cal U}[{T}^{\ast}_i] },$$
where $T^{\ast}_i$ is the region below the line $a_1b_1$, above the lines $a_j c_j$, for all $j$ in the first $\lceil \log_2 \ln
n \rceil$ levels of the recursion tree, and above the lines $c_jb_j$, for all $j$ in the first $\lceil \log_2 \ln n \rceil$
levels of the recursion tree. To upper bound the probability that $p$ falls inside $T_i$, we need a lower bound on the size of
the area $S(T_i^{\ast})$. Such bound can be obtained by noticing that
$$S(T_i^*)=S_1 - \littlesum_j S(T'_j),$$
where the summation ranges over all $j$ in the first $\lceil \log_2 \ln n \rceil-1$ levels of the recursion tree. Hence,
\begin{align*}
S(T^{\ast}_i) &\ge S_1 - 2 \ln n \cdot S_1{c \cdot \ln \ln n \over n} \\&= S_1 \cdot {n - 2 \cdot c \cdot \ln n \cdot \ln \ln n
\over n} \ge {S_1 \over 2},
\end{align*}
where we used that $c \le {n \over 4 \cdot \ln n \cdot \ln \ln n}$. Hence, the probability that a point falls inside $T_i$ is
at most
$${{\cal U}\left[ T_i \right] / (1-\gamma) \over (1-\gamma) {\cal U}[{T}^{\ast}_i] } \le {1 \over (1-\gamma)^2} \cdot { S_1 \cdot {e \cdot c \cdot \ln \ln n \over n} \over S_1/2} \le  {2 \cdot e \cdot c \cdot \ln \ln n \over (1-\gamma)^2 n}.$$
Therefore, the expected number of points falling in $T_i$ is at most
$${2 \cdot e \cdot c \cdot \ln \ln n \over (1-\gamma)^2}.$$
The final part of the proof is a charging argument identical to the one in Lemma~\ref{lem: competitive ratio in the good event}.
\end{proof}

We have thus established the following:

\begin{lemma} \label{lem: good case competitive ratio - product}
For $c \in \left({1 \over \ln \ln n}, {n \over 4 \cdot \ln n \cdot \ln \ln n}\right)$, there exists an event ${\cal A}_c$, with
$\Pr[{\cal A}_c] \ge 1-{2  \over \left( \ln n \right)^{{{0.5 c (1-\gamma)^2}}-1}},$ such that the expected performance ratio of
the Chord algorithm conditioning on ${\cal A}_c$  is $O_{c,\gamma}\left( \log \log n \right)$.
\end{lemma}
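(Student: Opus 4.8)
The plan is to assemble the statement directly from the two ingredients already developed for product distributions: the area-shrinkage estimate of Lemma~\ref{lem:area recursion product distn} and the conditional competitive-ratio bound of Lemma~\ref{lem: competitive ratio in the good event - product}. The lemma is in effect a repackaging of these, with the hypothesis $c \in \left({1 \over \ln \ln n}, {n \over 4 \ln n \ln \ln n}\right)$ chosen so that both ingredients apply simultaneously: the lower bound $c > 1/\ln\ln n$ guarantees that the target area $S_1 \cdot {e c \ln\ln n \over n}$ is a genuine shrinkage (so that a depth-$\lceil \log_2 \ln n\rceil$ triangle is actually small), while the upper bound $c \le n/(4 \ln n \ln \ln n)$ is exactly the constraint required by Lemma~\ref{lem: competitive ratio in the good event - product}.

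First I would fix the event ${\cal A}_c$ precisely as the conjunction of the two assertions stated just above the lemma, namely that every triangle at depth $\lceil \log_2 \ln n\rceil$ of the recursion tree has area at most $S_1 \cdot {e c \ln\ln n \over n}$, and that every triangle $T'_i$ in the first $\lceil \log_2 \ln n\rceil - 1$ levels has area at most $S_1 {c \ln\ln n \over n}$. To bound $\Pr[\overline{{\cal A}_c}]$, I would apply Lemma~\ref{lem:area recursion product distn} at each node in the top $\lceil \log_2 \ln n\rceil$ levels and take a union bound. Since the recursion tree is binary, there are at most $2^{\lceil \log_2 \ln n\rceil + 1} = O(\ln n)$ such nodes, each contributing a conditional failure probability of at most $(\ln n)^{-{c(1-\gamma)^2 / 2}}$; multiplying the two factors yields the claimed $\Pr[{\cal A}_c] \ge 1 - 2 (\ln n)^{-(0.5 c(1-\gamma)^2 - 1)}$.

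Having secured the probability bound, the second step is simply to invoke Lemma~\ref{lem: competitive ratio in the good event - product}, whose hypothesis $c \le n/(4 \ln n \ln\ln n)$ is guaranteed by the upper end of the admissible range. That lemma gives $\E\left[ \chd_\eps / \opt_\eps \mid {\cal A}_c\right] = O_{c,\gamma}(\log \log n)$, which is precisely the conditional bound asserted here. Combining the two statements completes the proof.

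The main technical content does not reside in this assembly step but in the conditional expectation bound it imports. The subtle point --- already dealt with in Lemma~\ref{lem: competitive ratio in the good event - product} --- is that when the algorithm processes a depth-$\lceil \log_2 \ln n\rceil$ triangle, the conditional law of the still-unlocated points depends on all the half-plane constraints accumulated along the path from the root, so one cannot treat the points inside a small leaf triangle as independently placed. The role of the second assertion in ${\cal A}_c$ (the bound on $S(T'_i)$) is exactly to control this: it guarantees that the total area carved away in the first $\lceil \log_2 \ln n\rceil - 1$ levels is at most $S_1/2$, so that the still-available region $T^{\ast}_i$ has area $\ge S_1/2$, and $\gamma$-balancedness then converts the small area of a leaf triangle into a small conditional occupancy probability. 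If I were proving the whole package from scratch, this conditioning-and-available-area argument would be the real obstacle; here it is supplied by the imported lemma, so the proof of the present statement reduces to the bookkeeping of the union bound and the compatibility of the constraints on $c$.
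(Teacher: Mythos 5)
Your proposal is correct and matches the paper's own route: the paper defines ${\cal A}_c$ as exactly the conjunction of the two area assertions, obtains the probability bound from Lemma~\ref{lem:area recursion product distn} plus a union bound over the $O(\ln n)$ nodes in the top $\lceil \log_2 \ln n \rceil$ levels, and then cites Lemma~\ref{lem: competitive ratio in the good event - product} (whose hypothesis $c \le n/(4\ln n \ln\ln n)$ is the binding upper constraint on $c$) for the conditional expectation, stating the present lemma as "we have thus established the following." Your reading of the roles of the two endpoints of the admissible range for $c$, and of where the real technical content lives (the conditioning-and-available-area argument inside the imported lemma), is also accurate.
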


We can also show the analogue of Lemma~\ref{lem: pessimistic competitive ratio}. Namely,

\begin{lemma}\label{lem: pessimistic competitive ratio - product}
For $c' \in \left({1 \over \ln n}, {\ln n  \over 6}\right)$, there exists an event ${\cal B}_{c'}$, with $\Pr[{\cal B}_{c'}]
\ge 1-{2 \over n^{{{0.5 c' (1-\gamma)^2}}-1}}$, such that the expected performance ratio of the Chord algorithm conditioning on ${\cal B}_{c'}$  is $O_{c',\gamma}\left( (\log n)^3\right)$.
\end{lemma}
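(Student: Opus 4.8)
The plan is to follow the template of Lemma~\ref{lem: pessimistic competitive ratio} from the Poisson case, re-running the area-shrinkage machinery of Lemma~\ref{lem:area recursion product distn} but with a \emph{larger} apex threshold so that the per-node success probability becomes polynomially (rather than polylogarithmically) close to $1$. Concretely, I would repeat the proof of Lemma~\ref{lem:area recursion product distn} verbatim, except that the apex triangle $T=\triangle(c_i\zeta_i\eta_i)$ is now chosen to have area $S(T):=S_1\cdot (c'\ln n)/n$ in place of $S_1\cdot(c\ln\ln n)/n$. Since we still descend only to depth $\lceil\log_2\ln n\rceil-1$, at most $i\le 2\ln n$ points have been placed, so $n_i\ge n-2\ln n\ge n/2$ unknown points remain (for $n$ above an absolute constant); hence the probability that $T$ is empty is at most $(1-(1-\gamma)^2 S(T)/S_1)^{n_i}\le e^{-0.5(1-\gamma)^2 c'\ln n}=n^{-0.5c'(1-\gamma)^2}$. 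The identical similar-triangles computation then gives the one-step shrinkage $S(T_{i,l}),S(T_{i,r})\le \sqrt{S(T_i)S_1}\cdot\sqrt{(c'\ln n)/n}$ together with $S(T_i')\le S_1(c'\ln n)/n$, now with this boosted probability.

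Next I would take a union bound over the $O(\ln n)$ nodes in the top $\lceil\log_2\ln n\rceil$ levels of the recursion tree, and define $\mathcal{B}_{c'}$ as the event that (i) every triangle at depth $d^*:=\lceil\log_2\ln n\rceil$ has area at most $S_1\cdot(ec'\ln n)/n$, and (ii) $S(T_j')\le S_1(c'\ln n)/n$ for every node $j$ in the first $d^*-1$ levels. Iterating the one-step bound exactly as in the Poisson argument yields area $\le S_1\cdot(ec'\ln n)/n$ at depth $d^*$, the factor $e$ absorbing $(s_0/s^*)^{2^{-d^*}}$ with $s_0=S_1$, $s^*=S_1(c'\ln n)/n$ and $2^{-d^*}\le 1/\ln n$. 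The union bound then gives $\Pr[\mathcal{B}_{c'}]\ge 1-\ln n\cdot n^{-0.5c'(1-\gamma)^2}\ge 1-2/n^{0.5c'(1-\gamma)^2-1}$, matching the claimed probability.

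Finally, the conditional ratio bound follows the charging scheme of Lemma~\ref{lem: competitive ratio in the good event} combined with the conditioning analysis of Lemma~\ref{lem: competitive ratio in the good event - product}, and the history-dependence of the product distribution is the one point needing care (the main obstacle). The number of points inside a depth-$d^*$ triangle $T_i$ is governed by the conditional mass $\mathcal{D}[T_i]/\mathcal{D}[T_i^*]$, where $T_i^*$ is the still-available region; using $\gamma$-balancedness together with $S(T_i^*)=S_1-\sum_j S(T_j')\ge S_1-2\ln n\cdot S_1(c'\ln n)/n\ge S_1/2$ (valid since $c'\le \ln n/6\le n/(4(\ln n)^2)$ for large $n$), the probability a fixed point lands in $T_i$ is at most $2S(T_i)/((1-\gamma)^2 S_1)\le 2ec'\ln n/((1-\gamma)^2 n)$, so the expected number of points in $T_i$ is $O_{c',\gamma}(c'\ln n)=O_{c',\gamma}((\log n)^2)$. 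Feeding this into the charging bound $\E[\chd_\eps/\opt_\eps\mid\mathcal{F}]\le 2d^*+4\cdot(\text{expected points per triangle})$, with $\opt_\eps\ge|\mathcal{L}'_{d^*}|$ by Lemma~\ref{lem:opt-lb}, gives a conditional ratio of $O(\log\log n)+O_\gamma((\log n)^2)=O_{c',\gamma}((\log n)^2)$, comfortably inside the stated $O_{c',\gamma}((\log n)^3)$; the cube is a deliberately loose bound, since all that is needed downstream is that the pessimistic ratio be polylogarithmic while $\Pr[\overline{\mathcal{B}_{c'}}]$ decays polynomially in $n$.
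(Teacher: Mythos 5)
Your proof is correct, and each step checks out: the per-node emptiness probability $n^{-0.5c'(1-\gamma)^2}$ (using $n_i \ge n-2\ln n \ge n/2$), the union bound over the $O(\ln n)$ nodes of the top $\lceil \log_2 \ln n\rceil$ levels (which folds into the stated probability since $2\ln n \le n$), the removed-area estimate $S(T_i^{\ast}) \ge S_1/2$ (valid because $c' \le \ln n/6 \le n/(4(\ln n)^2)$ in the relevant range of $n$), and the charging through $\opt_{\eps} \ge |{\cal L}'_{d^{\ast}}|$ as in Lemmas~\ref{lem:opt-lb} and~\ref{lem: competitive ratio in the good event}. However, your route genuinely differs from the paper's. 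The paper transplants the Poisson construction of Lemma~\ref{lem: pessimistic competitive ratio} literally: it defines ${\cal B}_{c'}$ at depth $\lceil \log_2 n\rceil$ (not $\lceil \log_2 \ln n\rceil$), where up to order $n$ nodes may already have been processed, so the removed area is only bounded by $2\lceil \log_2 n\rceil S_1 (c'\ln n/n) |{\cal L}'_{d^{\ast}}|$ and the estimate $S(T_i^{\ast}) \ge S_1/2$ is not automatic. The paper rescues this with a dichotomy: if $|{\cal L}'_{d^{\ast}}| \ge n/\lceil \ln n\rceil^3$, then $\opt_{\eps}$ is so large that the trivial bound of $2n$ calls already gives ratio $O((\log n)^3)$ --- this case is precisely the origin of the cube in the statement --- while otherwise the removed area is at most $2S_1 c'/\ln n \le S_1/2$ and the charging argument gives $O_{c',\gamma}(\log n)$. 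By instead keeping the pruning depth at $\lceil \log_2 \ln n\rceil$ and enlarging the apex threshold to $S_1 c' \ln n/n$, you ensure only $O(\ln n)$ nodes are processed above the cut, so the removed-area bound holds unconditionally inside your event, the dichotomy disappears, and you get the stronger conditional ratio $O_{\gamma}((\log n)^2)$ (indeed $O_{c',\gamma}(\log n)$ for constant $c'$). A side benefit is that your ${\cal B}_{c'}$ is measurable with respect to the same filtration (the top $\lceil \log_2 \ln n\rceil$ levels) as ${\cal A}_c$, so conditioning on ${\cal C} = {\cal B}_{c'} \setminus {\cal A}_c$ in the proof of Theorem~\ref{th: competitive ratio product delta-balanced distribution} is covered by the same per-history bound, with no need for the paper's separate remark that the pessimistic estimate is unaffected by deeper levels. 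One small caveat: your closing comment that the cube is ``deliberately loose'' describes your proof, not the paper's --- in the paper's argument the exponent $3$ arises necessarily from its case analysis.
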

\begin{proof}
The proof is similar to the proof of Lemma~\ref{lem: good case competitive ratio - product}, except that the bound is now a bit
trickier. For $c' \in \left({1 \over \ln n}, {\ln n  \over 6}\right)$, let ${\cal B}_{c'}$ be the event that
\begin{itemize}
\item all the triangles maintained by the algorithm at depth $\lceil \log_2 n \rceil$ of its recursion tree have area at most $S_1 \cdot {e \cdot c' \cdot \ln n \over n}.$
\item for every node $i$ inside the first $\lceil \log_2  n \rceil -1$ levels of the recursion tree
$$ S(T'_i) \le S_1{c' \cdot \ln n \over n},$$
where $S(T'_i)$ is defined as in the statement of Lemma~\ref{lem:area recursion product distn}.
\end{itemize}
Using arguments similar to those in the proof of Lemma~\ref{lem:area recursion product distn} and the union bound, we obtain
that
$$\Pr[{\cal B}_{c'}] \ge 1-{2 \over n^{{{0.5 c' (1-\gamma)^2}}-1}}.$$
Now let  $d^{\ast} \eqdef \lceil \log_2 n \rceil$ and 
${\cal T}_{d^{\ast}}$ be the recursion tree of the algorithm pruned at level $d^{\ast}$. We also define the set ${\cal
L}'_{d^{\ast}}$ as in the proof of Lemma~\ref{lem: competitive ratio in the good event}, (but with $d^*$ replaced by $\lceil \log_2 n
\rceil$). As in that proof, any $\eps$-CP set needs to use at least $|{\cal L}'_{d^{\ast}}|$ points. Moreover, 
the total number of nodes inside ${\cal T}_{d^{\ast}}$ is at most $2 d^{\ast} |{\cal L'}_{d^{\ast}} |.$
Whenever the algorithm processes a triangle, a planar region of area at most $S_1 \cdot { c' \cdot \ln n \over n}$ is removed
from $T_1$ (the root triangle). Therefore, after finishing the processing of the first $\lceil \log_2 n \rceil -1$ levels of the
tree, a total area of at most
$$2 \lceil \log_2 n \rceil S_1 \cdot {c' \cdot \ln n \over n} |{\cal L'}_{d^{\ast}}|$$
is removed from $T_1$. 

We distinguish two cases. If $|{\cal L'}_{d^{\ast}}| \ge {n \over \lceil \ln n\rceil^3}$, then the size of the
optimum is at least ${n \over \lceil \ln n\rceil^3}$ points. Since there is a total of $n$ points (and the algorithm never
performs more than $2 n$ $\comb$ calls), it follows that in this case the performance ratio is $O(\log^3 n)$. 

On the other hand, if $|{\cal L'}_{d^{\ast}}| \le {n \over \lceil \ln n\rceil^3}$, then the total area that has been removed from $T_1$ is at
most $2 S_1 \cdot {c' \over \ln n}.$ Hence, the remaining area is at least $S_1/2$, assuming $c' \le \ln n /6$. Given this
bound it follows that the expected number of points inside a triangle at level $\lceil \log_2 n\rceil$ of the recursion tree is at
most
$${2 \cdot e \cdot c' \cdot \ln n \over (1-\gamma)^2}.$$
Using the aforementioned and noting that the performance ratio ``paid'' within the first $\lceil \log_2 n\rceil-1$ levels of
the recursion tree is at most $O_{c',\gamma}(\log n)$, we can conclude the proof via arguments parallel to those in the proof of
Lemma~\ref{lem: competitive ratio in the good event}.
\end{proof}
Now let us choose $c ={8 \over (1-\gamma)^2 }$ and $c'={4 \over (1-\gamma)^2}$. From Lemmas~\ref{lem: good case competitive
ratio - product} and~\ref{lem: pessimistic competitive ratio - product} we have that
$$\Pr[{\cal A}_c] \ge 1- {2 \over (\ln n)^3}~~\text{and}~~\Pr[{\cal B}_{c'}] \ge 1- {2 \over n}.$$
Given this, we can conclude the proof Theorem~\ref{th: competitive ratio product delta-balanced distribution}. The argument is
the same as the end of the proof of Theorem~\ref{th: competitive ratio PPP}, except that we can now trivially bound the
performance ratio of the algorithm by $2n$ in the event ${\overline{{\cal A} \cup {\cal C}}}$. 

\subsection{Lower Bounds} \label{ssec:average-lower}

In this section we show that our upper bounds on the expected performance of the algorithm 
are tight up to constant factors. In particular, for the case of the Poisson point process we prove:

\begin{theorem}\label{thm: lower bound PPP}
Let $T_1$ be the triangle at the root of the Chord algorithm's recursion tree, and suppose that points are inserted into $T_1$
according to a Poisson Point Process with intensity $\nu$. There exists an infinite family of instances (parameterized by $\eps$ and $m$)
on which the expected performance ratio of the Chord algorithm is $\Omega(\log \log (\nu S(T_1)))$. 
In particular, we can select the parameters so that $S(T_1) = \poly(2^m/\eps)$, which
yields a lower bound of $\Omega (\log m + \log \log 1/\eps)$.
\end{theorem}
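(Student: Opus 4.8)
The plan is to construct, for each pair $(\eps,m)$, a single highly skewed support region $T_1$ together with an intensity $\nu$, and to show that on the resulting PPP instance the Chord algorithm is forced, with constant probability, to descend a root-to-leaf path of length $\Omega(\log\log(\nu S(T_1)))$ in its recursion tree, while \emph{every} realization admits an $\eps$-CP set of size $O(1)$. Since the algorithm makes one $\comb$ call per tree node and $\opt_\eps=O(1)$, this gives $\E[\chd_\eps/\opt_\eps]=\Omega(\log\log(\nu S(T_1)))$, matching the upper bound of Theorem~\ref{th: competitive ratio PPP}. I would reuse the skewed geometry of the worst-case instance $\I_G$ from Section~\ref{ssec:worst-lower-chord}: take $a=(1,1+H)$, $b=(1+L,1)$, $c=(1,1)$ with $H=2^m$ and $L=\Theta(\eps)$, so that the region is an $\eps$-thin, steep sliver. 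Because $L$ is small and the relevant slopes are large, Lemma~\ref{lem:hd-vs-rd} guarantees that throughout $T_1$ the ratio distance and the horizontal distance agree up to an additive $L^2+L/\lambda=o(\eps)$ term; this lets me carry out all the geometry in the cleaner horizontal-distance metric and transfer the conclusions to the ratio distance at the end.

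The core of the construction is a nested family of ``trap'' triangles $T_1=R_0\supseteq R_1\supseteq\cdots\supseteq R_D$, with $D=\log\log(\nu S_1)-O(1)$ (here $S_1\eqdef S(T_1)$), whose areas obey the squaring recursion that drives Lemma~\ref{lem:area recursion}, namely $\nu S(R_i)\approx(\nu S_1)^{2^{-i}}$, shrinking to scale $\Theta(1/\nu)$ near level $D$. Each $R_i$ is placed against the lower edge $bc$ in the manner of the points $q_i$ in $\I_G$, so that \emph{whenever $R_i$ contains a realized point}, the extreme point returned by $\comb(\lambda_{a_ib_i})$ lands inside $R_i$, the current triangle splits with its ``active'' child essentially equal to $R_{i+1}$, and the residual ratio-distance error of the approximation stays $>\eps$ (verified via the hd-vs-rd transfer). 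Thus a nonempty $R_i$ \emph{forces} a $\comb$ call at level $i$ and a recursion into $R_{i+1}$. I would fix the overall scale and $\nu$ so that $\nu S_1=\poly(2^m/\eps)$, whence $\log\log(\nu S_1)=\Theta(\log m+\log\log(1/\eps))$.

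For the count, Definition~\ref{def:PPP} makes the point counts in the disjoint annuli $R_j\setminus R_{j+1}$ independent Poissons, and $R_j$ is empty with probability $\exp(-\nu S(R_j))\approx\exp(-(\nu S_1)^{2^{-j}})$. The depth reached equals the first index at which a trap is empty, so $\Pr[\text{reach level }i]\ge\prod_{j<i}\big(1-e^{-(\nu S_1)^{2^{-j}}}\big)$. The exponents $(\nu S_1)^{2^{-j}}$ grow doubly-exponentially as $j$ decreases from $D$, so $\sum_{j\le D}e^{-(\nu S_1)^{2^{-j}}}=O(1)$; choosing the constant offset in $D$ so that this sum is $<1/2$ makes every such product at least $1/2$. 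Summing, the expected depth, hence the expected number of $\comb$ calls, is $\sum_{i\le D}\Pr[\text{reach level }i]\ge D/2=\Omega(\log\log(\nu S_1))$.

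Finally, for the denominator I would show $\opt_\eps=O(1)$ essentially deterministically: since all realized points lie in the $\eps$-thin sliver, the endpoints together with a single point in the deepest nonempty trap $\eps$-cover the entire lower envelope (exactly as $\{a,q^\ast,b\}$ does in the instance of Theorem~\ref{thm:rd-lb-ws}), again checked through Lemma~\ref{lem:hd-vs-rd}; if all traps are empty the curve is trivial. Combining $\E[\chd_\eps]=\Omega(\log\log(\nu S_1))$ with $\opt_\eps=O(1)$ yields the theorem, and the parameter choice turns $\log\log(\nu S_1)$ into $\log m+\log\log(1/\eps)$. The \textbf{main obstacle} is the simultaneous control in the construction: the traps must be shaped so that a \emph{random} extreme point in $R_i$ reliably (a) keeps the active child's area at the squaring rate rather than collapsing it faster, and (b) leaves residual error above $\eps$, \emph{while} the same thinness forces $\opt_\eps=O(1)$. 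These requirements pull in opposite directions, and reconciling them is precisely what the near-coincidence of the horizontal and ratio distances in a thin, steep sliver (Lemma~\ref{lem:hd-vs-rd}) makes possible.
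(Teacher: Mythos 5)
Your high-level plan (thin sliver with $H \gg L$, transfer between horizontal and ratio distance via Lemma~\ref{lem:hd-vs-rd}, $\opt_\eps=O(1)$, and a recursion path of length $\Omega(\log\log(\nu S_1))$ surviving with constant probability) matches the paper's, but the mechanism you propose for forcing the path to survive has a genuine gap, and it is the heart of the matter. Your count rests on the claim that if the pre-placed trap $R_i$ is non-empty then the algorithm recurses at level $i$ into (essentially) $R_{i+1}$, so that $\Pr[\text{depth}\ge i]\ge\prod_{j<i}\bigl(1-e^{-\nu S(R_j)}\bigr)$. This is false, for two reasons. First, the active triangle and chord slope at level $i$ are random -- they depend on where the previous extreme points landed -- so a deterministic nested family $R_0\supseteq R_1\supseteq\cdots$ cannot coincide with the algorithm's actual subproblems, and ``the active child is essentially equal to $R_{i+1}$'' has no justification. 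Second, and more fundamentally, non-emptiness of traps is the wrong kind of event: $\comb(\lambda_{a_ib_i})$ returns the \emph{deepest} realized point with respect to the current chord direction, so when many nested traps are occupied the returned point tends to be very deep, and excessive depth is exactly what kills the path. Concretely, in the shifted coordinates of the paper's proof ($a_1=(0,1)$, $b_1=(2\eps,0)$, $c_1=(0,0)$), if the extreme point $q_1=(x_1,y_1)$ at the root happens to satisfy $y_1\le\eps$ -- an event entirely compatible with every $R_j$ being non-empty, since each sub-level set $\{y+\lambda x\le t\}$ reaches down to the bottom edge -- then the right child has vertical extent $y_1\le\eps$ and is $\eps$-covered by its own chord, the left child has horizontal extent $x_1<\eps$ and is likewise covered, and the algorithm stops after $O(1)$ calls. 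Hence your product of non-emptiness probabilities does not lower-bound the survival probability.

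What is actually needed -- and what the paper's Lemma~\ref{lem: long path} supplies -- is \emph{two-sided} probabilistic control at every level: (i) the returned point must not be too deep, which is an \emph{emptiness} statement (regions of area $\eps^2/\sqrt{\ln(1/\eps)}$ below the current chord are empty with high probability), and (ii) the returned point must not sit too low on its level segment, which the paper obtains from the observation that, conditionally on the history, the extreme point is \emph{uniformly distributed} on the segment $a_i'b_i'$. Combined with the non-emptiness statement (regions of area $\eps^2\ln(1/\eps)$ are occupied with high probability -- the only ingredient your proposal uses), these yield the recursive coordinate bounds $y_i\ge c_i'\,\eps^{1-2^{-i+1}}/\ln(1/\eps)$ and $x_i'\le c_i''\,\eps^{1+2^{-i+1}}\ln(1/\eps)$, which is the precise sense in which the ``squaring rate'' holds while the residual error stays above $\eps$ for $i\le j^{\ast}=\Theta(\log\log(1/\eps))$. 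Your closing remark that Lemma~\ref{lem:hd-vs-rd} is what reconciles your requirements (a) and (b) is a hand-wave: that lemma is a deterministic comparison of metrics and says nothing about where the random extreme point lands; the reconciliation is probabilistic (emptiness plus conditional uniformity), and it is exactly the ingredient missing from your proposal.
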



\begin{proof}
The lower bound construction is reminiscent to the worst-case instance of Section~\ref{ssec:worst-lower}.
In particular, the initial triangle $T_1 = \triangle (a_1b_1c_1)$ (at the root of the recursion tree) will be right and $(a_1c_1) >> (b_1c_1)$. 
To avoid clutter in the expressions, we present the proof for the case $m=1$. The generalization for all values of $m$ is straightforward. 

We fix $a_1 = (1, 2)$, $b_1 = (1+2\eps, 1)$ and $c_1 = (1,1)$ and select the intensity of the Poisson process to be $\nu \eqdef 1/\eps^2$. 
Note that for this setting of the parameters we have that $\nu S(T_1) =1/\eps$, we thus obtain an $\Omega (\log \log (1/\eps))$ lower bound.

Given the endpoints $a_1, b_1$, it is clear that $\opt_{\eps} \leq 3$ with probability $1$. Hence, it suffices to show that 
the Chord algorithm makes $\Omega(\log \log (1/ \eps))$ $\comb$ calls in expectation before it terminates.
To show this, we are in turn going to prove that for $\eps$ being a sufficiently small positive constant, with constant probability, 
there exists a path ${\cal P}$ of length $\Omega(\log \log (1/\eps))$ in the recursion tree of the algorithm. 

As shown in Lemma~\ref{lem:hd-vs-rd},
for such instances the ratio distance is very well approximated by the horizontal distance. In particular, consider the triangle $T_i = \triangle(a_ib_ic_i)$ 
(see Figure~\ref{fig:chord-avg-lb}), where $(b_ic_i) \leq 2\eps$. For any point $s \in T_i$, we have that 
$$\rd(s, a_ib_i) \leq \hd(s, a_ib_i) \leq \rd(s, a_ib_i) + 4\eps^2 + 2\eps / \lambda_{a_ib_i}.$$
Hence, as long as $\lambda_{a_ib_i}$ (the slope of $a_ib_i$) is sufficiently large, we can essentially use the horizontal distance metric
as a proxy for the ratio distance. Indeed, this will be the case for our lower bound instance below.

We now proceed to formally describe the construction.
The path ${\cal P}$ of length $\Omega(\log \log (1/\eps))$ will be defined by the triangles with $b_i=b_1$, 
i.e., the ones corresponding to the right subproblem in every subdivision performed by the algorithm. 
For notational convenience, we shift the coordinate system to the point $c_1$, so that $c_1=(0,0)$, $a_1=(0,1)$ and
$b_1=(2\epsilon,0)$. (Note that the horizontal distance is invariant under translation.) We label the
triangles in the path $\cal P$ by $T_1$, $T_2$, $\ldots$, and we let the vertices of triangle $T_i$ be $a_i=(x_i, y_i)$,
$c_i=(x_i',0)$, and $b_i=b_1$. Suppose that when the Chord algorithm processes the triangle $T_i$, the $\comb$ routine returns
the point $q_i$ on a line $a_i'b_i'$ parallel to $a_ib_i$ (as in Figure~\ref{fig:chord-avg-lb}). Note that $b_i' = c_{i+1}$ and $q_i =a_{i+1}$. 
Let  $a_i' = ({\cal X}_i, {\cal Y}_i) $.
\begin{figure}[h!]
\begin{flushleft}
\epsfig{file=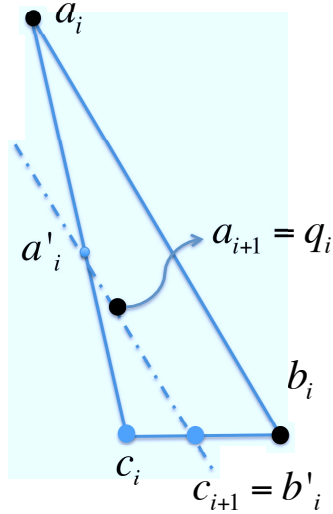,angle=90, width=15cm}
\end{flushleft}
\vspace{-3cm}
\caption{Illustration of the average case lower bound construction.} \label{fig:chord-avg-lb}
\end{figure}

\noindent Let $j^{\ast} \eqdef (1/2) \lceil \log \log (1/\eps) \rceil$. The theorem follows easily from the next lemma:

\begin{lemma} \label{lem: long path}
Let $\eps$ be a sufficiently small positive constant. With probability at least $1-4 j^{\ast}/(\ln(1/\eps))^{1/4}$,
for all $i \in [j^{\ast}]$,
\begin{equation} \label{eqn:rec-bound}
y_i \ge c_i' \cdot {\epsilon}^{1-2^{-i+1}} / {\ln (1/\epsilon)}; \quad \quad \textrm{and} \quad \quad x_i' \le c_{i}'' \cdot \epsilon^{1+2^{-i+1}} \cdot \ln (1/\epsilon),
\end{equation}
where $c_i' = 2^{1 - 2^{-i}}$ and $c_i''= \sqrt{2} \cdot \littlesum_{j=1}^{i} 2^{2^{-j}}$.
\end{lemma}

\begin{proof}
It is clear that (\ref{eqn:rec-bound}) is satisfied for $i=1$.
We are going to show by induction that, if (\ref{eqn:rec-bound}) holds for some $i$, it also holds for $i+1$ with
probability at least $1-4 / (\ln(1/\eps))^{1/4}$. The theorem then follows by a union bound over all $i \in [j^{\ast}]$.

By the similarity of the triangles $T_i = \triangle (a_ib_ic_i)$ and $T'_i = \triangle (a'_ib'_ic_i)$  we have
\begin{align}
{y_i \over {\cal Y}_{i}} = {2 \epsilon-x_i' \over x_{i+1}'-x_i'}. \label{eq: similar}
\end{align}
From the properties of the Poisson Point Process we have the following: conditioning on the information available to the
algorithm when it processes the triangle $T_i$, if a measurable region inside $T_i$ has area $S^{\ast} = \epsilon^2 \ln (1/\epsilon)$,
then the number of points inside this region follows a Poisson distribution with parameter $\nu \cdot S^{\ast} = \ln (1/\epsilon)$. 
Hence, with probability at least $1- \epsilon$, any such region contains at least one point. Hence, with probability at least $1-\eps$ we have
that $S(T'_i) \leq S^{\ast}$. Note that $S(T'_i) = (1/2)(x'_{i+1} - x'_i) {\cal Y}_i$.
Using~\eqref{eq: similar} and the induction hypothesis, this implies that, with probability at least $1-\eps$,
$$x_{i+1}' - x_i' \le {2 \over \sqrt{c_i'}} \cdot \ln (1/\eps) \cdot \epsilon^{1+2^{-i}};$$
hence
$$x_{i+1}' \le \left( {2 \over \sqrt{c_i'}} + c_i''\right) \cdot \ln(1/\eps) \cdot \epsilon^{1+2^{-i}}$$
as desired.

On the other hand, if the area of a region is no more than $S^{\ast\ast} = \epsilon^2 / \sqrt{\ln (1/\epsilon)}$ the probability that a point is
contained in that region is at most $\nu \cdot S^{\ast\ast} = 1/\sqrt{\ln {(1/\epsilon)}}$. Similarly, this implies that, with probability at least 
$1-1/ \sqrt{\ln{(1/\epsilon)}}$,
$${\cal Y}_{i}\ge {\sqrt{c_i'} \cdot \epsilon^{1- 2^{-i}} \over (\ln (1/\eps))^{3/4}}.$$
By the properties of the Poisson Point Process it follows that the point $q_i$ is uniformly distributed on the segment
$a_i'b_i'$. Hence, with probability at least $1- {\sqrt{2} \over (\ln (1/\eps))^{1/4}}$, it holds $$y_{i+1} \ge {\sqrt{2
c_i'} \cdot \epsilon^{1- 2^{-i}} \over \ln (1/\eps)}.$$ A union bound concludes the proof.
\end{proof}

We now show how the theorem follows from the above lemma.
First note that, for all $i$ it holds
$\sqrt{2} \cdot \littlesum_{j=1}^{i} 2^{2^{-j}} \le 2 \sqrt{2} \cdot i.$
Hence, by Lemma~\ref{lem: long path} and the choice of $j^{\ast}$, it is easy to check that with probability at least $1-4 j^{\ast}/
(\ln (1/\eps))^{1/4}$, we have $x'_{j^{\ast}}  = o(\eps)$ and $y_{j^{\ast}}  = \omega(\eps).$ The latter condition implies that the horizontal
distance is a very good approximation to the ratio distance. The latter, combined with the first condition, implies that the node (corresponding to the triangle)
$T_{j^{\ast}}$ is not a leaf of the recursion tree. That is, all the triangles $T_1, \ldots, T_{j^{\ast}}$ survive in the
recursion tree, since the Chord algorithm does not have a certificate that the points already computed are enough to
form an $\eps$-CP set. This concludes the proof of the theorem.
\end{proof}

An analogous result can be shown similarly for the case of $n$ points drawn from a balanced distribution.


\section{Conclusions and Open Problems} \label{sec:concl}
We studied the Chord algorithm, a simple popular greedy algorithm that is used (under different names) for the approximation of
convex curves in various areas. We analyzed the performance ratio of the algorithm, i.e., the ratio of the cost of the algorithm
over the minimum possible cost required to achieve a desired accuracy for an instance, with respect to the Hausdorff and the
ratio distance. We showed sharp upper and lower bounds, both in a worst case and in an average setting. In the worst case the
Chord algorithm is roughly at most a logarithmic factor away from optimal, while in the average case it is at most a doubly
logarithmic factor away.

We showed also that no algorithm can achieve a constant ratio in the worst-case, in particular,
at least a doubly logarithmic factor is unavoidable. We leave as an interesting open problem
to determine if there is an algorithm with a better performance than the Chord algorithm (both in the worst-case and in average case settings),
and to determine what is the best ratio that can be achieved.
Another interesting direction of further research is to analyze the performance of the
Chord algorithm in three and  higher dimensions, and to characterize what is
the best performance ratios that can be achieved by any algorithm.


\end{document}